\definecolor{myyellow}{RGB}{242,226,149}
\definecolor{grey}{RGB}{150,150,150}
\definecolor{myblue}{RGB}{200,220,230}
\pgfplotsset{compat=newest}
\theoremstyle{plain}
\newtheorem{theorem}{Theorem}[section]
\newtheorem{lemma}[theorem]{Lemma}
\newtheorem{corollary}[theorem]{Corollary}
\newtheorem{proposition}[theorem]{Proposition}
\theoremstyle{definition}
\newtheorem{definition}[theorem]{Definition}
\newtheorem{remark}[theorem]{Remark}
\newcommand*{\cA}{\mathcal{A}}
\newcommand*{\cB}{\mathcal{B}}
\newcommand*{\cG}{\mathcal{G}}
\newcommand*{\cH}{\mathcal{H}}
\newcommand*{\cL}{\mathcal{L}}
\newcommand*{\cT}{\mathcal{T}}
\newcommand*{\CC}{\mathbb{C}}
\newcommand*{\eps}{\varepsilon}
\DeclareMathOperator{\tr}{tr}
\newcommand{\1}{\mathbf{1}}
\newcommand*{\id}{\1}
\newcommand*{\<}{\left\langle}
\renewcommand*{\>}{\right\rangle}
\newcommand{\dv}[1]{\frac{\mathrm{d}}{\mathrm{d}#1}}
\newcommand{\ddv}[2]{\frac{\mathrm{d}^{#1}}{\mathrm{d}#2^{#1}}}
\definecolor{mycolor1}{rgb}{0.00000,0.44700,0.74100}%
\definecolor{mycolor2}{rgb}{0.85000,0.32500,0.09800}%
\definecolor{mycolor3}{rgb}{0.92900,0.69400,0.12500}%
\definecolor{mycolor4}{rgb}{0.49400,0.18400,0.55600}%
\renewcommand{\H}{\mathcal{E}}
\renewcommand{\r}{r}
\newcommand{\rhot}{\tilde{\rho}}
\newcommand{\Pone}{BHN}
\author{Hamza Fawzi}
\affiliation{Department of Applied Mathematics and Theoretical Physics, University of Cambridge, United Kingdom}
\author{Omar Fawzi}
\affiliation{Univ Lyon, Inria, ENS Lyon, UCBL, LIP, Lyon, France}
\author{Samuel O. Scalet}
\affiliation{Department of Applied Mathematics and Theoretical Physics, University of Cambridge, United Kingdom}
\title{A subpolynomial-time algorithm for the free energy of one-dimensional quantum systems in the thermodynamic limit}
\begin{document}
\maketitle
\begin{abstract}\sloppy
We introduce a classical algorithm to approximate the free energy of local, translation-invariant, one-dimensional quantum systems in the thermodynamic limit of infinite chain size.
While the ground state problem (i.e., the free energy at temperature $T = 0$) for these systems is expected to be computationally hard even for quantum computers, our algorithm runs for any fixed temperature $T > 0$ in subpolynomial time, i.e., in time $O((\frac{1}{\varepsilon})^{c})$ for any constant $c > 0$ where $\varepsilon$ is the additive approximation error. 
Previously, the best known algorithm had a runtime that is polynomial in $\frac{1}{\varepsilon}$.
Our algorithm is also particularly simple as it reduces to the computation of the spectral radius of a linear map.
This linear map has an interpretation as a noncommutative transfer matrix and has been studied previously to prove results on the analyticity of the free energy and the decay of correlations.
We also show that the corresponding eigenvector of this map gives an approximation of the marginal of the Gibbs state and thereby allows for the computation of various thermodynamic properties of the quantum system.
\end{abstract}

\section{Introduction}

Multipartite quantum systems are described by a Hilbert space, which is a tensor product of the single-particle $d$-dimensional spaces. The behaviour of a quantum-many body system is described by a Hamiltonian which models the interaction between the different particles. Of particular interest are $k$-local  Hamiltonians that can be written as a sum of terms acting nontrivially on at most $k$ particles, with $k$ being a constant.  At thermal equilibrium, the system is described by the Gibbs state
\begin{equation}
\label{eq:gibbsintro}
\rho = e^{-\beta H} / Z_{\beta}(H)
\end{equation}
where $\beta = 1/T$ is the inverse temperature, and $Z_{\beta}(H) = \tr\left[e^{-\beta H}\right]$ is the partition function. The free energy of the system at inverse temperature $\beta$ is defined as
\begin{equation}
\label{eq:minFbeta}
F_{\beta}(H) = -\frac{1}{\beta} \log Z_{\beta}(H).
\end{equation}
At zero temperature, i.e., $\beta = +\infty$, $F_{\beta}(H)$ becomes $\lambda_{\min}(H)$, the ground energy of $H$. The problem of computing the ground energy for a given local Hamiltonian is known to be $\mathsf{QMA}$-complete~\cite{kempe2006complexity}, and is the central problem in the area of Hamiltonian complexity~\cite{gharibian2015quantum}. This problem remains $\mathsf{QMA}$-complete even if we restrict ourselves to $2$-local Hamiltonians that are translation-invariant on a chain~\cite{gottesman2009,bausch2017}, i.e., $H = \sum_{i} h_{i,i+1}$ where the operators $h_{i,i+1}$ are given by some Hermitian operator $h$ (the same one for every $i$) acting on particles $i$ and $i+1$.\footnote{Technically, because of the choice of specification of input in~\cite{gottesman2009}, the problem is complete for a scaled version of $\mathsf{QMA}$ called $\mathsf{QMA}_{\mathsf{EXP}}$.}

In order to understand the physical properties of the system at nonzero temperature, it is crucial to understand not only the ground energy, but also the free energy function $F_{\beta}(H)$ as a function of $\beta > 0$ \cite{alhambra2022}. Indeed, computing $F_{\beta}(H)$ and its derivatives with respect to $\beta$ and parameters of the Hamiltonian determines phase transitions and gives access to fundamental physical properties of the system in thermal equilibrium such as the internal energy, specific heat, or magnetic
susceptibility \cite{sandvik2010computational}.

\paragraph{Main result} In this paper, we focus on $2$-local translation-invariant quantum systems on an infinite chain.
As the free energy scales with the system size, in the thermodynamic limit of infinite systems we consider the free energy per particle $f_{\beta}(h)$.
Note that $f_{\beta}(h)$ only depends on the finite matrix $h$ of size $d^2 \times d^2$. Our objective is to design an algorithm to approximate $f_{\beta}(h)$ with a good scaling in terms of the target error $\eps$ and the local dimension $d$. As argued in recent works on Hamiltonian complexity in the thermodynamic limit~\cite{watson2021, aharonov2021}, understanding the dependence of the complexity in terms of the desired precision for infinite systems is often closer to capturing the fundamental problems in many-body physics than understanding the dependence in the system size.
Our main result is an algorithm that given as input $h$ and a target error $\varepsilon$ outputs an approximation of $f_{\beta}(h)$ and of the $k$-particle marginals of the Gibbs state.
\begin{theorem}
\label{thm:mainres}
There is a deterministic algorithm that takes as input a Hermitian operator $h$ acting on $\mathbb{C}^d \otimes \mathbb{C}^d$ satisfying $\| h \| \leq 1$ and $\varepsilon \in (0,1/e)$ and outputs an approximation $\tilde{f}_{\beta}$ satisfying $|\tilde{f}_{\beta} - f_{\beta}(h)| \leq \varepsilon$, where $f_{\beta}(h)$ is the free energy per particle of the infinite translation-invariant Hamiltonian on a chain defined by $h$. For any fixed $\beta > 0$, the running time of the algorithm is $\mathrm{exp} \left( O \left( \log d \frac{\log(1/\eps)}{\log \log(1/\varepsilon)}\right) \right)$.
Moreover, this algorithm can also compute an $\eps$-approximation of the marginal of the Gibbs state on an interval of size $k$ with the same running time for any fixed $k$.
\end{theorem}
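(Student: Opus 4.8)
The plan is to reduce the computation of $f_\beta(h)$ to the computation of the spectral radius of a finite-dimensional linear map --- the truncation of a \emph{noncommutative transfer operator} to a window of $\ell$ sites --- and to choose $\ell$ just large enough that the truncation error falls below $\eps$. One first sets up the transfer operator $\mathcal{T}$ associated with $h$, a completely positive linear map acting on operators supported on a half-infinite chain, which is exactly the object studied in prior work on analyticity of $f_\beta$ and decay of correlations in 1D. Two structural facts are needed: (i) the free-energy density is recovered as $f_\beta(h) = -\tfrac1\beta\log r(\mathcal{T})$, where $r(\cdot)$ denotes the spectral radius --- in the transfer-matrix framework this follows by expressing the $n$-site partition function $Z_n$ in terms of powers of $\mathcal{T}$ up to boundary corrections, so that $\tfrac1n\log Z_n\to\log r(\mathcal{T})$; and (ii) by Perron--Frobenius for positive maps, $r(\mathcal{T})$ is a simple eigenvalue whose (unique, positive-semidefinite) eigenvector, suitably normalized, reproduces the marginals of the Gibbs state. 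Both facts would be recalled or re-derived at the outset.

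The core of the argument is a truncation estimate. Let $\mathcal{T}_\ell$ be the finite-dimensional ``append one site'' map on $\mathcal{B}\bigl((\CC^d)^{\otimes\ell}\bigr)$, a $d^{2\ell}\times d^{2\ell}$ matrix whose entries are products of $O(\ell)$ local factors built from $e^{-\beta h_{i,i+1}}$ and are therefore computable from $h$ by matrix exponentials. I would prove a bound
\[
\bigl|r(\mathcal{T}_\ell)-r(\mathcal{T})\bigr|\;\le\;\delta(\ell),\qquad \delta(\ell)=O\!\bigl(C(\beta)^{\,\ell}/\ell!\bigr),
\]
that is, \emph{faster-than-exponential} decay (essentially $e^{-\Omega(\ell\log\ell)}$ for fixed $\beta$), and likewise that the Perron eigenvector of $\mathcal{T}_\ell$ is $\delta(\ell)$-close to that of $\mathcal{T}$. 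The mechanism is genuinely one-dimensional: in a Duhamel/Dyson expansion of $e^{-\beta H}$, any contribution that propagates correlations across a window of $\ell$ sites must be of order at least $\ell$ in the interaction, and such a term is suppressed by the simplex-volume factor $\beta^\ell/\ell!$; summing the tail of the series yields $\delta(\ell)$. I expect this to be the main obstacle: obtaining decay beyond the $e^{-\Omega(\ell)}$ that plain exponential decay of correlations would give (which would only yield a running time polynomial in $1/\eps$), and doing so with enough uniformity to control the spectral radius of $\mathcal{T}$ --- not merely the free energy of a finite block --- requires careful quasi-locality estimates together with perturbation theory comparing $\mathcal{T}_\ell$ and $\mathcal{T}$.

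Given $\delta(\ell)=O(C(\beta)^\ell/\ell!)$, I would take $\ell=\Theta\!\bigl(\log(1/\eps)/\log\log(1/\eps)\bigr)$ --- this is where the hypothesis $\eps<1/e$ enters, ensuring $\log\log(1/\eps)>0$ --- which makes $\ell\log\ell\gtrsim\log(2/\eps)$ and hence $\delta(\ell)\le\eps/2$. It then suffices to compute $r(\mathcal{T}_\ell)$ to additive precision $\eps/2$ (up to a $\poly(d,e^\beta)$ factor, since $r(\mathcal{T}_\ell)$ is bounded away from $0$ and $\infty$ by quantities depending only on $\beta$ and $d$, so additive errors on $r$ and on $\log r$ are interchangeable) and to output $\tilde f_\beta=-\tfrac1\beta\log r(\mathcal{T}_\ell)$. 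The matrix $\mathcal{T}_\ell$ has dimension $N=d^{2\ell}$ and its entries are computable to precision $\eps^{O(1)}$ in time $\poly(N,\log(1/\eps))$; its spectral radius is, by the first paragraph, a Perron eigenvalue, and can be computed to precision $\eps$ in time $\poly(N,\log(1/\eps))$ (e.g.\ by power iteration exploiting positivity and the Perron gap, or by isolating the largest root of the characteristic polynomial). Hence the total running time is $\poly(d^{2\ell})\cdot\polylog(1/\eps)=\exp\!\bigl(O(\log d\cdot\log(1/\eps)/\log\log(1/\eps))\bigr)$, as claimed.

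Finally, the $k$-particle marginal is read off from the normalized positive-semidefinite Perron eigenvector $\sigma_\ell$ of $\mathcal{T}_\ell$, produced by the above eigenvalue computation at no extra asymptotic cost: the same truncation and quasi-locality estimates show that the restriction of $\sigma_\ell$ to the $k$ sites at the centre of the window is within $\eps$ in trace distance of the Gibbs-state marginal on an interval of size $k$, provided the window exceeds $k$ by $\Theta\!\bigl(\log(1/\eps)/\log\log(1/\eps)\bigr)$, leaving the running time unchanged.
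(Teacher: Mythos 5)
Your algorithm is exactly the paper's: truncate Araki's noncommutative transfer operator to a window of $\ell$ sites, compute its spectral radius, and take $\ell=\Theta(\log(1/\eps)/\log\log(1/\eps))$. You also correctly isolate the crux, namely the superexponential bound $|r(\mathcal{T}_\ell)-r(\mathcal{T})|\le C(\beta)^{\ell}/\ell!$. The gap is in the mechanism you propose for proving it: a Dyson-expansion quasi-locality estimate ``together with perturbation theory comparing $\mathcal{T}_\ell$ and $\mathcal{T}$.'' These maps are not normal, and eigenvalue perturbation bounds for non-normal operators carry condition-number/dimension factors (the relevant dimension here is $d^{2\ell}$, and $\mathcal{T}$ itself is infinite-dimensional), so a bound $\|\mathcal{T}_\ell-\mathcal{T}\|\le\delta(\ell)$ does not transfer to $|r(\mathcal{T}_\ell)-r(\mathcal{T})|\le\delta(\ell)$ by any off-the-shelf perturbation theorem; the paper explicitly flags this as unusable. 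The paper instead never compares the spectral radii of the two maps directly: it shows that the marginal $\rho_L$ of the infinite Gibbs state is an approximate Perron eigenvector of the \emph{finite} map in the operator order, $(1-\eps(L))e^{-f}\rho_L\le\cL_L^*(\rho_L)\le(1+\eps(L))e^{-f}\rho_L$. Passing from a trace-norm residual to this two-sided operator inequality costs a factor $\|\rho_L^{-1}\|$, which must be (and is, in Lemma~\ref{lem:cond}) controlled by $e^{O(\|h\|L)}$ so as not to destroy the superexponential decay; the Collatz--Wielandt variational characterization of the spectral radius of a positive map then sandwiches $r_L$ between $(1\mp\eps(L))e^{-f}$ with no dimension dependence. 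To make your route rigorous you need an argument of this kind that exploits positivity; a generic perturbation bound will not close the step.

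A second, smaller gap concerns the marginals. Even granting that $\rho_L$ is an approximate eigenvector, this does not by itself imply that the \emph{actual} Perron eigenvector of $\mathcal{T}_\ell$ is close to $\rho_L$; one needs a quantitative spectral-gap or contraction statement. The paper supplies this by proving quantitative primitivity (a power $(\cL_L^*)^{O(L)}$ maps pure states to operators of bounded condition number) and then applying Birkhoff's contraction of the Hilbert projective metric. This extra ingredient is absent from your sketch, and it is also the source of the worse temperature dependence in the marginal computation (the second term in \eqref{eq:complexityevector}).
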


Before describing the algorithm and proof method, we make some remarks and discuss related works.

\paragraph{Remarks}
We note that the temperature dependence of the running time is hidden in the $O(.)$ notation as we are interested in the algorithm for fixed temperature. If we want to make the dependence on the inverse temperature $\beta$ explicit, the running time takes the form $\mathrm{exp} \left( O \left( \log d \frac{\log(1/\eps)}{\log \log(1/\varepsilon)}\right)\exp(\exp(O(\beta))) \right)$, where $O(.)$ only hides universal constants.\footnote{The dependence is worse when computing marginals of the Gibbs state, see Corollary~\ref{cor:main}.} 
We note that, from the $\mathsf{QMA}$-hardness results for the ground energy problem~\cite{gottesman2009}, it follows that we should not expect a polynomial dependence in $\beta$ and $1/\eps$ and as such an exponential scaling in $\beta$ is unavoidable in our algorithm.

To appreciate the algorithm we use to prove Theorem~\ref{thm:mainres}, it is instructive to consider first a naive algorithm for this problem sometimes called exact diagonalization. The idea of this algorithm is to consider the Hamiltonian $H_{[1,n]} = \sum_{i=1}^{n-1} h_{i,i+1}$ restricted to only $n$ particles. Computing the free energy per particle $f_{\beta,n}$ of $H_{[1,n]}$ (for any value of $\beta$) can be done in time polynomial in $d^n$ by explicitly writing the $d^n \times d^n$ matrix $H_{[1,n]}$. The sequence $f_{\beta, n}$ does converge to $f_{\beta}$ as $n \to \infty$, but the convergence is in general slow with an error decaying as $\frac{1}{n}$ due to the missing interaction term at the boundary. As a result, for a desired precision $\eps$, we obtain a runtime which is exponential in $1/\eps$. In order to obtain the subpolynomial dependence on $1/\eps$ in Theorem~\ref{thm:mainres}, we need to develop a more refined algorithm.

\paragraph{Related work}
In recent years, there have been multiple works about computing the free energy (or equivalently the partition function) at finite inverse temperature for a given Hamiltonian. In particular, for local Hamiltonians on an arbitrary bounded-degree graph, algorithms have been developed in \cite{kuwahara2020,harrow2020,mann2021} with performance guarantees when the inverse temperature $\beta$ is below some critical inverse temperature. The runtime of these algorithms is polynomial or quasi-polynomial in the number of particles and in $1/\eps$. 
These works rely on the so called cluster expansion, which, at its core, is a Taylor expansion of the partition function at $\beta = 0$.
Truncating this expansion at a certain order and bounding the remainder terms 
allows for the approximation of the free energy for $\beta$ small enough.
Indeed, the sum of remainder terms no longer converges if $\beta$ is too large which introduces a critical inverse temperature above which such algorithms do not have convergence guarantees.

However, a different method was used in~\cite{kuwahara2018} to obtain an algorithm for all temperatures for one-dimensional finite quantum systems.
This algorithm combines several results from the analysis of 1D-systems: quantum belief propagation~\cite{hastings2007}, together with a locality result about Gibbs states which was only recently proven in general~\cite{bluhm2022exponential}. For a system of $n$ particles in 1D, the running time of the algorithm is $n (\frac{1}{\eps})^{O(1)}$, where $O(1)$ is a constant that depends exponentially on $\beta$. This algorithm can readily be applied to the infinite translation-invariant chain by setting $n = 1/\eps$ and this leads to an algorithm that is polynomial in $1/\eps$. For its implementation, the algorithm involves several choices of length scales to ensure convergence and numerical integrations to obtain the operators from the belief propagation.

Another algorithm, based on tensor networks, for computing thermal expectation values for a system of $n$ particles was designed in \cite{kuwahara2021,alhambra2021} and has a runtime with an improved dependence on $\eps$, namely $n e^{\tilde{O}(\sqrt{\log(n/\eps)})}$, where the $\tilde{O}$ depends linearly on $\beta$. As with the algorithm of~\cite{kuwahara2018}, it can be readily applied to the infinite translation-invariant chain by setting $n=1/\eps$. This leads to an algorithm that is quasi-linear in $1/\eps$ for computing thermal expectation values for 
infinite translation-invariant chains. This algorithm can be used to compute the free energy using the argument in \cite[Lemma 12]{bravyi2021} at the cost of an additional polynomial overhead in $1/\eps$, thus leading to an algorithm for computing the free energy for infinite translation-invariant chains that is polynomial in $1/\eps$. 

In a different regime, classical algorithms have been designed in~\cite{bravyi2021} to compute the free energy of dense Hamiltonians based on convex relaxations. These algorithms have a runtime that is exponential in $1/\eps$.

Besides this line of work on provably convergent algorithms to which our work shall also contribute, there are numerous algorithms that effectively address the problem despite having no convergence results or only in special cases.
For the free energy problem this includes most notably Quantum Monte Carlo methods \cite{troyer2003,suzuki1993}.
These probabilistic algorithms lack rigorous results on their runtime except for few special cases and are known to fail for Hamiltonians that have the so-called sign problem.
Another example of effective algorithms for the ground state energy problem ($\beta = +\infty$) in one dimension are tensor networks and the DMRG algorithm \cite{white1992, schollwock2011}.
The convergence of a related algorithm to the ground state energy has been proven under the additional assumption that the Hamiltonian is gapped \cite{landau2013}.

\paragraph{Proof technique}
Before giving an overview of the algorithm establishing Theorem~\ref{thm:mainres}, it is worth mentioning that the analogous classical problem has a very simple solution. In fact, using the technique of transfer matrices (see e.g. \cite{friedli2017}), for any $\beta$ the free energy per particle can be obtained from the eigenvalue of a simple $d\times d$ matrix and thus the problem reduces to standard numerical algorithms applied to some fixed matrix. This implies very efficient algorithms for any $\beta$ including $\beta=+\infty$.

However, the quantum case is significantly more complicated. This is illustrated for example by the fact that, when $\beta = +\infty$, the problem is $\mathsf{QMA}$-hard, and also that the simple Markov property for classical Gibbs states in one-dimension does not hold in the quantum setting.
In his seminal work, Araki~\cite{araki1969} proposed a quantum analogue of the transfer matrix, but it is a linear map between infinite-dimensional spaces. In our algorithm, we use a finite-dimensional approximation to this map.
Our main technical result is to prove that the spectral radii of these finite-dimensional approximations converge superexponentially fast  to $e^{-\beta f_{\beta}(h)}$. The algorithm (see Algorithm~\ref{algo:main}) is then simply to choose the finite-dimensional approximation parameter $L$ for the transfer matrix as a function of the desired precision $\eps$ and then compute the spectral radius of the corresponding linear map.

\begin{algorithm}[!ht]
\SetKwInput{KwInput}{Input}
\SetKwInput{KwRequire}{Parameters}
\SetKwInput{KwOutput}{Output}
\caption{Algorithm for the computation of the free energy}
\KwRequire{Inverse temperature $\beta$, universal constant $C$}
\KwInput{$d$ local dimension, Hamiltonian term $h\in\mathbb C^{d^2 \times d^2}$ such that $\|h\|\le1$, error $\eps$}
\KwOutput{$\tilde{f}_{\beta}$ approximation to the free energy $f_{\beta}(h)$}
$L \gets \log(1/\eps)\exp(\exp(C(\beta+1)))/\log(\log(1/\eps))$\tcc*[r]{parameter for approximation}
\tcc{matrix representation of a linear map from $\mathbb{C}^{d^{L-1} \times d^{L-1}}$ to itself:}
$\cL_L^*(\cdot) \gets \tr_{L} \left( e^{-\beta H_{[1,L]}/2} e^{\beta H_{[2,L]}/2} (\1 \otimes \cdot) e^{\beta H_{[2,L]}/2} e^{-\beta H_{[1,L]}/2} \right)$\;
$r_{L} \gets$ spectral radius of $\cL_L^*$ \;
$\tilde{f}_{\beta} \gets - (\log r_{L})/\beta$ \;
\caption{Algorithm for computing the free energy per particle. The constant $C$ is a number that can be obtained from our proofs. See Section~\ref{sec:algo} for more precise definitions.}
\label{algo:main}
\end{algorithm}

To analyse the algorithm we make extensive use of Araki's expansionals~\cite{araki1969} to show that the marginal $\rho_L$ on the first $L-1$ sites of the infinite Gibbs state, is an \emph{approximate eigenvector} of the finite-dimensional map $\cL_L^*$, i.e., that $\|\cL_L^*(\rho_L) - e^{-\beta f_{\beta}(h)} \rho_L\|_1$ decays superexponentially fast in $L$. By using variational expressions of the spectral radius of positive maps (so called Collatz-Wielandt formula), this allows us to show that the spectral radius of $\cL_L^*$ is superexponentially close to $e^{-\beta f_{\beta}(h)}$.
We note that standard perturbation bounds for eigenvalues of non-normal operators have a very bad dependence on dimension, and are thus not usable here, see e.g., \cite[Chapter VIII]{bhatia}.
To prove that the corresponding eigenvector of $\cL_L^*$ is close to $\rho_L$, we establish a quantitative \emph{primitivity} condition for $\cL_L^*$, i.e., we prove that a sufficiently high power of $\cL_L^*$ maps nonzero positive semidefinite operators to positive \emph{definite} ones. Using tools from the Perron-Frobenius theory of positive operators ---more precisely the Hilbert projective metric---, this allows us to show that $\rho_L$ is superexponentially close to the eigenvector of $\cL_L^*$ associated to its spectral radius.

Let us also mention that the above techniques based on~\cite{araki1969} are specific to one dimension.
In particular, the results in there are related to the absence of thermal phase transitions, which do occur in higher dimensions so an extension of this approach to higher dimension is not possible. See also \cite{sly2010,sly2012} for hardness results of the classical partition function on bounded degree graphs.

\paragraph{Numerical implementation} We also implement our algorithm and run it on a Hamiltonian for which the free energy function is known exactly.
We observe very small errors (machine precision) already for moderate choices of $L$.
We also observe that for this example the scaling of the error with inverse temperature is better than the worst-case estimates derived theoretically.

\paragraph{Organization} The paper is structured as follows, we first give in Section~\ref{sec:prelim} the mathematical definitions in which the problem is formulated and review some technical results needed in our proof.
We then present in Section~\ref{sec:algo} the statements of our main results including error bounds for the computed free energy and Gibbs state as well as a bound on the runtime of the corresponding algorithm. Sections~\ref{sec:proof-eigenvalue} to \ref{sec:proof-runtime} contain the proofs of our main results.
In Section~\ref{sec:numerics} we present our numerical results.

\section{Preliminaries}
\label{sec:prelim}

We start by introducing some notation for the description of quantum many-body systems and recap results from \cite{araki1969} needed for the proofs in this manuscript.

\subsection{Setup and basic notations}

We consider a quantum 1D chain on the half-line $[1,\infty)$. Let $d\ge2$ be the local dimension, $\cH_i\cong\mathbb C^d$ the local Hilbert space, and $\cH_{[a,b]}=\bigotimes_{i\in[a,b]}\cH_i$ be the Hilbert space associated to sites $[a,b]$, where $b \geq a$ are integers.
We let $\cA_{[a,b]} = \mathcal B(\cH_{[a,b]})$ be the associated algebra of operators acting on $\cH_{[a,b]}$. If $I \subset J$ are two intervals in $\mathbb{Z}$, we will often identify elements of $\cA_I$ as elements of $\cA_J$ by tensoring with the identity on $J\setminus I$.
The adjoint map is the partial trace over the sites in $J\setminus I$, $\tr_{J\setminus I}:\cA_J\to\cA_I$.

Let $h \in \CC^{d^2 \times d^2}$ be the two-body Hamiltonian acting on two sites\footnote{Note that without loss of generality we restrict to 2-local Hamiltonians for simplicity.
A more general $r$-local Hamiltonian can be reduced by blocking $r/2$ sites into a single local Hilbert space of dimension $d^{(r/2)}$ and collecting the Hamiltonian terms.
This recast Hamiltonian can then be used as an input to the algorithm.}, and let $h_{i,i+1} \in \cA_{[i,i+1]}$ the corresponding Hamiltonian when acting on sites $\{i,i+1\}$. If $a \leq b$, we let 
\[
H_{[a,b]} = \sum_{i=a}^{b-1} h_{i,i+1} \in \cA_{[a,b]}.
\]
For an inverse temperature $\beta > 0$ and  any integer $N > 1$, the \emph{partition function} is defined as
\[
Z_{\beta,N}(h) = \tr \exp\left(-\beta H_{[1,N]}\right).
\]
The infinite translation-invariant free energy per particle is defined as:
\begin{equation}\label{eq:f}
f_{\beta}(h) = \lim_{N\to \infty} -\frac{1}{\beta N} \log Z_{\beta,N}(h).
\end{equation}
This limit is known to exist, see e.g., \cite[Theorem 15.5]{ohya2004}. 
By appropriately rescaling $h$, one can assume without loss of generality that $\beta = 1$; indeed, it is immediate to check that $f_{\beta}(h) = (1/\beta) f_{1}(\beta h)$. In the rest of the paper we will thus assume that $\beta = 1$, and omit the subscript $\beta$ from the definition $f_{\beta}(h)$. We will also suppress the dependence on $h$ in the notation, as it will be clear from the context, and just write
\[
Z_N=Z_{1,N}(h) \quad \text{ and } \quad f=f_1(h).
\]

To keep track of the complexity of our algorithm in the inverse temperature, we make the following convenient definitions:

\begin{definition}
\label{def:Ponecst}
We say that a constant $\cG \geq 0$ is \emph{bounded in Hamiltonian norm (\Pone)} if there exist some numerical constants $C_1,C_2>0$ such that
\[
\cG\le \exp(\exp(C_1\exp(C_2\|h\|))).
\]
We say that a function $\varepsilon(L) \geq 0$ is \emph{superexponentially decaying} if there exist some numerical constants $C_1,C_2,C_3$ such that
\[
\varepsilon(L)\le \exp(\exp(C_1\exp(C_2\|h\|))) \frac{\exp(C_3 \|h\|L)}{(L/2)!}.
\]
\end{definition}

It is immediate to check that products of $\Pone$-constants have property $\Pone$ and that the product of a $\Pone$-constant and a superexponentially decaying function is superexponentially decaying.

\subsection{The map \texorpdfstring{$\cL_L^*$}{}}

For an integer $L \geq 2$, define
\begin{equation}
\label{eq:defELm}
\begin{aligned}
E_L &= e^{-H_{[1,L]}/2} e^{H_{[2,L]}/2} \in \cA_{[1,L]}
\end{aligned}
\end{equation}
and let $\cL_L^*:\cA_{[1,L-1]} \to \cA_{[1,L-1]}$ be the positive linear map defined by
\begin{equation}
\label{eq:defLLstar}
\cL_L^*(Q) = \tr_L(E_L (\1 \otimes Q) E_L^{\dagger}).
\end{equation}
In the classical or commuting case, the operators $E_L$ are independent of $L$ and act nontrivially only on particles $1$ and $2$, and the linear map $\cL_L^*$ plays the role of a \emph{transfer matrix}. For the general quantum case, it is a highly nontrivial result of \cite{araki1969} that the operator $E_L$ is close to an operator with support only on the first sites, that it is bounded uniformly in $L$, and that it converges for $L\to\infty$. 
The map $\cL_L^*$ can then be thought of as a finite-dimensional approximation of an infinite-dimensional ``noncommutative transfer matrix'' introduced by Araki \cite{araki1969}.\footnote{By extending the maps $\cL_L^*$ to an infinite-dimensional setting a limit of these maps can be defined. In \cite{araki1969,perezgarcia2020} it has been shown that its eigenvector and eigenvalue are a thermal state and the free energy of the system respectively, regaining the meaning of transfer matrices as in the classical case.}
The key quantity of interest to us will be the \emph{spectral radius} of $\cL^*_L$, which we will denote by $r_L$
\begin{equation}
r_L = \max\{|\lambda| : \lambda \text{ is an eigenvalue of $\cL_L^*$}\}.
\end{equation}

The following lemma collects the results from \cite{araki1969} that we will need for the analysis of our algorithm. Note that, for convenience, we use the formulation of \cite{perezgarcia2020} which mostly considers exponentially decaying Hamiltonians instead of strictly local ones but still recovers the following results when restricting to local interactions. 
The dependency of the constants on $\|h\|$ might not be immediately clear from the formulation in \cite{perezgarcia2020}, so we discuss how to obtain the claimed dependencies in Appendix \ref{sec:constants}.
\begin{lemma}[{{\cite[ Proposition 4.2]{perezgarcia2020}}}]\label{lem:arakiE}
There exists a $\Pone$-constant $\cG$ and a superexponentially decaying function $\varepsilon(L)$ such that for any $L \geq 2$
\begin{enumerate}[label=(\roman*)]
\item $\|E_L\|,\|E_L^{-1}\|\le\cG$
\item for $L\le M$ we have $\|E_L-E_M\|,\|E_L^{-1}-E_M^{-1}\|\le\varepsilon(L)$
\end{enumerate}
From the above it also follows immediately (by renaming constants) that
\begin{enumerate}[label=(\roman*)]
\setcounter{enumi}{2}
\item for $L\le M$ we have $\|E_LE_M^{-1}-\1\|\le\varepsilon(L)$
\end{enumerate} 
\end{lemma}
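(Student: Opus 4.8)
The plan is to treat parts (i) and (ii) as essentially imported from the literature and to obtain (iii) by a one-line manipulation. For (i) and (ii), I would first recall that $E_L = e^{-H_{[1,L]}/2} e^{H_{[2,L]}/2}$ is, up to relabeling, one of Araki's \emph{expansionals}: writing $H_{[1,L]} = h_{1,2} + H_{[2,L]}$, i.e. setting $A = h_{1,2}$ (the single interaction term at the boundary, with $\|A\| \le \|h\| \le 1$) and $B = H_{[2,L]}$, one has $E_L = e^{-(A+B)/2} e^{B/2}$, which admits the norm-convergent Dyson-type expansion
\[
E_L \;=\; \sum_{n \geq 0} (-1)^n \int_{0 \le s_1 \le \cdots \le s_n \le 1/2} e^{-s_1 B}\, A\, e^{(s_1 - s_2)B}\, A \cdots A\, e^{s_n B}\, ds_1 \cdots ds_n ,
\]
and $E_L^{-1} = e^{-H_{[2,L]}/2} e^{H_{[1,L]}/2}$ is an object of exactly the same type with the roles of $H_{[1,L]}$ and $H_{[2,L]}$ interchanged. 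The nontrivial input --- which I would not reprove, but quote from \cite[Proposition 4.2]{perezgarcia2020} (building on \cite{araki1969}) --- is that in one dimension this series, and its analogue for $E_L^{-1}$, converges with a bound uniform in $L$ (despite the individual conjugations $e^{-sB}(\cdot)e^{sB}$ being wildly unbounded, since only the nested commutators with $h_{1,2}$ enter, and the 1D support structure keeps these controlled at all inverse temperatures), and moreover that replacing $H_{[2,L]}$ by $H_{[2,M]}$ for $M \geq L$ perturbs $E_L$ and $E_L^{-1}$ by a tail of a series of the form $\sum_n c^n/n!$, hence by a quantity superexponentially small in $L$; this is the origin of the $(L/2)!$ in Definition~\ref{def:Ponecst}. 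The only point that is not completely off-the-shelf is the dependence of the constants on $\|h\|$, which is not transparent in the exponentially-decaying-interaction formulation of \cite{perezgarcia2020}; I would check in Appendix~\ref{sec:constants} that the resulting constant is a \Pone-constant and the perturbation bound a superexponentially decaying function in the sense of Definition~\ref{def:Ponecst}.

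Given (i) and (ii), part (iii) is immediate: for $L \leq M$ write $E_L E_M^{-1} - \1 = (E_L - E_M) E_M^{-1}$, so that by submultiplicativity of the operator norm and (i)--(ii),
\[
\| E_L E_M^{-1} - \1 \| \;\leq\; \| E_L - E_M \|\, \| E_M^{-1} \| \;\leq\; \cG\, \varepsilon(L).
\]
Since the product of a \Pone-constant and a superexponentially decaying function is again superexponentially decaying (as noted after Definition~\ref{def:Ponecst}), the right-hand side has the required form once we replace $\varepsilon(L)$ by $\cG\,\varepsilon(L)$. (If one wished to deduce the inverse bounds in (i) and (ii) rather than quote them, the same trick works: $\|E_L^{-1}\|$ is bounded because $E_L^{-1}$ is an expansional of the same type, and $E_L^{-1} - E_M^{-1} = E_L^{-1}(E_M - E_L)E_M^{-1}$ gives $\|E_L^{-1}-E_M^{-1}\| \leq \cG^2 \varepsilon(L)$.)

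Thus the only genuine obstacle is entirely contained in the cited statement, namely proving that the expansionals stay bounded and converge superexponentially for \emph{all} $\beta$, not merely below a critical inverse temperature --- the hallmark one-dimensional phenomenon established by Araki, which I do not attempt to reprove. The modest extra work on our side, and the only place where care is actually needed, is tracking how the constants scale with $\|h\|$ so that they feed correctly into the temperature dependence asserted in Theorem~\ref{thm:mainres}; I defer this bookkeeping to Appendix~\ref{sec:constants}.
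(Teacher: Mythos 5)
Your proposal matches the paper's treatment: parts (i) and (ii) are imported verbatim from \cite[Proposition 4.2]{perezgarcia2020} with the $\|h\|$-dependence of the constants checked in Appendix~\ref{sec:constants}, and part (iii) is exactly the one-line consequence $E_LE_M^{-1}-\1=(E_L-E_M)E_M^{-1}$ combined with submultiplicativity and the closure of superexponentially decaying functions under multiplication by \Pone-constants. This is precisely what the paper means by ``follows immediately (by renaming constants)'', so the proposal is correct and takes the same route.
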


\subsection{Gibbs states} 

We also introduce marginals of thermal states on finite systems and in the thermodynamic limit.
For $m > L \geq 2$, these are defined as
\begin{equation}
\label{eq:defrhoLm}
\begin{aligned}
\bar\rho_{L,m} &= \tr_{[L,m]} e^{-H_{[1,m]}} \in \cA_{[1,L-1]}\\
\rho_{L,m} &= \bar\rho_{L,m} / \tr \bar\rho_{L,m}.
\end{aligned}
\end{equation}
In \cite{araki1969} (see also \cite[Lemma 4.15]{perezgarcia2020}) it was shown that the limit $\lim_{m\to\infty} \rho_{L,m}$ exists, which we denote by $\rho_L$:
\begin{equation}
\label{eq:rhoL}
\rho_L = \lim_{m\to \infty} \rho_{L,m}.
\end{equation}

\section{Description of the algorithm and overview of the analysis}
\label{sec:algo}

In this section, we present our main convergence results and the resulting runtime of the proposed algorithm.

Given a 2-body Hamiltonian $h \in \CC^{d^2\times d^2}$, our first main result shows that the sequence $-\log r_L$ (where  $r_L$ is the spectral radius of the map $\cL_L^*$ introduced in \eqref{eq:defLLstar}) converges superexponentially fast to the free energy per site $f=f(h)$ of the infinite chain, namely:
\begin{theorem}
\label{thm:main}
There is a superexponentially decaying function $\varepsilon(L)$ such that $|(-\log r_L) - f| \leq \varepsilon(L)$ for all $L \geq 2$.
\end{theorem}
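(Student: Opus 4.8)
The plan is to show that the thermodynamic-limit marginal $\rho_L$ is an \emph{approximate eigenvector} of $\cL_L^*$ with approximate eigenvalue $e^{-f}$, in a \emph{multiplicative} (L\"owner-order) sense, and then read off the spectral radius $r_L$ from the two ``easy'' Collatz--Wielandt inequalities for positive maps. Concretely, I would prove that there is a superexponentially decaying $\gamma(L)$ with
\[
(1-\gamma(L))\,e^{-f}\rho_L \;\preceq\; \cL_L^*(\rho_L) \;\preceq\; (1+\gamma(L))\,e^{-f}\rho_L .
\]
The multiplicative form is essential: a mere trace-norm estimate $\|\cL_L^*(\rho_L)-e^{-f}\rho_L\|_1\le\eta$, when turned into an operator inequality, costs a factor $1/\lambda_{\min}(\rho_L)$, and $\lambda_{\min}(\rho_L)$ can be as small as $d^{-\Omega(L)}$ (e.g.\ near a product state), which would spoil the $d$-independence required by Definition~\ref{def:Ponecst}.

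\textbf{Deriving the approximate-eigenvector bound.} I would start from the exact one-site peeling identity $e^{-H_{[1,m]}}=E_m(\1\otimes e^{-H_{[2,m]}})E_m^{\dagger}$ (valid for every $m$). Tracing out $[L,m]$ and using that $E_L$ acts trivially on $[L+1,m]$, hence commutes past $\tr_{[L+1,m]}$, gives $\cL_L^*(\bar\rho_{L,m-1}) = \tr_{[L,m]}(E_L(\1\otimes e^{-H_{[2,m]}})E_L^{\dagger})$ (identifying $\bar\rho_{L,m-1}$ with its shift onto $[2,L]$), whereas $\bar\rho_{L,m}=\tr_{[L,m]}(e^{-H_{[1,m]}})$. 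Writing $E_L=(\1+\zeta)E_m$ with $\zeta=E_LE_m^{-1}-\1$, $\|\zeta\|\le\varepsilon(L)$ by Lemma~\ref{lem:arakiE}(iii), and $K=e^{-H_{[1,m]}}$, this yields $\cL_L^*(\bar\rho_{L,m-1}) - \bar\rho_{L,m} = \tr_{[L,m]}(\zeta K + K\zeta^{\dagger} + \zeta K\zeta^{\dagger})$. I would then bound this in L\"owner order relative to $\bar\rho_{L,m}=\tr_{[L,m]}K$: using $\pm(AB^{\dagger}+BA^{\dagger})\preceq sAA^{\dagger}+s^{-1}BB^{\dagger}$ with $A=\zeta K^{1/2},B=K^{1/2}$, together with $\zeta K\zeta^{\dagger}\preceq\gamma_m^{2}K$ where $\gamma_m:=\|K^{-1/2}\zeta K^{1/2}\|$, optimizing over $s>0$, and applying the positive map $\tr_{[L,m]}$, one obtains $(1-2\gamma_m)\bar\rho_{L,m}\preceq\cL_L^*(\bar\rho_{L,m-1})\preceq(1+3\gamma_m)\bar\rho_{L,m}$. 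Dividing by $Z_{m-1}$ turns this into a two-sided bound between $(Z_m/Z_{m-1})\rho_{L,m}$ and $\cL_L^*(\rho_{L,m-1})$; letting $m\to\infty$, using $\rho_{L,m}\to\rho_L$ from \eqref{eq:rhoL} and a superexponentially decaying bound on $\sup_m\gamma_m$ (see below), yields the stated bound (with $\gamma$ renamed to absorb the constants $2,3$). Uniformity in $m$ forces $\lim_m Z_m/Z_{m-1}$ to exist, and comparing Ces\`aro averages with \eqref{eq:f} identifies it with $e^{-f}$.

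\textbf{From the approximate eigenvector to $r_L$.} The ``easy'' Collatz--Wielandt inequalities (valid for any positive map, no irreducibility needed) conclude: iterating $\cL_L^*$ and using Gelfand's formula, $\cL_L^*(\rho_L)\succeq(1-\gamma(L))e^{-f}\rho_L$ with $\rho_L\succeq0$, $\rho_L\neq0$, gives $r_L\ge(1-\gamma(L))e^{-f}$, while $\cL_L^*(\rho_L)\preceq(1+\gamma(L))e^{-f}\rho_L$ with $\rho_L\succ0$ (the infinite-chain marginal is faithful, by Araki's results) gives $r_L\le(1+\gamma(L))e^{-f}$. Hence $r_L e^{f}\in[1-\gamma(L),1+\gamma(L)]$, so for $L$ with $\gamma(L)\le 1/2$ one has $|(-\log r_L)-f|=|\log(r_L e^{f})|\le 2\gamma(L)$ (recall $\beta=1$); the finitely many smaller $L$ are absorbed into the constants of the final superexponentially decaying function. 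The payoff of working multiplicatively is that neither $r_L$ nor $e^{-f}$ (which can be of size $\sim d$) need be controlled individually --- only their ratio --- so no factor of $d$ survives.

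\textbf{Main obstacle.} The hard part will be the estimate $\sup_m\gamma_m\le\gamma(L)$, i.e.\ bounding $\gamma_m=\|K^{-1/2}\zeta K^{1/2}\|=\|e^{H_{[1,m]}/2}(E_LE_m^{-1})e^{-H_{[1,m]}/2}-\1\|$ by a superexponentially decaying, $d$-free function of $L$ \emph{uniformly in $m$}. Naively multiplying $\|E_LE_m^{-1}-\1\|\le\varepsilon(L)$ by the norms of $e^{\pm H_{[1,m]}/2}$ is useless since the latter blow up with $m$; instead one must exploit that $e^{H_{[1,m]}/2}E_L e^{-H_{[2,m]}/2}$ equals $\1$ for $L=m$ and that the ``expansional'' corrections picked up as $m$ grows past $L$ cancel. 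This is precisely an estimate on Araki's expansionals of the type established in \cite{araki1969,perezgarcia2020}; Lemma~\ref{lem:arakiE} records only the simplest consequences, so a modest strengthening (or a direct appeal to those expansional bounds) will be needed. Everything else --- the algebraic identities, the L\"owner-order manipulations, the passage $m\to\infty$, and the Collatz--Wielandt step --- is routine by comparison; it is the noncommutativity of the chain that renders the transfer-matrix recursion $\cL_L^*(\bar\rho_{L,m-1})\approx\bar\rho_{L,m}$ only approximate and makes this estimate unavoidable.
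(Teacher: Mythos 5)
Your overall architecture is the same as the paper's: show that $\rho_L$ is an approximate eigenvector of $\cL_L^*$ in the L\"owner order with approximate eigenvalue $e^{-f}$, then conclude via the easy directions of Collatz--Wielandt (the paper pairs with a nonnegative eigenvector $w_L$ of the adjoint $\cL_L$ rather than invoking Gelfand's formula, but this is cosmetic). The algebraic identity you start from, $\cL_L^*(\bar\rho_{L,m-1})=\tr_{[L,m]}\bigl((\1+\zeta)e^{-H_{[1,m]}}(\1+\zeta)^{\dagger}\bigr)$ with $\zeta=E_LE_m^{-1}-\1$, is exactly the paper's. Where you diverge is in how the error term is controlled, and this is where the proposal has a genuine gap. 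Your route requires $\sup_m\gamma_m$ with $\gamma_m=\|e^{H_{[1,m]}/2}\,\zeta\, e^{-H_{[1,m]}/2}\|$ to be superexponentially small in $L$, uniformly in $m$. You correctly flag this as the main obstacle, but it is not a ``modest strengthening'' of Lemma~\ref{lem:arakiE}: it is a new expansional-locality estimate (for cuts in the middle of the chain, conjugated by the full Gibbs weight) that is not recorded anywhere in the paper's toolbox, and without it the argument does not close. As written, the entire technical content of the theorem has been relocated into an unproven claim.

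The irony is that the route you dismiss at the outset is the one the paper takes, and it works. The paper first proves the \emph{trace-norm} bound $\|\cL_L^*(\rho_{L,m})-\tfrac{Z_{m+1}}{Z_m}\rho_{L,m+1}\|_1\le\varepsilon''(L)$ by duality: pairing with an arbitrary $S\in\cA_{[1,L-1]}$, the error term is $\tfrac{Z_{m+1}}{Z_m}\langle e^{-H_{[1,m+1]}}/Z_{m+1},\,X^{\dagger}SX-S\rangle$, which is an expectation in a \emph{normalized} state and hence is bounded by $\|X^{\dagger}SX-S\|\le\varepsilon(L)(1+\cG^2)\|S\|$ using only the plain operator-norm bound $\|X-\1\|\le\varepsilon(L)$ of Lemma~\ref{lem:arakiE}(iii) --- no conjugation by $e^{\pm H/2}$ ever appears. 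The conversion to the L\"owner order then costs a factor $\|\rho_L^{-1}\|$, but Lemma~\ref{lem:cond} (itself a consequence of the same expansional bounds) shows $\|\rho_{L,m}^{-1}\|\le\cG e^{2\|h\|L}$; even granting your worry that the true bound carries an extra $d^{O(L)}$, a factor merely exponential in $L$ is harmless against the $1/(L/2)!$ in the definition of superexponential decay (indeed the paper's own bounds carry $e^{O(L\log d)}$ factors elsewhere, e.g.\ in the proof of Lemma~\ref{lem:lbLL}, and the $\log d$ dependence resurfaces only in the runtime of Corollary~\ref{cor:main}). So the premise that the trace-norm estimate ``would spoil'' the result is incorrect, and abandoning it leads you into a strictly harder estimate. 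Your observation that uniformity in $m$ forces $\lim_m Z_m/Z_{m-1}=e^{-f}$ is a nice shortcut that would replace the paper's separate Cauchy-sequence argument (Lemma~\ref{lem:fRatioLimit}), but only once the central estimate on $\gamma_m$ is actually supplied.
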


In order to compute other properties of the Gibbs state such as expectation values of local observables or correlation functions, it is useful to also have a description of the Gibbs state.
It turns out that the eigenvector corresponding to the spectral radius of $\cL_L^*$ approximates the marginal $\rho_{L}$ of the Gibbs state defined in \eqref{eq:rhoL}. This is the object of the next theorem (recall the definition of a $\Pone$-constant in Def. \ref{def:Ponecst}):
\begin{theorem}
\label{thm:mainState}
There exist $\Pone$-constants  $\cG,\cG'$ and a superexponentially decaying function $\varepsilon(L)$ such that the following is true. 
Let $v_L\in\cA_{[1,L-1]}$ be the eigenvector of the map $\cL^*_L$ corresponding to its spectral radius. For $L\ge e^{\cG}$, $v_L$ is unique, strictly positive, and
\[
\|v_L-\rho_L\|_1 \leq e^{\cG'} \varepsilon(L).
\]
\end{theorem}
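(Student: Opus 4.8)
The plan is to combine two properties of the finite-dimensional map $\cL_L^*$ and feed them into the Perron--Frobenius theory of positive maps. The first is that the Gibbs marginal $\rho_L$ is an \emph{approximate eigenvector}: $\norm{\cL_L^*(\rho_L)-c_L\rho_L}_1\le\varepsilon_1(L)$ for some scalar $c_L>0$ bounded above and below by $\Pone$-constants and a superexponentially decaying $\varepsilon_1$. The second is \emph{quantitative primitivity}: for $n_0:=L-1$ the power $T:=(\cL_L^*)^{n_0}$ sends every nonzero $Q\succeq0$ into an order interval $\gamma\,\tr[T(Q)]\,\1\preceq T(Q)\preceq\Gamma\,\tr[T(Q)]\,\1$ with $\log(\Gamma/\gamma)=O(\|h\|L+L\log d)$. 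Given the second property, Birkhoff's contraction theorem gives that $T$ contracts the Hilbert projective metric $d_H$ on the positive cone with coefficient $\kappa\le\tanh(\tfrac14\log(\Gamma/\gamma))<1$, so $1-\kappa\ge e^{-O(\|h\|L+L\log d)}$; in particular, for $L$ large, $r_L$ is a simple eigenvalue of $\cL_L^*$ with a strictly positive eigenvector $v_L$ unique up to scaling (this gives the first assertions of the theorem), and we normalise $v_L$ by $\tr v_L=1$. Given also the first property, iterating it yields $\norm{T(\rho_L)-c_L^{n_0}\rho_L}_1\le\varepsilon_2(L)$, still superexponentially decaying after absorbing the at-most-exponential-in-$L$ factors $\max(c_L,\norm{\cL_L^*}_{1\to1})^{n_0}$ into the factorial decay of $\varepsilon_1$ (here $\norm{\cL_L^*}_{1\to1}\le 2\cG^4\,c_L$, so these factors carry no $d$). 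The final bound then follows from
\[
\norm{v_L-\rho_L}_1\ \lesssim\ d_H(\rho_L,v_L)\ \lesssim\ \frac{1}{1-\kappa}\,d_H\!\big(\rho_L,\,c_L^{-n_0}T(\rho_L)\big)\ \lesssim\ \frac{\varepsilon_2(L)}{(1-\kappa)\,c_L^{\,n_0}\,\lambda_{\min}(\rho_L)},
\]
where the first step is the standard conversion for two trace-one states at small $d_H$-distance, the second is the usual ``approximate fixed point'' estimate $d_H(\rho_L,v_L)\le d_H(\rho_L,T\rho_L)+\kappa\,d_H(\rho_L,v_L)$ together with $Tv_L\parallel v_L$ and scale-invariance of $d_H$, and the last uses $\norm{A-B}_1\le\delta\Rightarrow d_H(A,B)\lesssim\delta/\lambda_{\min}(A)$ for $\delta\le\tfrac12\lambda_{\min}(A)$. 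It remains to prove the two properties and a lower bound on $\lambda_{\min}(\rho_L)$.

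\emph{Approximate eigenvector.} From $E_M=e^{-H_{[1,M]}/2}e^{H_{[2,M]}/2}$ one reads off $E_M\,e^{-H_{[2,M]}}\,E_M^{\dagger}=e^{-H_{[1,M]}}$. Unfolding the definition \eqref{eq:defLLstar}, using translation invariance to recognise the one-site translate of $\bar\rho_{L,m}$ as $\tr_{[L+1,m+1]}e^{-H_{[2,m+1]}}$, and writing $E_L=D\,E_{m+1}$ with $D:=E_LE_{m+1}^{-1}$, a direct computation gives the exact relation
\[
\cL_L^*(\rho_{L,m})\ =\ \frac{Z_{m+1}}{Z_m}\,\rho_{L,m+1}\ +\ \frac{1}{Z_m}\,\tr_{[L,m+1]}\!\Big((D-\1)\,e^{-H_{[1,m+1]}}\,D^{\dagger}+e^{-H_{[1,m+1]}}(D^{\dagger}-\1)\Big).
\]
Now $\norm{D-\1}\le\varepsilon(L)$ and $\norm{D}\le\cG^2$ by Lemma~\ref{lem:arakiE}; since $\tr_{[L,m+1]}$ is $\norm{\cdot}_1$-contractive, $\norm{e^{-H_{[1,m+1]}}}_1=Z_{m+1}$, and $Z_{m+1}/Z_m\le d\,e^{\|h\|}$ by Golden--Thompson, the error term has trace norm at most $(\cG^2+1)\,d\,e^{\|h\|}\,\varepsilon(L)$ \emph{uniformly in $m$}. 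Letting $m\to\infty$ (using $\rho_{L,m}\to\rho_L$, continuity of $\cL_L^*$ on finite dimensions, and $\tr\rho_{L,m+1}=1$, which forces the ratios $Z_{m+1}/Z_m$ to accumulate at some $c_L\in[d\cG^{-2}/2,\,d\,e^{\|h\|}]$) gives the claimed approximate-eigenvector bound with $\varepsilon_1(L)$ superexponentially decaying. (One can further identify $c_L=e^{-f}$, recovering Theorem~\ref{thm:main}, but this is not needed here.)

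\emph{Quantitative primitivity --- the main obstacle.} Iterating $\cL_L^*(Q)=\tr_L(E_L(\1\otimes\tau Q)E_L^{\dagger})$ (with $\tau$ the one-site right shift, so $\1\otimes\tau Q$ places $Q$ on sites $2,\dots,L$) gives $(\cL_L^*)^n(Q)=\tr_{[L,L+n-1]}\big(G_n\,(\1_{[1,n]}\otimes\tau^nQ)\,G_n^{\dagger}\big)$ with $G_n=E_L\,\tau E_L\cdots\tau^{n-1}E_L$. Take $n=n_0=L-1$, so $\tau^{L-1}Q$ lies entirely in the traced-out region. The key algebraic point is the telescoping identity $E_{[1,M]}E_{[2,M]}\cdots E_{[L-1,M]}=e^{-H_{[1,M]}/2}e^{H_{[L,M]}/2}$ (each adjacent product collapses a cancelling pair $e^{H/2}e^{-H/2}=\1$); together with $\norm{E_{[a,a+L-1]}-E_{[a,M]}}\le\varepsilon(L)$ from Lemma~\ref{lem:arakiE}(ii), it shows $G_{L-1}$ is close, in operator norm, to $\widetilde G:=e^{-H/2}e^{H_{[L,2L-2]}/2}$ with $H:=H_{[1,2L-2]}$. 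For the approximating operator, putting $\widehat Q:=e^{H_{[L,2L-2]}/2}(\tau^{L-1}Q)e^{H_{[L,2L-2]}/2}\succeq0$, one has $\widetilde G(\1_{[1,L-1]}\otimes\tau^{L-1}Q)\widetilde G^{\dagger}=e^{-H/2}(\1_{[1,L-1]}\otimes\widehat Q)e^{-H/2}$; tracing out $[L,2L-2]$ and expanding $\widehat Q$ spectrally reduces everything to the single inequality, valid for every unit vector $\ket{\chi}$ on $[L,2L-2]$,
\[
\tr_{[L,2L-2]}\!\big(e^{-H/2}(\1_{[1,L-1]}\otimes\proj{\chi})e^{-H/2}\big)\ =\ \sum_b W_bW_b^{\dagger}\ \succeq\ W_\chi^{\,2}\ \succeq\ e^{-\|H\|}\,\1_{[1,L-1]},
\]
where $W_b=(\1_{[1,L-1]}\otimes\bra{b})e^{-H/2}(\1_{[1,L-1]}\otimes\ket{\chi})$ with $\ket b$ running over an orthonormal basis containing $\ket{\chi}$: one keeps only the term $b=\chi$ and uses that $W_\chi$ is Hermitian with $W_\chi\succeq e^{-\|H\|/2}\1$, since $e^{-H/2}\succeq e^{-\|H\|/2}\1$ and conjugation by the isometry $\1_{[1,L-1]}\otimes\ket{\chi}$ is operator-monotone. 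This is precisely where the strict positive-definiteness of the Gibbs exponentials is used, and it is what makes the generic dimension count (which would \emph{not} force positive-definiteness) irrelevant. Assembling the pieces gives $(\cL_L^*)^{L-1}(Q)\succeq e^{-O(\|h\|L)}(\tr\widehat Q)\,\1-(\text{err})\,\1$ with $\tr\widehat Q\ge e^{-O(\|h\|L)}\tr Q$, a matching upper bound of the form $\Gamma\,\tr[(\cL_L^*)^{L-1}(Q)]\,\1$ from crude norm estimates, and the primitivity lower bound with $\gamma=e^{-O(\|h\|L)}/d^{L-1}$. I expect the bulk of the work here to be quantitative bookkeeping: the $O(L)$-fold product $G_{L-1}$, the conjugations $e^{\pm H/2}$, the Birkhoff diameter, and $(\text{err})$ all contribute factors exponential in $L$, and one must verify that these are beaten by the factorial decay in Lemma~\ref{lem:arakiE} precisely for $L\ge e^{\cG}$, while keeping the dependence on $d$ under control.

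\emph{Finishing.} The missing bound on $\lambda_{\min}(\rho_L)$ comes for free by combining the two properties: from $\rho_L=c_L^{-n_0}T(\rho_L)+(\text{superexponentially small})$ and $T(\rho_L)\succeq\gamma(\tr\rho_L)\1=\gamma\,\1$ we get $\rho_L\succeq\big(c_L^{-n_0}\gamma-(\text{superexponentially small})\big)\,\1\succeq\tfrac12\,c_L^{-n_0}\gamma\,\1$ for $L\ge e^{\cG}$, i.e.\ $\lambda_{\min}(\rho_L)\ge e^{-O(\|h\|L+L\log d)}$. Uniqueness and strict positivity of $v_L$ are the Perron--Frobenius statements for the primitive positive map $\cL_L^*$, valid once $L\ge e^{\cG}$. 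Substituting $1-\kappa\ge e^{-O(\|h\|L+L\log d)}$, $c_L^{-n_0}\le e^{O(\|h\|L)}$, $\lambda_{\min}(\rho_L)\ge e^{-O(\|h\|L+L\log d)}$ and the superexponential $\varepsilon_2$ into the displayed chain yields a bound of the required form $\norm{v_L-\rho_L}_1\le e^{\cG'}\varepsilon(L)$, once the exponential-in-$L$ prefactors flagged above are checked to be absorbed by the superexponentially decaying function.
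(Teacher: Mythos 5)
Your overall architecture coincides with the paper's: show that $\rho_L$ is an approximate eigenvector of $\cL_L^*$, establish a quantitative primitivity bound for an iterate of $\cL_L^*$, and combine the two through Birkhoff's contraction of the Hilbert projective metric (the paper packages this last step as the general Theorem~\ref{thm:apxevector}). Your approximate-eigenvector step is essentially identical to the paper's proof of \eqref{eq:b1} and Corollary~\ref{cor:LFPSandwich}, down to the decomposition $E_L=(E_LE_{m+1}^{-1})E_{m+1}$ and the identity $E_{m+1}e^{-H_{[2,m+1]}}E_{m+1}^{\dagger}=e^{-H_{[1,m+1]}}$. Where you genuinely depart from the paper is the primitivity lemma. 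The paper's Lemma~\ref{lem:lbLL} never unfolds the iterate: it compares $\cL_L^M$ with the limiting map $\cL^M$ and invokes the deep fact (Lemma~\ref{lem:arakiL}(iii), from Araki's theory) that $e^{fn}\cL^n$ converges exponentially to a rank-one map; this forces $M=\cG L$ iterations with $\cG$ a $\Pone$-constant and is responsible for the doubly exponential temperature dependence in \eqref{eq:complexityevector}. You instead unfold $(\cL_L^*)^{L-1}$ explicitly, telescope the product of shifted expansionals into $e^{-H_{[1,2L-2]}/2}e^{H_{[L,2L-2]}/2}$ up to a superexponentially small error, and extract positive definiteness from the single inequality $\sum_b W_bW_b^{\dagger}\succeq W_\chi^2\succeq e^{-\|H\|}\1$ for the Hermitian compression $W_\chi$ of $e^{-H/2}$. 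This is precisely the route anticipated by the conjecture in Remark~\ref{rem:conjectureprimitivity}; while your version only holds for $L$ above a threshold (the telescoping error must be beaten by the factorial decay, so you do not get all $L\ge2$), it yields $k=L-1$ and $\log C=O(\|h\|L+L\log d)$ and would therefore improve the temperature dependence of \eqref{eq:complexityevector} as that remark predicts. I checked the key steps --- the iteration formula for $(\cL_L^*)^n$, the telescoping of the $E$'s, and the compression bound --- and they are sound; what remains is the exponential-versus-factorial bookkeeping you explicitly defer, which is routine and of the same nature as in the paper's own proofs (your aside that the iterated $1\!\to\!1$ norms carry no factor of $d$ is best justified via Lemma~\ref{lem:arakiL}(i) applied to $\cL_L(\1)$, but even with a $d^{O(L)}$ factor the factorial decay absorbs it, exactly as in the paper's proof of Lemma~\ref{lem:lbLL}).
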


Note that the above theorem proves convergence to the one-sided Gibbs state, i.e., the Gibbs state of a chain that extends to infinity only in one direction. 
We focus on this case to simplify the presentation. The case of the chain that is infinite in both directions can be handled in the exact same way by using the two-sided version of the noncommutative transfer operator as defined in \cite{araki1969}.
Alternatively, the one-sided version of the algorithm can be used in a black box way to obtain the two-sided marginal by recasting the Hamiltonian as we explain in Appendix~\ref{sec:recastH}.
We also note that one can compute the marginals of the Gibbs state by computing certain directional derivatives of the free energy function. This is described in more detail in Appendix~\ref{sec:fToP} and it directly yields the marginals for the two-sided chain, as the free energy is the same in both cases. It is, however, inefficient to use this method to compute marginals of many particles.

Theorems \ref{thm:main} and \ref{thm:mainState} imply that the free energy and the $k$-particle marginals of the Gibbs state can be approximated up to any given error by choosing an appropriate $L$, and computing the spectral radius of $\cL_L^*$ and the corresponding eigenvector. 
By making this choice and its dependence on the problem input explicit, we obtain the following corollary.

\begin{corollary}\label{cor:main}
For some numerical constant $C>0$, there exists an algorithm that takes as input the local dimension of a quantum system $d$, its 2-local translation-invariant Hamiltonian given by $h$, an additive error $\varepsilon<1/e$, runs in time at most
\begin{equation}
\label{eq:complexitymain}
\exp\left(\log(d)\exp(\exp\left(C\left(\|h\|+1\right)\right))\frac{\log(1/\varepsilon)}{\log(\log(1/\varepsilon))}\right)\ ,
\end{equation}
and outputs an approximation $\tilde f$ such that $|f(h)-\tilde f|\le\varepsilon$, where $f(h)$ is defined in \eqref{eq:f}.

Furthermore, for some numerical constants $C_1,\ldots,C_5>0$ and a given system size $k$, there exists an algorithm that takes the same inputs as above, runs in time at most
\begin{equation}
\label{eq:complexityevector}
\exp\left(\log(d)\max\left\{\exp\left(\exp\left(C_1e^{C_2\|h\|}\right)\right)\frac{\log(1/\varepsilon)}{\log(\log(1/\varepsilon))},\exp\left(e^{C_3e^{e^{C_4\|h\|}}}\right), C_5k\right\}\right)\ ,
\end{equation}
and outputs an approximation $v\in\cA_{[1,k]}$ such that $\|v-\rho_{k+1}\|_1\le\varepsilon$, where $\rho_{k+1}$ is defined in \eqref{eq:rhoL}.
\end{corollary}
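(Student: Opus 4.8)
The plan is to derive Corollary~\ref{cor:main} from Theorems~\ref{thm:main} and \ref{thm:mainState} by making two things quantitative: the choice of the truncation parameter $L$ needed to reach error $\eps$, and the cost of the two linear-algebra tasks (computing a spectral radius and computing the associated eigenvector) on a space of dimension $d^{L-1}$.

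\textbf{Step 1: Inverting the superexponential bound.}
Both theorems give an error of the form $\varepsilon(L)\le \exp(C_1\exp(C_2\|h\|))\,\exp(C_3\|h\|L)/(L/2)!$ (times a $\Pone$-constant in the state case). First I would absorb the $\Pone$-prefactors and the $\exp(C_3\|h\|L)$ factor, using that $(L/2)!$ eventually dominates any exponential in $L$, to get a clean bound of the shape $\varepsilon(L)\le (A/L)^{L/2}$ for a suitable $\Pone$-constant $A = A(\|h\|)$, valid once $L$ exceeds some $\Pone$-threshold. Then I need the smallest $L$ with $(A/L)^{L/2}\le\eps$. Taking logs, this is $\tfrac{L}{2}\log(L/A)\ge\log(1/\eps)$; the standard solution is $L = \Theta\!\big(A\cdot \frac{\log(1/\eps)}{\log\log(1/\eps)}\big)$ (one checks that plugging this in gives $\log(L/A)=\Theta(\log\log(1/\eps))$, so the product is $\Theta(\log(1/\eps))$, and an extra constant factor on $L$ makes the inequality strict). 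This is why the bound $\eps<1/e$ is imposed — so that $\log\log(1/\eps)$ makes sense and is positive. The outcome is $L \le \exp(C(\|h\|+1))\,\frac{\log(1/\eps)}{\log\log(1/\eps)}$ for the free-energy case, matching the parameter choice in Algorithm~\ref{algo:main}; for the eigenvector case I additionally need $L\ge e^{\cG}$ from Theorem~\ref{thm:mainState}, which is a $\Pone$-constant and hence contributes the $\exp(\exp(\exp(C_4\|h\|)))$-type term appearing as the middle entry of the max in \eqref{eq:complexityevector} (after tracking that $\cG$ itself is $\exp(C_1\exp(C_2\|h\|))$, so $e^\cG$ is a double-exponential, which after one more $\log$ for the runtime exponent sits where \eqref{eq:complexityevector} puts it).

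\textbf{Step 2: Cost of the linear-algebra primitives.}
The map $\cL_L^*$ acts on $\cA_{[1,L-1]}\cong\CC^{d^{L-1}\times d^{L-1}}$, so its matrix representation is of size $N\times N$ with $N = d^{2(L-1)}$. Assembling this matrix requires exponentiating the $d^L\times d^L$ operators $H_{[1,L]}$ and $H_{[2,L]}$ and forming the products and partial trace in \eqref{eq:defLLstar}; all of this is $\mathrm{poly}(d^L) = \exp(O(L\log d))$ arithmetic operations. Computing the spectral radius of an $N\times N$ matrix (e.g.\ via the characteristic polynomial or an eigensolver) and, in the eigenvector case, extracting the Perron eigenvector (which is unique and strictly positive by Theorem~\ref{thm:mainState} once $L\ge e^\cG$, so this is numerically well-posed) both cost $\mathrm{poly}(N) = \exp(O(L\log d))$. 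Hence the total runtime is $\exp(O(L\log d))$. Substituting the value of $L$ from Step~1 gives the expression \eqref{eq:complexitymain}; for the marginal on $k$ sites one takes $L\ge k+1$ as well, contributing the $C_5 k$ term inside the max, and one must also account for the fact that Theorem~\ref{thm:mainState} bounds $\|v_L-\rho_L\|_1$ rather than $\|v_L - \rho_{k+1}\|_1$ — but setting $L = k+1$ identifies the two, so one just needs $L$ large enough to simultaneously satisfy $L\ge k+1$, $L\ge e^\cG$, and the error requirement, i.e.\ $L$ equal to the max of the three quantities, which is exactly the structure of \eqref{eq:complexityevector}.

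\textbf{Main obstacle.}
The conceptual content is entirely in Theorems~\ref{thm:main} and \ref{thm:mainState}; the corollary is bookkeeping. The one genuinely delicate point is the asymptotic inversion in Step~1 — getting the constant $C$ right in front of $\|h\|+1$ and verifying that the $\Pone$-constant $A(\|h\|)$ really does come out as $\exp(C_1\exp(C_2\|h\|))$ after absorbing the $\exp(C_3\|h\|L)$ term (this absorption costs a constant factor in $L$ that depends on $\|h\|$, and one must check it stays within the $\Pone$ class). A secondary subtlety is justifying that the numerical spectral-radius and eigenvector computations can be carried out to sufficient precision in $\mathrm{poly}(N)$ time without the conditioning blowing up — here one should invoke the strict positivity / primitivity from Theorem~\ref{thm:mainState} (and the fact that the target precision $\eps$ only needs to be matched up to constants) rather than generic non-normal perturbation bounds, which as the paper notes have bad dimension dependence.
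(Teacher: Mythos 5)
Your proposal is correct and follows essentially the same route as the paper: invert the superexponential error bound to get $L = O\bigl(\exp(C(\|h\|+1))\log(1/\eps)/\log\log(1/\eps)\bigr)$ (the paper does this via Stirling plus a minimum-over-linear-functions trick with an explicit $\gamma = \tfrac12\log\log(1/\eps)$, you via the standard asymptotic for $(A/L)^{L/2}\le\eps$ --- both are fine), then observe that assembling $\cL_L^*$ and computing its spectral radius or Perron eigenvector is polynomial in the matrix dimension $d^{2(L-1)}$, with the eigenvector case adding the constraints $L\ge e^{\cG}$ and $L\ge k+1$ as the other two entries of the max. The one detail you gloss over is that when the max forces $L>k+1$ one recovers $\rho_{k+1}$ by partial-tracing $v_L$ (trace-norm contractive, and the marginals of $\rho_L$ are consistent), a point the paper itself relegates to a footnote.
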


While we have made both the dependence on $\varepsilon$ and the dependence on $\|h\|$ explicit,  we want to point out that our algorithm should be considered for some fixed bound on $\|h\|$ (recall that we have assumed $\beta = 1$, for general $\beta$, this value is the inverse temperature times the norm of the Hamiltonian).
As pointed out in the introduction, the ground state problem for the system we consider is computationally hard.
If an algorithm was able to solve the free energy problem efficiently in $1/\varepsilon$ and $\|h\|$, it could also approximate the ground state energy efficiently by taking $\|h\|~\to~\infty$. We formulate and prove such a reduction in Appendix~\ref{sec:GSReduction}, i.e., we establish the $\mathsf{QMA_{EXP}}$-hardness of the infinite translation-invariant free energy problem with the temperature as an additional problem input. This shows that, unless $\mathsf{QMA_{EXP}} = \mathsf{EXP}$, no algorithm can have a running time of the form $\exp(\mathrm{polylog}(\| h \|, 1/\eps))$.

We note that the temperature dependence in the second part of the corollary is worse than in the first part\footnote{To give more insight into the complexity estimate \eqref{eq:complexityevector}, the second term in the maximum in~\eqref{eq:complexityevector} comes from the lower bound on $L$ in Theorem~\ref{thm:mainState}.
The last term comes from the fact that we have to compute the eigenvector of at least the map $\cL^{*}_{k+1}$ for it to be defined on at least $k$ sites (for larger maps we take a partial trace in the end).}, which could be an artefact of our proof method (see Remark~\ref{rem:conjectureprimitivity}).
In fact, it is also possible to compute thermodynamic quantities as derivatives of the free energy with respect to parameters in the Hamiltonian.
This allows us to just use the first part of our proposed algorithm to access other properties of the Gibbs state. The details of this approach are worked out in detail in  Appendix~\ref{sec:fToP} for 2-local observables. 
In this case, this slightly improves the temperature dependence,
as it removes the second term in \eqref{eq:complexityevector}, while still having the same dependence as the first term.
The reason for this is that the error bounds for approximations of the derivative involve the second derivative of the free energy, which can only be bounded by invoking results from \cite{araki1969} as it is related to the analyticity of the free energy.
We note however that this method is not practical to compute the average of $k$-local observables for moderately large $k > 2$, as it requires to block the system first which results in a sytem with local dimension $d^{k/2}$ and interaction strength up to $(k/2) \|h\|$. On the other hand, our algorithm allows us to get $L$-local marginals for free. In addition, as illustrated in the numerical results Section~\ref{sec:numerics}, our algorithm also allows us to compute entropic quantities on the marginals such as the (conditional) mutual information.

\section{Convergence speed of the spectral radius of \texorpdfstring{$\mathcal{L}_{L}^*$}{} (proof of Theorem~\ref{thm:main})}
\label{sec:proof-eigenvalue}

In this section, we prove Theorem~\ref{thm:main}, showing that $-\log r_L$ converges superexponentially fast in $L$ to the free energy $f$, where $r_L$ is the spectral radius of the finite-dimensional linear map $\cL_L^*$ defined in \eqref{eq:defLLstar}. The key to our proof is to show that $\rho_L$, the marginal of the Gibbs state on $[1,L-1]$ (see \eqref{eq:defrhoLm}), is an approximate eigenvector of $\cL_L^*$, in the sense that
\begin{equation}
\label{eq:rhoLapxevec}
(1-\eps(L)) e^{-f} \rho_L \leq \cL_L^*(\rho_L) \leq (1+\eps(L)) e^{-f} \rho_L
\end{equation}
where $\eps(L)$ is a superexponentially decaying function in $L$. This will be proven in Corollary~\ref{cor:LFPSandwich}.

We start by giving an alternative expression for the free energy. While the limit of the normalized log-partition function exists, this does not immediately imply that the unnormalized partition function grows in approximately constant steps.
It can, however, be shown in the case of a translation-invariant interaction, using locality results of Gibbs states:
\begin{lemma}\label{lem:fRatioLimit}
The free energy defined in \eqref{eq:f} also admits the following expression:
\begin{equation}
\label{eq:f2}
f = -\lim_{N\to \infty} \log \frac{Z_{N+1}}{Z_N}.
\end{equation}
Moreover $e^{-f},e^f\le e^{\|h\|}$.
\end{lemma}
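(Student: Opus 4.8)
The plan is to establish the two claims separately, deriving the limit formula \eqref{eq:f2} from a "telescoping" argument and then bounding the ratio $Z_{N+1}/Z_N$ directly. For the existence of the limit, I would write
\[
-\frac{1}{\beta N}\log Z_N = \frac{1}{N}\sum_{k=1}^{N-1} \left(-\log\frac{Z_{k+1}}{Z_k}\right) - \frac{\log Z_1}{N},
\]
(with $\beta = 1$ as per the normalization in the excerpt), so that $-\tfrac1N\log Z_N$ is the Cesàro average of the sequence $a_k := -\log(Z_{k+1}/Z_k)$. By \eqref{eq:f} the Cesàro averages converge to $f$; hence if $\lim_k a_k$ exists it must equal $f$. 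So the real content is to show that $a_k$ converges at all. The natural approach is to show that $a_k$ is Cauchy, i.e. that $\bigl|\log(Z_{N+1}/Z_N) - \log(Z_{M+1}/Z_M)\bigr|$ is small for large $N,M$; this is where the locality of Gibbs states enters. Concretely, $Z_{N+1}/Z_N = \tr_{[1,N+1]} e^{-H_{[1,N+1]}} / \tr_{[1,N]} e^{-H_{[1,N]}}$, and I would like to express this ratio in terms of the operators $E_L$ (or equivalently the marginals $\rho_{L,m}$) and then invoke Lemma~\ref{lem:arakiE}.

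The cleanest route is probably this: observe that $Z_{N+1} = \tr\bigl(e^{-H_{[1,N+1]}/2} e^{H_{[2,N+1]}/2} \cdot e^{-H_{[2,N+1]}/2} e^{-H_{[2,N+1]}/2} \cdot e^{H_{[2,N+1]}/2} e^{-H_{[1,N+1]}/2}\bigr)$ — wait, more simply, write $e^{-H_{[1,N+1]}} = E_{N+1}\, e^{-H_{[2,N+1]}}\, E_{N+1}^{\dagger}$ by definition of $E_{N+1}$ in \eqref{eq:defELm} (using that $H_{[2,N+1]}$ and the relevant factors are Hermitian). Then
\[
Z_{N+1} = \tr\bigl(E_{N+1}\, e^{-H_{[2,N+1]}}\, E_{N+1}^{\dagger}\bigr) = \tr\bigl(E_{N+1}^{\dagger} E_{N+1}\, e^{-H_{[2,N+1]}}\bigr),
\]
and since $e^{-H_{[2,N+1]}}$ is, up to the translation-invariance shift by one site, exactly $e^{-H_{[1,N]}}$, we get $Z_{N+1}/Z_N = \langle E_{N+1}^{\dagger} E_{N+1}\rangle_{\rho_{?,N}}$-type expression against the (shifted) Gibbs state — i.e., the ratio is the expectation of a fixed-support operator $E_{N+1}^{\dagger} E_{N+1}$ against a sequence of states whose relevant marginals converge (by \eqref{eq:rhoL}). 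By Lemma~\ref{lem:arakiE}(i)--(iii), $E_{N+1}$ is within $\varepsilon(\cdot)$ of an operator supported on a bounded number of sites and converges as $N\to\infty$; combined with the convergence of the marginals $\rho_{L,m}$, this forces $Z_{N+1}/Z_N$ to converge, and its limit must be $e^{-f}$ by the Cesàro identity above.

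For the bound $e^{-f}, e^{f} \le e^{\|h\|}$: since $Z_{N+1}/Z_N$ converges to $e^{-f}$ it suffices to bound this ratio uniformly. Writing $H_{[1,N+1]} = H_{[1,N]} + h_{N,N+1}$ (the two summands do not commute, so I would use the Golden–Thompson inequality or a simple operator-monotonicity argument: $\tr e^{A+B} \le \|e^{B}\|\,\tr e^{A}$ when... actually the clean statement is $\tr e^{A+B} \le \tr(e^{A} e^{B})^{\phantom{.}}$ is false in general, but $\tr e^{A+B} \le \|e^{B}\| \tr e^A$ does hold by Golden–Thompson plus $\tr(e^A e^B) \le \|e^B\|\tr e^A$). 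This gives $Z_{N+1} \le \|e^{-h_{N,N+1}}\|\, Z_N \le e^{\|h\|} Z_N$, hence $e^{-f} \le e^{\|h\|}$; the reverse bound $e^{f}\le e^{\|h\|}$, i.e. $Z_{N+1}\ge e^{-\|h\|}Z_N$, follows symmetrically (e.g. from $Z_N = \tr e^{-H_{[1,N]}} = \tr e^{-(H_{[1,N+1]} - h_{N,N+1})} \le e^{\|h\|} Z_{N+1}$ by the same inequality applied with the roles reversed, noting $H_{[1,N]}$ as an operator on $\cH_{[1,N+1]}$ acts as identity on the last site, so one must be a little careful and instead compare $Z_{N+1}$ with $Z_N \cdot d$ or trace out the last site — I would trace out site $N+1$ and use that $e^{-h_{N,N+1}} \succeq e^{-\|h\|}\1$ gives $\bar\rho_{N+1,N+1}$-type bounds). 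The main obstacle is the first part: making rigorous the claim that the ratio $Z_{N+1}/Z_N$ converges, since this is precisely a quantitative statement about the locality/clustering of the 1D Gibbs state, and the right way to package it is via the $E_L$ operators of Araki and their convergence guaranteed by Lemma~\ref{lem:arakiE}, together with the convergence of the marginals $\rho_{L,m}$ in \eqref{eq:rhoL}. I expect the argument to in fact show the stronger quantitative bound $|(-\log(Z_{N+1}/Z_N)) - f| \le \varepsilon(N)$ for a superexponentially decaying $\varepsilon$, which is what the later sections will need.
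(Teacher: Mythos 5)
Your argument for the limit formula \eqref{eq:f2} is essentially the paper's. You use the identity $e^{-H_{[1,N+1]}}=E_{N+1}e^{-H_{[2,N+1]}}E_{N+1}^{\dagger}$ to write $Z_{N+1}/Z_N$ as the expectation of $E_{N+1}^{\dagger}E_{N+1}$ against the shifted finite-volume Gibbs state, replace $E_{N+1}^{\dagger}E_{N+1}$ by $E_{M+1}^{\dagger}E_{M+1}$ up to an error controlled by Lemma~\ref{lem:arakiE}, and then invoke convergence of the marginals $\rho_{M+1,N}\to\rho_{M+1}$ from \eqref{eq:rhoL} to conclude the sequence is Cauchy; the paper's proof is exactly this three-term decomposition. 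Your Ces\`aro observation identifying the limit as $e^{-f}$ is left implicit in the paper but is the correct justification. One slip of phrasing: $E_{N+1}^{\dagger}E_{N+1}$ is not itself a fixed-support operator (its support grows with $N$); only its approximant $E_{M+1}^{\dagger}E_{M+1}$ is --- but you correct this in the following sentence, so the argument stands.

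For the ``moreover'' part the paper simply cites \cite[Lemma 3.6]{lenci2005}, whereas you attempt a direct Golden--Thompson/Peierls--Bogoliubov argument. Carried out carefully, that argument gives $d\,e^{-\|h\|}Z_N\le Z_{N+1}\le d\,e^{\|h\|}Z_N$: the factor of $d$ you flag is genuinely present under the unnormalized-trace convention used for $Z_N$ (take $h=0$, where $Z_{N+1}/Z_N=d$), and tracing out the last site does not remove it. So your route cleanly yields $e^{f}\le e^{\|h\|}/d\le e^{\|h\|}$ but only $e^{-f}\le d\,e^{\|h\|}$; obtaining the bound without the factor $d$ requires the normalized-trace convention of the cited reference. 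Since the direction actually invoked later in the paper is $e^{f}\le e^{\|h\|}$ (equivalently $e^{-f}\ge e^{-\|h\|}$), this does not damage anything downstream, but you should not claim $e^{-f}\le e^{\|h\|}$ from your argument as written. Finally, your expectation that a quantitative rate $|{-\log(Z_{N+1}/Z_N)}-f|\le\varepsilon(N)$ is needed later is not quite right: the paper keeps this lemma purely qualitative and derives the quantitative statements it needs (the approximate-eigenvector bounds) separately in Corollary~\ref{cor:LFPSandwich}.
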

\begin{proof}
Using the continuity of the logarithm, we have to prove the convergence of $Z_{N+1}/Z_N$.
For $N > M$, we have
\begin{align*}
\frac{Z_{N+1}}{Z_N}&=\frac{\tr[e^{-H_{[1,N+1]}}]}{\tr[e^{-H_{[1,N]}}]}=\frac{\tr[\id\otimes e^{-H_{[1,N]}}E_{N+1}^\dagger E_{N+1}]}{\tr[e^{-H_{[1,N]}}]}\\
&=\frac{\tr[e^{-H_{[2,N+1]}}E_{M+1}^\dagger E_{M+1}]}{\tr[e^{-H_{[2,N+1]}}]}-\frac{\tr[\id\otimes e^{-H_{[1,N]}}(E_{M+1}^\dagger E_{M+1}-E_{N+1}^\dagger E_{N+1})]}{\tr[e^{-H_{[1,N]}}]}\\
&=\tr[\rho_{M+1}\tr_1[E_{M+1}^\dagger E_{M+1}]]+\tr[(\rho_{M+1,N}-\rho_{M+1})E^\dagger_{M+1}E_{M+1}]\\
&\quad-\tr[\id\otimes \rho_{N+1,N}(E_{M+1}^\dagger E_{M+1}-E_{N+1}^\dagger E_{N+1})].
\end{align*}
Thus for $N,N' > M$ we have
\[
\begin{aligned}
\frac{Z_{N+1}}{Z_N}-\frac{Z_{N'+1}}{Z_{N'}} &= \tr[(\rho_{M+1,N} - \rho_{M+1,N'}) E_{M+1}^{\dagger} E_{M+1}]\\
& \quad - \tr[\id\otimes \rho_{N+1,N}(E_{M+1}^\dagger E_{M+1}-E_{N+1}^\dagger E_{N+1})]\\
& \quad + \tr[\id\otimes \rho_{N'+1,N'}(E_{M+1}^\dagger E_{M+1}-E_{N'+1}^\dagger E_{N'+1})].
\end{aligned}
\]
Let $\eps > 0$. By Lemma \ref{lem:arakiE}, there is a large enough $M$ such that for any $N,N' > M$ the last two terms have magnitude $\leq \eps$.
Furthermore, since the sequence $\rho_{M+1,N}$ converges as $N \to \infty$ (see \eqref{eq:rhoL}), there is a large enough $N_0$ such that for any $N,N' \geq N_0$ $\|\rho_{M+1,N} - \rho_{M+1,N'}\|_1 \leq \eps$.
Together with the boundedness of $E^\dagger_{M+1}E_{M+1}$ from Lemma~\ref{lem:arakiE}, this tells us that the sequence $(Z_{N+1}/Z_N)$ is Cauchy and thus converges.

The moreover part follows from
\[
\left|\log\left(\frac{Z_{m+1}}{Z_m}\right)\right|\le\|h\|
\]
by \cite[Lemma 3.6]{lenci2005}.
\end{proof}

The second technical lemma we need gives a bound on the operator norm of $\rho_{L,m}$, the marginal of the finite Gibbs state defined in \eqref{eq:defrhoLm}, and its inverse.

\begin{lemma}
\label{lem:cond}
There is a $\Pone$-constant $\cG$ such that for any $L,m$, $\|\rho_{L,m}\|,\|\rho_{L,m}^{-1}\| \leq \cG e^{2\|h\|L}$.
\end{lemma}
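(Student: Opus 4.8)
The plan is to bound $\|\rho_{L,m}\|$ and $\|\rho_{L,m}^{-1}\|$ by relating $\bar\rho_{L,m} = \tr_{[L,m]} e^{-H_{[1,m]}}$ to the unnormalized Gibbs state of a product Hamiltonian, using the fact that the interaction terms $h_{i,i+1}$ are bounded. Concretely, write $H_{[1,m]} = H_{[1,L-1]} + H_{[L-1,m]} + (\text{correction})$; more precisely $H_{[1,m]} = H_{[1,L-1]} + H_{[L,m]} + h_{L-1,L}$, so that $H_{[1,m]}$ differs from the \emph{decoupled} Hamiltonian $H_{[1,L-1]} + H_{[L,m]}$ only by the single term $h_{L-1,L}$ of norm $\le \|h\|$. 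By the Golden--Thompson-type bounds, or more simply by the operator inequality $e^{-\|h\|}\, e^{A} \preceq e^{A+B} \preceq e^{\|h\|}\, e^{A}$ valid whenever $\|B\|\le\|h\|$ and $A,B$ are Hermitian with $[A,B]$ not necessarily zero (this is \emph{not} true in general for the $\preceq$ order, so one should instead use $\|e^{A+B}\| \le e^{\|B\|}\|e^A\|$ via Trotter, and control the trace similarly), one gets two-sided bounds comparing $\tr_{[L,m]} e^{-H_{[1,m]}}$ with $\tr_{[L,m]}\left(e^{-H_{[1,L-1]}} \otimes e^{-H_{[L,m]}}\right) = e^{-H_{[1,L-1]}} \cdot \tr e^{-H_{[L,m]}} = e^{-H_{[1,L-1]}} \cdot Z_{m-L+1}$.

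From this, $\bar\rho_{L,m}$ is, up to a scalar factor in $[e^{-c\|h\|}, e^{c\|h\|}]$ and up to multiplicative operator errors of that same order, equal to $e^{-H_{[1,L-1]}}$ (times the scalar $Z_{m-L+1}$). After normalizing, $\rho_{L,m}$ is comparable to $e^{-H_{[1,L-1]}}/\tr e^{-H_{[1,L-1]}}$. Since $\|H_{[1,L-1]}\| \le (L-2)\|h\| \le L\|h\|$, we have $e^{-L\|h\|}\,\id \preceq e^{-H_{[1,L-1]}} \preceq e^{L\|h\|}\,\id$, hence $\|e^{-H_{[1,L-1]}}\| \le e^{L\|h\|}$ and $\tr e^{-H_{[1,L-1]}} \ge d^{L-1} e^{-L\|h\|} \ge e^{-L\|h\|}$ (and $\le d^{L-1} e^{L\|h\|}$). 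Combining, $\|\rho_{L,m}\| \le \cG\, e^{2\|h\|L}$ for a $\Pone$-constant $\cG$ absorbing the $O(1)$ factors, and similarly $\|\rho_{L,m}^{-1}\| \le \cG\, e^{2\|h\|L}$; the factor $2$ in the exponent comes from the numerator contributing $e^{L\|h\|}$ and the (inverse) normalization contributing another $e^{L\|h\|}$, plus the decoupling error which is only $O(\|h\|)$ and hence absorbed.

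The main obstacle is being careful about the noncommutativity: one cannot naively sandwich $e^{-H_{[1,m]}}$ between scalar multiples of $e^{-H_{[1,L-1]}}\otimes e^{-H_{[L,m]}}$ in the positive-semidefinite order, since removing a single interaction term does not give a two-sided PSD bound in general. The clean route is to work with norms and traces separately: (i) $\|\bar\rho_{L,m}\| = \|\tr_{[L,m]} e^{-H_{[1,m]}}\| \le \tr_{[L,m]}\text{-norm arguments}$, bounding $e^{-H_{[1,m]}} \preceq e^{\|h\|}\cdot(\text{something decoupled})$ is delicate, so instead one uses that for the \emph{full} trace, $\bar\rho_{L,m} \preceq \|e^{-H_{[1,m]} + H_{[L,m]}}\|\cdot \tr_{[L,m]} e^{-H_{[L,m]}}$-type estimates fail — rather, the right tool is already in the literature: one can cite \cite[Lemma 3.6]{lenci2005} together with the locality/belief-propagation estimates implicit in Lemma~\ref{lem:arakiE}, or give a direct argument via the inequality $\|\rho_{L,m}\|\le \|\rho_{L,m}^{-1}\|^{-1}\cdot(\dim)$ is not circular. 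I expect the cleanest self-contained proof is: $\rho_{L,m} = \tr_{[L,m]} e^{-H_{[1,m]}}/Z_m$, and $e^{-H_{[1,m]}} = E_m'(e^{-H_{[1,L-1]}} \otimes e^{-H_{[L,m]}})E_m'^{\dagger}$-style conjugation where $E_m'$ is a bounded operator of norm $e^{O(\|h\|)}$ arising from the missing bond $h_{L-1,L}$ (an expansional with a single term), giving $c^{-1}\, e^{-H_{[1,L-1]}}\cdot Z_{m-L+1} \preceq \bar\rho_{L,m}\cdot(\text{conjugation}) $; dividing by $Z_m$ and using $Z_m/Z_{m-L+1} \in [e^{-L\|h\|}, e^{L\|h\|}]$ from Lemma~\ref{lem:fRatioLimit} (applied $L-1$ times, since $|\log(Z_{N+1}/Z_N)| \le \|h\|$), we recover the claimed bounds with the factor $e^{2\|h\|L}$ and a $\Pone$-constant in front.
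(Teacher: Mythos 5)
Your final sketch lands on the same structure as the paper's proof: decouple $H_{[1,m]}$ into $H_{[1,L-1]}+H_{[L,m]}$ by conjugation with an expansional-type operator, bound $e^{-H_{[1,L-1]}}$ elementarily between $e^{\pm\|h\|L}\1$, take the partial trace, and control the normalization $\gamma/Z_m = Z_{m-L+1}/Z_m$ via \cite[Lemma 3.6]{lenci2005} (your iteration of $|\log(Z_{N+1}/Z_N)|\le\|h\|$ amounts to the same estimate). The paper implements exactly this with $\tilde E_{L,m}=e^{-H_{[1,m]}/2}e^{H_{[L,m]}/2}e^{H_{[1,L-1]}/2}$ and the two-sided bound $\cG^{-1}\1\le\tilde E_{L,m}^{\dagger}\tilde E_{L,m}\le\cG\1$. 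You also correctly locate both sources of the factor $e^{2\|h\|L}$.

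The genuine gap is your justification of the conjugation step. You assert that the decoupling operator has ``norm $e^{O(\|h\|)}$ arising from the missing bond $h_{L-1,L}$ (an expansional with a single term)'', as if removing one bond of norm $\|h\|$ obviously costs only $e^{O(\|h\|)}$. It does not: with $A=H_{[1,L-1]}+H_{[L,m]}$ and $B=h_{L-1,L}$, the operator $e^{-(A+B)/2}e^{A/2}$ is, by Araki's expansional formula, a time-ordered exponential built from imaginary-time conjugations of the form $e^{-sA/2}Be^{sA/2}$, for which the only a priori bound is $e^{s\|A\|}\|B\|$ --- and $\|A\|$ grows linearly in $m$. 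That this expansional is in fact bounded \emph{uniformly in $L$ and $m$} is precisely the nontrivial content of Araki's analysis of one-dimensional systems, i.e., \cite[Proposition 4.2]{perezgarcia2020} (the same source as Lemma~\ref{lem:arakiE}), and the resulting constant is a $\Pone$-constant --- doubly exponential in $\|h\|$ --- not $e^{O(\|h\|)}$. So the step you treat as an elementary single-bond estimate is exactly where the external input must be invoked; once you cite it, the remainder of your argument goes through and coincides with the paper's.
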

\begin{proof}
We consider the operator 
\[
\tilde{E}_{L,m} = e^{-H_{[1,m]}/2} e^{H_{[L,m]}/2} e^{H_{[1,L-1]}/2}.
\]
By \cite[Proposition 4.2]{perezgarcia2020}, we know that $\cG^{-1} \1 \leq \tilde{E}_{L,m}^{\dagger} \tilde{E}_{L,m} \leq \cG \1$ where $\cG > 0$ is some $\Pone$-constant. 
This implies that
\[
\cG^{-1} e^{-H_{[1,L-1]}} \leq e^{H_{[L,m]}/2} e^{-H_{[1,m]}} e^{H_{[L,m]}/2} \leq \cG e^{-H_{[1,L-1]}}
\]
and, since $e^{-\|h\|L} \1 \leq e^{-H_{[1,L-1]}} \leq e^{\|h\|L} \1$, we then get
\[
\cG^{-1} e^{-\|h\|L} e^{-H_{[L,m]}} \leq e^{-H_{[1,m]}} \leq \cG e^{\|h\|L} e^{-H_{[L,m]}}.
\]
Taking the partial trace (which is a positive map), we get, with $\gamma = \tr e^{-H_{[L,m]}} > 0$, and recalling that $\bar\rho_{L,m} = \tr_{[L,m]} e^{-H_{[1,m]}}$,
\[
\cG^{-1} e^{-\|h\|L} \gamma \1 \leq \bar\rho_{L,m} \leq \cG e^{\|h\|L} \gamma \1.
\]
We then divide by $Z_m$ which normalizes $\bar\rho_{L,m}$
\[
\cG^{-1} e^{-\|h\|L} \frac\gamma{Z_m}\1 \leq \frac{\bar\rho_{L,m}}{Z_m} \leq \cG e^{\|h\|L} \frac\gamma{Z_m}\1.
\]
Using translation invariance, the factor $\gamma/Z_m$ can be bounded as follows
\begin{equation}\label{eq:partFuncBound}
\left|\log\left(\frac{\gamma}{Z_m}\right)\right|=\left|\log(Z_{m-L+1})-\log(Z_m)\right|\le\|H_{[1,L-1]}\|\le \|h\|(L-1)
\end{equation}
by \cite[Lemma 3.6]{lenci2005}, so since $\rho_{L,m} = \bar \rho_{L,m} / Z_m$
\[
\cG^{-1} e^{-2\|h\|L} \1 \leq \rho_{L,m} \leq \cG e^{2\|h\|L} \1.
\]
or equivalently $\|\rho_{L,m}\|, \|\rho_{L,m}^{-1}\|\le\cG e^{2\|h\|L}$ as desired.
\end{proof}

The next lemma is crucial, and shows that $\rho_L$ is an approximate eigenvector of the linear map $\cL_L^*$ with eigenvalue $e^{-f}$, for the trace distance. 

\begin{lemma}
There is a superexponentially decaying function $\varepsilon(L)$ such that
\begin{equation}
\label{eq:b1}
\left\|\cL^*_L(\rho_L) - e^{-f} \rho_L\right\|_1 \leq \varepsilon(L).
\end{equation}
\end{lemma}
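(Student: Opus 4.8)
The plan is to obtain $\rho_L$ as the limit $m\to\infty$ of the finite‑volume marginals $\rho_{L,m}$ from \eqref{eq:defrhoLm}, and to show that $\cL_L^*$ maps $\rho_{L,m-1}$ to (a rescaling of) $\rho_{L,m}$ up to an error that is superexponentially small in $L$ and uniform in $m$. Taking $m\to\infty$ and using the ratio formula $e^{-f}=\lim_m Z_m/Z_{m-1}$ from Lemma~\ref{lem:fRatioLimit} then gives \eqref{eq:b1}.

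First I would record the exact operator identity behind Araki's construction: since $e^{H_{[2,m]}/2}e^{-H_{[2,m]}}e^{H_{[2,m]}/2}=\1$ and $E_m=e^{-H_{[1,m]}/2}e^{H_{[2,m]}/2}$, one has $E_m\,(\1\otimes e^{-H_{[2,m]}})\,E_m^\dagger=e^{-H_{[1,m]}}$ (where $\1\otimes e^{-H_{[2,m]}}$ denotes $e^{-H_{[2,m]}}$ viewed as an element of $\cA_{[1,m]}$), equivalently $\1\otimes e^{-H_{[2,m]}}=E_m^{-1}e^{-H_{[1,m]}}(E_m^{-1})^\dagger$, using that $E_m$ is invertible with $(E_m^\dagger)^{-1}=(E_m^{-1})^\dagger$.

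Next, by translation invariance of the Hamiltonian, the operator $\1\otimes\bar\rho_{L,m-1}\in\cA_{[1,L]}$ appearing inside $\cL_L^*(\bar\rho_{L,m-1})$ equals $\tr_{[L+1,m]}(\1\otimes e^{-H_{[2,m]}})$; since $E_L$ is supported on $[1,L]$ it commutes with $\tr_{[L+1,m]}$, so $\cL_L^*(\bar\rho_{L,m-1})=\tr_{[L,m]}\big(E_L(\1\otimes e^{-H_{[2,m]}})E_L^\dagger\big)=\tr_{[L,m]}(P\,e^{-H_{[1,m]}}P^\dagger)$ with $P:=E_LE_m^{-1}$, whereas $\bar\rho_{L,m}=\tr_{[L,m]}(e^{-H_{[1,m]}})$. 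Writing $PXP^\dagger-X=(P-\1)XP^\dagger+X(P^\dagger-\1)$ for $X=e^{-H_{[1,m]}}$, using $\|P-\1\|=\|E_LE_m^{-1}-\1\|\le\varepsilon(L)$ from Lemma~\ref{lem:arakiE}(iii), $\|P\|\le1+\varepsilon(L)$, $\|e^{-H_{[1,m]}}\|_1=Z_m$, and contractivity of the partial trace in trace norm, I get $\|\cL_L^*(\bar\rho_{L,m-1})-\bar\rho_{L,m}\|_1\le C\varepsilon(L)Z_m$ for a universal constant $C$. Dividing by $Z_m$ and using $\bar\rho_{L,m'}=Z_{m'}\rho_{L,m'}$ together with linearity of $\cL_L^*$ gives $\big\|\tfrac{Z_{m-1}}{Z_m}\cL_L^*(\rho_{L,m-1})-\rho_{L,m}\big\|_1\le C\varepsilon(L)$.

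Finally I would let $m\to\infty$: by \eqref{eq:rhoL} both $\rho_{L,m}$ and $\rho_{L,m-1}$ converge to $\rho_L$, the map $\cL_L^*$ is continuous, and $Z_{m-1}/Z_m\to e^{f}$ by Lemma~\ref{lem:fRatioLimit}, so $\|e^{f}\cL_L^*(\rho_L)-\rho_L\|_1\le C\varepsilon(L)$; multiplying by $e^{-f}\le e^{\|h\|}$ (Lemma~\ref{lem:fRatioLimit}) yields \eqref{eq:b1} with the function $Ce^{\|h\|}\varepsilon(L)$, which is again superexponentially decaying since $Ce^{\|h\|}$ is a $\Pone$-constant. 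I expect the only delicate points to be the bookkeeping of which interval each operator is supported on—in particular the translation‑invariance identification of $\bar\rho_{L,m-1}$ with a marginal of $e^{-H_{[2,m]}}$—and the choice to control the replacement of $E_m$ by $E_L$ through $\|E_LE_m^{-1}-\1\|$ rather than the cruder $\|E_L-E_m\|$, which is what keeps the final bound free of a factor of the local dimension $d$. Everything else is routine.
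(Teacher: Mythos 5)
Your proposal is correct and follows essentially the same route as the paper: the same identity $\cL_L^*(\bar\rho_{L,m-1})=\tr_{[L,m]}\bigl(P\,e^{-H_{[1,m]}}P^{\dagger}\bigr)$ with $P=E_LE_m^{-1}$, control via $\|E_LE_m^{-1}-\1\|\le\varepsilon(L)$ from Lemma~\ref{lem:arakiE}(iii), and the ratio formula of Lemma~\ref{lem:fRatioLimit} in the limit $m\to\infty$ (the paper indexes with $m,m+1$ rather than $m-1,m$). The only cosmetic difference is that you bound $\|PXP^{\dagger}-X\|_1$ directly by H\"older and contractivity of the partial trace, whereas the paper runs the dual argument with test operators $S$ and the operator-norm bound on $X^{\dagger}SX-S$; these are equivalent, and your prefactor $2+\varepsilon(L)$ (or $\|P\|\le\cG^2$) is in any case absorbed into a $\Pone$-constant.
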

\begin{proof}
For any $m > L$, we have, with $X = E_L E_{m+1}^{-1} \in \cA_{[1,m+1]}$
\[
\begin{aligned}
\cL^*_L(\bar\rho_{L,m}) &= \tr_L(E_L \tr_{[L+1,m+1]} e^{-H_{[2,m+1]}} E_L^{\dagger})\\
&= \tr_{[L,m+1]}(E_L e^{-H_{[2,m+1]}} E_L^{\dagger})\\
&= \tr_{[L,m+1]}(XE_{m+1} e^{-H_{[2,m+1]}} E_{m+1}^{\dagger} X^{\dagger})\\
&= \tr_{[L,m+1]}(X e^{-H_{[1,m+1]}} X^{\dagger}).
\end{aligned}
\]
By Lemma~\ref{lem:arakiE} we have $\|X - \1\| \leq \varepsilon(L)$ for some superexponentially decaying function $\varepsilon(L)$ and $\|X\|\le\cG^2$ for some $\Pone$-constant $\cG$.
Let $Z_m = \tr e^{-H_{[1,m]}}$ so that $\rho_{L,m} = \bar\rho_{L,m} / Z_m$. Let $S$ be an arbitrary Hermitian operator in $\cA_{[1,L-1]}$. Then, interpreting as necessary $S \in \cA_{[1,m+1]}$ by tensoring with the identity and denoting the Hilbert-Schmidt inner product as $\<A,B\>=\tr[A^*B]$, we have
\begin{equation}
\label{eq:b4}
\begin{aligned}
\< \cL^*_L(\rho_{L,m}) , S \> &= \frac{1}{Z_m} \< e^{-H_{[1,m+1]}} , X^{\dagger} S X \>\\
&= \frac{Z_{m+1}}{Z_m} \< \frac{e^{-H_{[1,m+1]}}}{Z_{m+1}} , X^{\dagger} S X \>\\
&= \frac{Z_{m+1}}{Z_m} \< \rho_{L,m+1} , S \> + \frac{Z_{m+1}}{Z_m} \< \frac{e^{-H_{[1,m+1]}}}{Z_{m+1}} , X^{\dagger} S X - S \>.
\end{aligned}
\end{equation}
Note that 
\[
\begin{aligned}
\|X^{\dagger} S X - S\| &= \|(X^{\dagger} - \1) S X + S(X-\1)\|\\
&\leq \|X-\1\|\|S\|(\|X\|+1) \leq \varepsilon'(L) \|S\|
\end{aligned}
\]
where $\varepsilon'(L) = \varepsilon(L)(1+\cG^2)$ is a superexponentially decaying function. Since $Z_{m+1}/Z_m$ is bounded by $\exp(\|h\|)$ (see Equation \eqref{eq:partFuncBound}) and $\tr \rho_{L,m+1} = 1$, the second term in the last line of \eqref{eq:b4} has a magnitude which is at most $\varepsilon''(L) \|S\|$, where $\varepsilon''(L)$ is a superexponentially decaying function. Thus this shows that for any $S$, we have
\[
\left| \< \cL_L^*(\rho_{L,m}) - \frac{Z_{m+1}}{Z_m} \rho_{L,m+1} , S \> \right| \leq \varepsilon''(L) \|S\|
\]
which is the same as
\[
\|\cL_L^*(\rho_{L,m}) - \frac{Z_{m+1}}{Z_m} \rho_{L,m+1} \|_1 \leq \varepsilon''(L).
\]
Letting $m\to \infty$ gives us \eqref{eq:b1} as desired.
\end{proof}

The following corollary shows that $\rho_L$ is an approximate eigenvector  of $\cL_L^*$ in the positive semidefinite order.

\begin{corollary}\label{cor:LFPSandwich}
There is a superexponentially decaying function $\varepsilon(L)$ such that
\[
(1-\varepsilon(L)) e^{-f} \rho_L \leq \cL_L^*(\rho_L) \leq (1+\varepsilon(L)) e^{-f} \rho_L.
\]
\end{corollary}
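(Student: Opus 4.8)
The plan is to upgrade the trace-norm estimate $\|\cL^*_L(\rho_L) - e^{-f}\rho_L\|_1 \leq \varepsilon(L)$ from the preceding lemma to a sandwich in the positive semidefinite order, at the price of a multiplicative factor governed by $\|\rho_L^{-1}\|$, which Lemma~\ref{lem:cond} controls.

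First I would record that $\rho_L$ is positive definite with a controlled inverse. Since $\rho_L = \lim_{m\to\infty}\rho_{L,m}$ in the (finite-dimensional) space $\cA_{[1,L-1]}$, and Lemma~\ref{lem:cond} gives $\|\rho_{L,m}^{-1}\| \leq \cG e^{2\|h\|L}$ uniformly in $m$, continuity of eigenvalues yields that the smallest eigenvalue of $\rho_L$ is at least $(\cG e^{2\|h\|L})^{-1} > 0$; hence $\rho_L$ is invertible, $\|\rho_L^{-1}\| \leq \cG e^{2\|h\|L}$, and in particular $\|\rho_L^{-1}\|\,\rho_L \geq \1$.

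Next, set $\Delta = \cL^*_L(\rho_L) - e^{-f}\rho_L$, a Hermitian operator with $\|\Delta\| \leq \|\Delta\|_1 \leq \varepsilon(L)$. Then $-\|\Delta\|\1 \leq \Delta \leq \|\Delta\|\1$, and combining this with $\|\rho_L^{-1}\|\,\rho_L \geq \1$ gives $-\|\Delta\|\,\|\rho_L^{-1}\|\,\rho_L \leq \Delta \leq \|\Delta\|\,\|\rho_L^{-1}\|\,\rho_L$. Adding $e^{-f}\rho_L$ and factoring out $e^{-f}\rho_L$ yields
\[
\bigl(1 - e^f\|\rho_L^{-1}\|\,\|\Delta\|\bigr)\,e^{-f}\rho_L \;\leq\; \cL^*_L(\rho_L) \;\leq\; \bigl(1 + e^f\|\rho_L^{-1}\|\,\|\Delta\|\bigr)\,e^{-f}\rho_L .
\]
Finally I would set $\varepsilon'(L) = e^f\|\rho_L^{-1}\|\,\varepsilon(L)$ and check it is superexponentially decaying: by Lemma~\ref{lem:fRatioLimit} we have $e^f \leq e^{\|h\|}$, a $\Pone$-constant; by the above $\|\rho_L^{-1}\| \leq \cG e^{2\|h\|L}$, a $\Pone$-constant times the factor $e^{2\|h\|L}$ which is absorbed by enlarging the constant $C_3$ in Definition~\ref{def:Ponecst}; and $\varepsilon(L)$ is superexponentially decaying. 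By the closure properties noted right after Definition~\ref{def:Ponecst}, the product is superexponentially decaying, and renaming $\varepsilon' \to \varepsilon$ gives the claim.

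This corollary is essentially bookkeeping: the only point needing a moment of care is that the $e^{2\|h\|L}$ blow-up coming from $\|\rho_L^{-1}\|$ does not spoil superexponential decay — which it does not, since the factorial $(L/2)!$ in the definition dominates any such exponential. I therefore do not anticipate a genuine obstacle here.
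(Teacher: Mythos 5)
Your proposal is correct and follows essentially the same route as the paper: bound the operator norm of $\cL_L^*(\rho_L) - e^{-f}\rho_L$ by the trace norm, replace $\1$ by $\|\rho_L^{-1}\|\rho_L$ using Lemma~\ref{lem:cond}, and absorb the resulting $e^{f}$ and $e^{2\|h\|L}$ factors into a new superexponentially decaying function. Your extra step of passing the uniform bound on $\|\rho_{L,m}^{-1}\|$ to the limit $\rho_L$ is a detail the paper leaves implicit, but it is the same argument.
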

\begin{proof}
From \eqref{eq:b1} we get, 
\[
\begin{aligned}
\|\cL_L^*(\rho_L) - e^{-f} \rho_L\| \leq \|\cL_L^*(\rho_L) - e^{-f} \rho_L\|_1
\leq \varepsilon_1(L)
\end{aligned}
\]
The above can be rewritten as
\[
- \varepsilon_1(L) \1 \leq \cL_L^*(\rho_L) - e^{-f} \rho_L \leq \varepsilon_1(L)  \1.
\]
Writing $\1 \leq \|\rho_L^{-1}\| \rho_L$ the above implies
\[
- \varepsilon_1(L) \|\rho_L^{-1}\| \rho_L \leq \cL_L^*(\rho_L) - e^{-f} \rho_L \leq \varepsilon_1(L) \|\rho_L^{-1}\| \rho_L.
\]
Using Lemma~\ref{lem:cond} to bound $\|\rho_L^{-1}\|$ and $e^f\le e^{\|h\|}$, we obtain that $\varepsilon(L) := e^f \varepsilon_1(L) \|\rho_L^{-1}\|$ is still superexponentially decaying, and we have
\begin{equation}
\label{eq:LLQineq}
e^{-f}(1-\varepsilon(L)) \rho_L \leq \cL_L^*(\rho_L) \leq e^{-f}(1+\varepsilon(L)) \rho_L.
\end{equation}
\end{proof}

We can now finish the proof of Theorem~\ref{thm:main}.

\begin{proof}[Proof of Theorem~\ref{thm:main}]
Since $\cL_L$ is a positive map with the same spectral radius as $\cL_L^*$, there exists $w_L \geq 0$ such that $\cL_L(w_L) = r_L w_L$ \cite[Theorem 2.5]{evans1978}. 
Then, we have
\begin{equation}
\label{eq:collatzwielandt}
r_L \<w_L, \rho_L\> = \<\cL_L(w_L), \rho_L\> = \<w_L, \cL_L^*(\rho_L)\> \leq e^{-f}(1+\varepsilon(L)) \<w_L, \rho_L\>.
\end{equation}
Since $\<w_L,\rho_L\> > 0$ (because $w_L \geq 0$ and $\rho_L > 0$), we get from the above that $r_L \leq e^{-f}(1+\varepsilon(L))$. 
In a similar way, we get $r_L \geq e^{-f}(1-\varepsilon(L))$, and so
\[
e^{-f}(1-\varepsilon(L)) \leq r_L \leq e^{-f}(1+\varepsilon(L)),
\]
as desired.
\end{proof}

\begin{remark}
What we have implicitly used in the proof of Theorem~\ref{thm:main} above  (Equation \eqref{eq:collatzwielandt}) is the following well-known variational characterization of the spectral radius of a positive map $\cT$, known as the \emph{Collatz-Wielandt} formula:
\[
\begin{aligned}
\r(\cT) &= \sup_{w > 0} \sup \{ \lambda \geq 0 : \cT(w) \geq \lambda w \}\\
&= \inf_{w > 0} \inf\{\lambda \geq 0 : \cT(w) \leq \lambda w\}.
\end{aligned}
\]
It holds for any irreducible positive operator $\cT$ and we have denoted $\r(\cT)$ its spectral radius, see e.g., \cite{wolf}. In our setting, we do not need to establish irreducibility as we are simply looking for a bound, and because the ``candidate'' eigenvector $\rho_L$ is positive definite. However to get that $\rho_L$ is close to the eigenvector of $\cL_L^*$ associated to its spectral radius, we do establish irreducibility (and even the stronger property of primitivity) of $\cL_L^*$ for sufficiently large $L$ in the next section, see Lemma~\ref{lem:lbLL}.
\end{remark}

\section{Convergence speed for the eigenvector of \texorpdfstring{$\mathcal{L}_{L}^*$}{} (proof of Theorem~\ref{thm:mainState})}
\label{sec:proof-eigenvector}

In this section, we prove that the eigenvector of $\cL_L^*$ associated to its spectral radius is superexponentially close in trace distance to $\rho_L$, the marginal of the Gibbs state on $[1,L-1]$. The key to obtain this result is to prove that the linear maps $\cL_L^*$ are \emph{primitive}, i.e., that for a sufficiently large integer $k$, $(\cL_L^*)^k = \cL_L^* \circ \dots \circ \cL_L^*$ maps any positive semidefinite operator, to a positive \emph{definite} one. It is well-known in the Perron-Frobenius theory for positive operators \cite{wolf}, that primitivity implies that the spectral radius is a nondegenerate eigenvalue. The next theorem gives a quantitative version of this fact, and shows that a strengthening of the primitivity assumption, implies that approximate eigenvectors (in the sense of \eqref{eq:rhoLapxevec}) are close to the eigenvector associated to the spectral radius. Whereas such a result seems quite natural, we were not able to find it in the literature, so we include here a statement for general positive operators.

\begin{theorem}
\label{thm:apxevector}
Let $\H \cong \CC^n$ be a finite-dimensional Hilbert space, and let $\cT:\cB(\H) \to \cB(\H)$ be a Hermitian-preserving positive linear map, i.e., $\cT(\cB_{+}(\H)) \subset \cB_{+}(\H)$, where $\cB_+(\H)$ is the cone of Hermitian positive semidefinite operators on $\H$.

Assume there exists an integer $k$ such that the following is true: for any $\psi \in \H$, $\cT^k(\psi \psi^*) = \cT \circ \dots \circ \cT(\psi\psi^*) > 0$, and more precisely
\begin{equation}
\label{eq:primquant}
\frac{\lambda_{\max}(\cT^k(\psi \psi^*))}{\lambda_{\min}(\cT^k(\psi \psi^*))} \leq C
\qquad \forall \psi \in \H.
\end{equation}

Then the eigenvector $v \in \cB_+(\H)$ of $\cT$ with $\tr(v)=1$ associated to its spectral radius is positive definite and unique. Furthermore, if $x > 0$ satisfies
\begin{equation}
\label{eq:a1}
(1+\eps)^{-1} r x \leq \cT x \leq (1+\eps) r x
\end{equation}
for some constant $r > 0$ and $\tr x=1$, then $\left\|x - v\right\|_1 \leq 2kC\eps$.
\end{theorem}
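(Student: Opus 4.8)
The plan is to run a Perron--Frobenius/Birkhoff argument in the \emph{Hilbert projective metric}. For positive definite operators set $M(A/B)=\min\{\mu>0:A\le\mu B\}$, $m(A/B)=\max\{\mu>0:\mu B\le A\}$ and $\delta(A,B)=\log\!\big(M(A/B)/m(A/B)\big)$; I will use that $\delta$ is a genuine metric on positive definite operators of fixed trace (so $\delta(A,B)=0$ with $\tr A=\tr B$ forces $A=B$), that $\delta(X,\1)=\log\!\big(\lambda_{\max}(X)/\lambda_{\min}(X)\big)$, and Birkhoff's contraction theorem: a positive linear map whose image has \emph{projective diameter} $\Delta<\infty$ is a strict $\delta$-contraction with ratio $\tanh(\Delta/4)$. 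The first step is to read off from \eqref{eq:primquant} that $\cT^{k}$ has small projective diameter. Writing an arbitrary nonzero $A\ge0$ in its eigenbasis, $A=\sum_i a_i\psi_i\psi_i^{*}$, and using $\lambda_{\min}\!\big(\cT^{k}(\psi_i\psi_i^{*})\big)\1\le\cT^{k}(\psi_i\psi_i^{*})\le C\,\lambda_{\min}\!\big(\cT^{k}(\psi_i\psi_i^{*})\big)\1$ term by term, one gets $\lambda_{\max}(\cT^{k}A)\le C\,\lambda_{\min}(\cT^{k}A)$, i.e.\ $\delta(\cT^{k}A,\1)\le\log C$ for \emph{every} nonzero $A\ge0$. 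Hence by the triangle inequality $\Delta(\cT^{k})\le2\log C$, so $\cT^{k}$ is a $\delta$-contraction with ratio $\kappa\le\tanh\!\big(\tfrac12\log C\big)=\tfrac{C-1}{C+1}$, so $1-\kappa\ge\tfrac{2}{C+1}$.

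Next I would settle existence, positive-definiteness and uniqueness of $v$. Since \eqref{eq:primquant} says $\cT^{k}$ maps $\cB_+(\H)\setminus\{0\}$ into the interior, $\cT^{k}$ is primitive, so by the Perron--Frobenius theorem for primitive positive maps on the cone $\cB_+(\H)$ (e.g.\ \cite{wolf,evans1978}) $\r(\cT^{k})$ is a simple, strictly dominant eigenvalue with positive definite eigenvector; averaging that eigenvector along its $\cT$-orbit, i.e.\ replacing it by the image under $\sum_{j=0}^{k-1}\r(\cT)^{-j}\cT^{j}$, produces a positive definite eigenvector $v$ of $\cT$ itself, necessarily for the eigenvalue $\r(\cT)=\r(\cT^{k})^{1/k}$, and simplicity of $\r(\cT^{k})$ forces every positive semidefinite eigenvector of $\cT$ to be proportional to $v$. (To avoid quoting Perron--Frobenius: $\phi(\sigma):=\cT\sigma/\tr(\cT\sigma)$ is well defined and continuous on the set of density matrices because $\cT^{k}\sigma\neq0$, and $\phi^{k}$ maps it into the \emph{compact} set $\{\sigma:\lambda_{\min}(\sigma)\ge\lambda_{\max}(\sigma)/C\}$, which lies in the interior of the cone and is therefore $\delta$-complete; on it $\phi^{k}$ is the $\kappa$-contraction above, so Banach's theorem gives a unique fixed point, which $\phi$ also fixes since $\phi$ commutes with $\phi^{k}$.)

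For the quantitative bound, monotonicity of the positive map $\cT$ applied iteratively to \eqref{eq:a1} gives $(1+\eps)^{-k}r^{k}x\le\cT^{k}x\le(1+\eps)^{k}r^{k}x$, hence, $\delta$ being scale-invariant, $\delta(\cT^{k}x,x)\le2k\log(1+\eps)$. Writing the fixed-point relation as $\phi(v)=v$ (so $\phi^{k}(v)=v$) and using the $\kappa$-contraction of $\cT^{k}$ (equivalently $\phi^{k}$) applied to $x$ and $v$, which are both positive definite,
\[
\delta(x,v)\le\delta\!\big(x,\phi^{k}(x)\big)+\delta\!\big(\phi^{k}(x),\phi^{k}(v)\big)\le2k\log(1+\eps)+\kappa\,\delta(x,v),
\]
so $\delta(x,v)\le\dfrac{2k\log(1+\eps)}{1-\kappa}\le k(C+1)\log(1+\eps)\le2kC\eps$, using $1-\kappa\ge\tfrac{2}{C+1}$, $\log(1+\eps)\le\eps$, and $C\ge1$. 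Finally I would pass to the trace norm with the sharp comparison: for trace-one $A,B$, write $A=B^{1/2}YB^{1/2}$ with $Y$ having eigenvalues $\mu_1\ge\cdots\ge\mu_n>0$; then $\mu_n\le1\le\mu_1$, $\mu_1/\mu_n=e^{\delta(A,B)}$, and $\tfrac12\|A-B\|_1=\tr(A-B)_+\le\min(\mu_1-1,\,1-\mu_n)\le\tanh\!\big(\delta(A,B)/2\big)$, so $\|A-B\|_1\le2\tanh\!\big(\delta(A,B)/2\big)\le\delta(A,B)$; with $A=x$, $B=v$ this gives $\|x-v\|_1\le\delta(x,v)\le2kC\eps$.

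I expect the main obstacle to be the Birkhoff step: turning the \emph{pointwise} quantitative primitivity \eqref{eq:primquant} into a bound on the projective diameter of $\cT^{k}$ and hence an \emph{explicit} contraction ratio $\kappa<1$. This is precisely what upgrades ``approximate eigenvector'' to ``close to $v$'' with a dimension-free constant, whereas generic perturbation bounds for non-normal operators carry an unusable dimension dependence. The remaining points are routine bookkeeping: the completeness issue in the fixed-point argument, handled by restricting to the compact slice of density matrices uniformly bounded away from the boundary of the cone (exactly the image of $\phi^{k}$), and being careful to use the sharp estimate $\|A-B\|_1\le2\tanh(\delta/2)\le\delta$ rather than a lossy $e^{\delta}-1$ bound when returning from the Hilbert metric to trace distance, so that the final constant is exactly $2kC\eps$.
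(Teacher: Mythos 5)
Your proof is correct and follows essentially the same route as the paper: bound the projective diameter of $\cT^k$ by $2\log C$ from \eqref{eq:primquant}, apply Birkhoff's contraction theorem together with the triangle inequality to get $d_H(x,v)\le 2k\log(1+\eps)/(1-\kappa)$, and then convert to trace distance. The only (minor, and in fact welcome) deviations are that you extend \eqref{eq:primquant} from rank-one projectors to all nonzero PSD operators by an explicit eigendecomposition rather than quasi-convexity of $d_H$, and your final passage from the Hilbert metric to the trace norm via $\|A-B\|_1\le 2\tanh(d_H(A,B)/2)\le d_H(A,B)$ is sharper and cleaner than the paper's $e^{\pm\eta}$-sandwich argument.
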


In subsection~\ref{sec:proofs1}, we prove Theorem~\ref{thm:apxevector} using as a main tool the \emph{Hilbert projective metric}. In subsection~\ref{sec:proofs2} we show that the maps $\cL^*_L$ satisfy the assumption \eqref{eq:primquant} and use this to complete the proof of Theorem~\ref{thm:mainState}.

\subsection{Proof of Theorem~\ref{thm:apxevector}}
\label{sec:proofs1}

We start by recalling the definition of the Hilbert projective metric on $\cB_{+}(\H)\setminus\{0\}$. We adopt notations from \cite{reeb2011hilbert}.

Given $x,y \in \cB_{+}(\H)$, $x,y\neq0$ we let 
\[
\begin{aligned}
\sup(x/y) &= \inf \{ \lambda \ge 0 : x\leq \lambda y \} \\
\inf(x/y) &= \sup \{ \lambda \ge 0: \lambda y \leq x\}  
\end{aligned}
\]
Note that $\inf(x/y) = 1/\sup(y/x)$. The Hilbert metric is defined by
\[
d_H(x,y) = \log\left(\frac{\sup(x/y)}{\inf(x/y)}\right).
\]
It is easy to check that the following properties are satisfied:
\begin{itemize}
\item $d_H(x,y) = d_H(y,x)$
\item $d_H(\lambda x, \mu y) = d_H(x,y)$ for any $\lambda, \mu > 0$
\item Triangle inequality: $d_H(x,z) \leq d_H(x,y) + d_H(y,z)$
\item Quasi-convexity (see e.g., \cite[Lemma 6.2]{kohlberg1982contraction}): $d_H(x+y,z) \leq \max(d_H(x,z),d_H(y,z))$
\end{itemize}

If $\cT$ is a positive map, it is immediate to check that $\sup(\cT x/\cT y) \leq \sup(x/y)$ and $\inf(\cT x/\cT y) \geq \inf(x/y)$, which implies $d_H(\cT x,\cT y) \leq d_H(x,y)$ for any $x,y > 0$. For primitive maps, one can show that $\cT$ is in fact a contraction for the Hilbert metric.

\begin{theorem}[{Birkhoff, see e.g., \cite[Theorem 4]{reeb2011hilbert}}]
Let $\cT:\cB(\H)\to \cB(\H)$ be a positive map and let
\begin{equation}
\label{eq:DeltaT}
\Delta(\cT) = \sup_{\substack{x,y \geq 0\\ x,y \neq 0}} d_H(\cT x,\cT y).
\end{equation}
Then for any $x,y \geq 0$, 
\begin{equation}
\label{eq:contract}
d_H(\cT x,\cT y) \leq \tanh(\Delta(\cT)/4) d_H(x,y).
\end{equation}
\end{theorem}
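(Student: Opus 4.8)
The plan is to follow the standard route to Birkhoff's contraction theorem: reduce first to a two–dimensional ``cone'' and then to a one–variable scalar inequality that is exactly what forces the constant $\tanh(\Delta(\cT)/4)$.

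First I would dispose of the degenerate cases. If $\Delta(\cT)=\infty$ the claim is just the monotonicity $d_H(\cT x,\cT y)\le d_H(x,y)$ already observed before the theorem, and if $d_H(x,y)=\infty$ there is nothing to prove; so assume both are finite, and (restricting to the common support of $x$ and $y$) that $x,y>0$. Put $M=\sup(x/y)$, $m=\inf(x/y)$, so $d_H(x,y)=\log(M/m)$. If $M=m$ then $x=My$, hence $\cT x=M\,\cT y$ and $d_H(\cT x,\cT y)=0$. So assume $M>m$ and set $u=x-my\ge0$, $v=My-x\ge0$, which are nonzero and satisfy $x=\tfrac{Mu+mv}{M-m}$, $y=\tfrac{u+v}{M-m}$, hence $\cT x=\tfrac{M\cT u+m\cT v}{M-m}$ and $\cT y=\tfrac{\cT u+\cT v}{M-m}$. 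If $\cT u=0$ or $\cT v=0$, then $\cT x$ is a positive multiple of $\cT y$ and $d_H(\cT x,\cT y)=0$; otherwise put $s=\sup(\cT u/\cT v)$, $t=\inf(\cT u/\cT v)$, so that $t\,\cT v\le\cT u\le s\,\cT v$ and $\log(s/t)=d_H(\cT u,\cT v)\le\Delta(\cT)<\infty$ (in particular $0<t\le s<\infty$).

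Next, from the decomposition to a scalar inequality. Since $\cT x\le\lambda\,\cT y$ is equivalent to $(M-\lambda)\cT u\le(\lambda-m)\cT v$, for $m\le\lambda\le M$ the bound $\cT u\le s\,\cT v$ makes it enough that $(M-\lambda)s\le\lambda-m$, i.e.\ $\lambda\ge\tfrac{Ms+m}{s+1}$; hence $\sup(\cT x/\cT y)\le\tfrac{Ms+m}{s+1}$, and symmetrically $\inf(\cT x/\cT y)\ge\tfrac{Mt+m}{t+1}$ using $\cT u\ge t\,\cT v$. Therefore
\[
d_H(\cT x,\cT y)\ \le\ \log\frac{(Ms+m)(t+1)}{(Mt+m)(s+1)},
\]
and, since $\tanh$ is increasing and $\log(s/t)\le\Delta(\cT)$, it remains to prove that for all $M>m>0$ and $s>t>0$,
\[
\log\frac{(Ms+m)(t+1)}{(Mt+m)(s+1)}\ \le\ \tanh\!\Big(\tfrac14\log\tfrac st\Big)\,\log\tfrac Mm .
\]
Writing $g(\lambda)=\tfrac{M\lambda+m}{\lambda+1}$, the left-hand side is $\log g(s)-\log g(t)$. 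I would fix the ratio $s/t=e^{\delta}$ and optimise over the remaining scale: the substitution $t=a/M$ turns the left-hand side into $\log\tfrac{(e^{\delta}a+m)(a+M)}{(a+m)(e^{\delta}a+M)}$, whose unique interior critical point (it tends to $0$ as $a\to0$ or $a\to\infty$) is $a=\sqrt{Mm}\,e^{-\delta/2}$, i.e.\ $t=\sqrt{m/M}\,e^{-\delta/2}$, $s=\sqrt{m/M}\,e^{\delta/2}$. Substituting back, $g(s)/g(t)$ turns out to be a perfect square, and with $\rho=\sqrt{M/m}\ge1$, $E=e^{\delta/2}\ge1$ the inequality collapses to
\[
\log\frac{\rho E+1}{\rho+E}\ \le\ \frac{E-1}{E+1}\,\log\rho ,
\]
which I would close by observing that both sides vanish at $\rho=1$ while the derivative in $\rho$ of (right minus left) equals $(E-1)\,\dfrac{E(\rho-1)^2}{(E+1)\,\rho\,(\rho E+1)(\rho+E)}\ge0$.

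The two-dimensional reduction is routine bookkeeping; the genuine obstacle is the last step — identifying the extremal configuration $t=\sqrt{m/M}\,e^{-\delta/2}$, $s=\sqrt{m/M}\,e^{\delta/2}$ and verifying the resulting scalar inequality — since this is precisely where the sharp constant $\tanh(\Delta(\cT)/4)$, rather than a weaker contraction factor, is forced: a cruder estimate that only bounds the logarithmic derivative of $g$ is too lossy because it does not use that $g$ maps $(0,\infty)$ onto the bounded interval $(m,M)$. A small amount of care is also needed for the degenerate cases ($\Delta(\cT)=\infty$, $d_H(x,y)=\infty$, $M=m$, $s=t$, $\cT u=0$ or $\cT v=0$), in each of which one side of the claimed inequality is trivially $0$ or $+\infty$.
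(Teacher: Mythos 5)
Your proof is correct. The paper does not prove this statement at all --- it is quoted as a classical result of Birkhoff with a citation to Reeb--Szehr--Wolf --- so there is no in-paper argument to compare against; what you give is the standard self-contained proof of Birkhoff's contraction theorem via the decomposition $u=x-my$, $v=My-x$. I checked the key steps: the bounds $\sup(\cT x/\cT y)\le\frac{Ms+m}{s+1}$ and $\inf(\cT x/\cT y)\ge\frac{Mt+m}{t+1}$ follow as you say from $(M-\lambda)\cT u\le(\lambda-m)\cT v$; the critical point of the scale optimisation is indeed $a=\sqrt{Mm}\,e^{-\delta/2}$ (the equation $e^\delta(a+m)(a+M)=(e^\delta a+m)(e^\delta a+M)$ reduces to $e^\delta a^2=Mm$), at which $g(s)/g(t)=\bigl(\tfrac{\rho E+1}{\rho+E}\bigr)^2$; and the derivative of (right minus left) in the final scalar inequality does simplify to $\frac{E(E-1)(\rho-1)^2}{(E+1)\rho(\rho E+1)(\rho+E)}\ge0$, with $\tanh(\delta/4)=\frac{E-1}{E+1}$ for $E=e^{\delta/2}$. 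The degenerate cases ($\Delta(\cT)=\infty$, $M=m$, $\cT u=0$ or $\cT v=0$) are all correctly disposed of, so the argument is complete.
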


It is easy to check that if $\cT$ is primitive and satisfies assumption \eqref{eq:primquant}, then $\Delta(\cT) < \infty$. Indeed, we can easily prove
\begin{proposition}
\label{prop:boundDeltaT}
If $\cT:\cB(\H)\to \cB(\H)$ is a positive map satisfying \eqref{eq:primquant}, then $\Delta(\cT) \leq 2\log(C)$.
\end{proposition}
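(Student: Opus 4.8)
The plan is to reduce the supremum defining $\Delta(\cT)$ to a supremum over rank-one inputs, and then to bound the Hilbert distance between two positive definite operators by passing through the identity $\1$ via the triangle inequality.

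\textbf{Step 1 (reduction to rank-one inputs).} Any nonzero $x\in\cB_+(\H)$ admits a (finite, at most $n$-term) decomposition $x=\sum_i \psi_i\psi_i^*$ into rank-one positive operators, and likewise $y=\sum_j\phi_j\phi_j^*$. By linearity of $\cT$, $d_H(\cT x,\cT y)=d_H\bigl(\sum_i\cT(\psi_i\psi_i^*),\sum_j\cT(\phi_j\phi_j^*)\bigr)$. Applying quasi-convexity of $d_H$ in the first argument (and, using symmetry $d_H(a,b)=d_H(b,a)$, in the second) repeatedly, this is $\le\max_{i,j}d_H(\cT(\psi_i\psi_i^*),\cT(\phi_j\phi_j^*))$. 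Hence $\Delta(\cT)=\sup_{\psi,\phi\neq0}d_H(\cT(\psi\psi^*),\cT(\phi\phi^*))$; the scale invariance $d_H(\lambda a,\mu b)=d_H(a,b)$ means we need not worry about normalizing the $\psi_i,\phi_j$.

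\textbf{Step 2 (distance to the identity).} For $A>0$ one has $\sup(A/\1)=\inf\{\lambda\ge0:A\le\lambda\1\}=\lambda_{\max}(A)$ and $\inf(A/\1)=\lambda_{\min}(A)$, so $d_H(A,\1)=\log\bigl(\lambda_{\max}(A)/\lambda_{\min}(A)\bigr)$. By hypothesis \eqref{eq:primquant} (applied to $\cT$, i.e.\ with $k=1$; for general $k$ one applies the proposition to $\cT^k$), for every $\psi\neq0$ the operator $\cT(\psi\psi^*)$ is positive definite with $\lambda_{\max}(\cT(\psi\psi^*))/\lambda_{\min}(\cT(\psi\psi^*))\le C$, so $d_H(\cT(\psi\psi^*),\1)\le\log C$. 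In particular all Hilbert distances appearing are well defined.

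\textbf{Step 3 (triangle inequality).} Combining, $d_H(\cT(\psi\psi^*),\cT(\phi\phi^*))\le d_H(\cT(\psi\psi^*),\1)+d_H(\1,\cT(\phi\phi^*))\le 2\log C$, and taking the supremum over $\psi,\phi$ using Step 1 gives $\Delta(\cT)\le 2\log C$.

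I do not expect a genuine obstacle here; the only points requiring care are the rank-one reduction — which must use quasi-convexity in \emph{both} arguments together with scale invariance so that an unnormalized spectral decomposition suffices — and checking that $\cT(\psi\psi^*)\neq0$ (indeed $>0$), which is precisely what \eqref{eq:primquant} supplies, so that $d_H$ is finite and the manipulations are legitimate.
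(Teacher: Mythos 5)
Your proof is correct and follows essentially the same route as the paper's: restrict to rank-one inputs via quasi-convexity, use $d_H(a,\1)=\log(\lambda_{\max}(a)/\lambda_{\min}(a))$ together with \eqref{eq:primquant}, and conclude by the triangle inequality through $\1$. Your reading that the proposition is applied to the map $\cT^k$ (so that the ratio bound holds for a single application) matches how the paper uses it.
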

\begin{proof}
Since $d_H$ is quasi-convex in each of its arguments, one can restrict $x$ and $y$ in the definition of $\Delta(\cT)$ to be rank-one. Furthermore, we have by the triangle inequality
\[
d_H(\cT x, \cT y) \leq d_H(\cT x,\1) + d_H(\1,\cT y).
\]
Note that $d_H(a,\1) = \log(\lambda_{\max}(a) / \lambda_{\min}(a))$, and so equation \eqref{eq:primquant} says that $d_H(\cT x, \1) \leq \log(C)$ for any rank-one positive operator $x$. This proves the claim.
\end{proof}

Since the Hilbert metric is projective, we know that $v$ is a positive eigenvector of $\cT$ if, and only if, $d_H(\cT v,v) = 0$. If $\Delta(\cT) < \infty$, then it is easy to see that $\cT$ can only have one positive eigenvector, for if $v,w$ were two positive eigenvectors of $\cT$, then $d_H(\cT v,\cT w) = d_H(v,w)$ which would contradict \eqref{eq:contract}. One can make this statement more quantitative: if $d_H(\cT x,x) \leq \delta$, then $x$ is necessarily close to the unique positive eigenvector of $T$ in the Hilbert metric.

\begin{proposition}
\label{prop:q1}
Let $\cT$ be a positive map with $\Delta = \Delta(\cT) < \infty$, and let $v$ be its (unique) positive eigenvector. Assume that $x > 0$ satisfies $d_H(\cT x,x) \leq \delta$. Then $d_H(x,v) \leq \delta / (1-\tanh(\Delta/4))$.
\end{proposition}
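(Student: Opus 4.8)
The plan is to combine Birkhoff's contraction theorem \eqref{eq:contract} with the fact that $v$ is a projective fixed point of $\cT$. Write $\eta = \tanh(\Delta/4)$; since $\Delta = \Delta(\cT) < \infty$ we have $\eta < 1$, and \eqref{eq:contract} gives $d_H(\cT y,\cT z) \le \eta\, d_H(y,z)$ for all nonzero $y,z \ge 0$. Because $v > 0$ is an eigenvector, $\cT v = \mu v$ for some $\mu > 0$, so projective invariance of $d_H$ yields $d_H(\cT x, v) = d_H(\cT x, \mu v) = d_H(\cT x, \cT v) \le \eta\, d_H(x,v)$. This is the one nontrivial ingredient; the rest is a rearrangement.

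First I would check that $d_H(x,v)$ is finite, which is immediate from $x,v > 0$: one has $\sup(x/v) \le \lambda_{\max}(x)/\lambda_{\min}(v) < \infty$ and $\inf(x/v) \ge \lambda_{\min}(x)/\lambda_{\max}(v) > 0$, hence $d_H(x,v)<\infty$. With finiteness in hand, the triangle inequality together with the symmetry $d_H(x,\cT x)=d_H(\cT x,x)\le\delta$ gives
\[
d_H(x,v) \le d_H(x,\cT x) + d_H(\cT x, v) \le \delta + \eta\, d_H(x,v),
\]
and subtracting $\eta\, d_H(x,v)$ — legitimate precisely because this quantity is finite and $\eta<1$ — yields $d_H(x,v) \le \delta/(1-\eta) = \delta/(1 - \tanh(\Delta/4))$, which is the claim.

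There is essentially no hard step here: the only point requiring a moment's care is the finiteness of $d_H(x,v)$, which is what justifies the rearrangement. If one wished to avoid relying on it, one could instead telescope $d_H(x,\cT^n x) \le \sum_{k=0}^{n-1}\eta^k d_H(x,\cT x) \le \delta/(1-\eta)$ and pass to the limit $n\to\infty$ using $d_H(\cT^n x, v) = d_H(\cT^n x, \cT^n v)\le \eta^n d_H(x,v)\to 0$, but the one-line argument above is cleaner given that $v$ is already known to be a positive eigenvector with $\Delta(\cT)<\infty$.
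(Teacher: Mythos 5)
Your proof is correct and is essentially identical to the paper's: a triangle inequality through $\cT x$, the projective identity $d_H(\cT x,v)=d_H(\cT x,\cT v)$, Birkhoff's contraction, and a rearrangement. The only addition is your explicit check that $d_H(x,v)<\infty$ (needed to justify subtracting $\eta\, d_H(x,v)$), which the paper leaves implicit.
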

\begin{proof}
This is a simple triangle inequality. We have
\begin{align*}
d_H(x,v) &\leq d_H(x,\cT x) + d_H(\cT x,v) \\
&= d_H(x,\cT x) + d_H(\cT x,\cT v) \leq d_H(x,\cT x) + \tanh(\Delta/4) d_H(x,v)
\end{align*}
and so $d_H(x,v) \leq \delta / (1-\tanh(\Delta/4))$ as desired.
\end{proof}

We are now ready to prove Theorem~\ref{thm:apxevector}.

\begin{proof}[Proof of Theorem~\ref{thm:apxevector}]
By iterating \eqref{eq:a1} $k$ times we get
\[
r^k (1+\eps)^{-k} x \leq \cT^k x \leq r^k (1+\eps)^k x.
\]
This tells us that $d_H(\cT^k x, x) \leq 2k \log(1+\eps)$. Applying Prop.~\ref{prop:q1} with $\cT^k$ we get that
\begin{equation}
\label{eq:hxv1}
d_H(x,v) \leq \frac{2k \log(1+\eps)}{1-\tanh(\Delta(\cT^k)/4)} \leq 2kC\log(1+\eps).
\end{equation}
where in the last inequality we used Prop.~\ref{prop:boundDeltaT}, which asserts that $\Delta(\cT^k) \leq 2\log C$, which implies $1-\tanh(\Delta(\cT^k)/4) \geq 2/(C+1) \geq 1/C$.

Inequality \eqref{eq:hxv1} shows that $x$ and $v$ are close to each other in the Hilbert projective metric. It remains to prove a similar bound with the trace distance. We assume henceforth that $\tr x = \tr v = 1$.

Let $\eta = 2kC\log(1+\eps)$. Equation \eqref{eq:hxv1} tells us that
\[
c e^{-\eta/2} v \leq x \leq c e^{\eta/2} v
\]
for some constant $c > 0$. By taking the trace, and using the fact that $\tr x = \tr v = 1$, we see that $e^{-\eta/2} \leq c \leq e^{\eta/2}$ and so the inequality above implies
\[
e^{-\eta} v \leq x \leq e^{\eta} v.
\]
This implies that $(e^{-\eta}-1) v \leq x-v \leq (e^{\eta}-1) v$. Since $\eta = 2kC \log(1+\epsilon)$ and $2kC \geq 1$, it is easy to check that $e^{\eta}-1 \leq 2kC \eps$ and $e^{-\eta}-1 \geq -2kC \eps$. This gives us $-2kC \eps v \leq x - v \leq 2kC \eps v$ and this implies that $\|x-v\|_1 \leq 2kC \eps$.\footnote{Indeed, if $-B \leq A \leq B$ then necessarily $\|A\|_1 \leq \tr B$. This follows by noting that if $A = \sum_{\lambda} \lambda P_{\lambda}$ is a spectral decomposition of $A$, then $-\tr(BP_{\lambda}) \leq \tr(AP_{\lambda}) \leq \tr(BP_{\lambda})$ and so $\|A\|_1 = \sum_{\lambda} |\tr(AP_{\lambda})| \leq \sum_{\lambda} \tr(B P_{\lambda}) = \tr(B)$.}
\end{proof}

\subsection{Proof of Theorem~\ref{thm:mainState}}
\label{sec:proofs2}

To apply Theorem~\ref{thm:apxevector} to our map $\cL^*_L$, we need to show that condition \eqref{eq:primquant} holds.
In particular, we need a lower bound on the lowest eigenvalue of the iterated channel applied to some rank-1 projector.
We allow for a number of channel iterations that scales linearly with $L$ and a lower bound that decays exponentially in $L$.

\begin{lemma}\label{lem:lbLL}
There exist $\Pone$-constants $\cG,\cG',\cG'',\cG'''$, such that for $L\ge\exp(\cG'')$ and for all $\psi\in\cH_{[1,L-1]}$ with $\|\psi\|=1$ we have
\[
\cL^{*\cG L}_L(\psi\psi^*)\ge \frac{1}{\cG'} e^{-\cG''' L}\1
\]
\end{lemma}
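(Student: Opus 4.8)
The plan is to peel the statement down to a quantitative ``reachability'' estimate for a product of shifted Araki operators, via two reductions, and then to prove that estimate by deforming from the trivial Hamiltonian.

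\emph{Two reductions.} Let $E_L^{(j)}\in\cA_{[j,j+L-1]}$ denote the translate of $E_L$ to sites $[j,\dots,j+L-1]$; an induction on $k$ (of the kind used in the computation just before Corollary~\ref{cor:LFPSandwich}) gives the closed form
\[
\cL_L^{*k}(Q)=\tr_{[L,L+k-1]}\!\big(W_k\,(\1_{[1,k]}\otimes Q)\,W_k^{\dagger}\big),\qquad W_k:=E_L^{(1)}E_L^{(2)}\cdots E_L^{(k)},
\]
with $Q$ placed on the last $L-1$ sites and the output on $[1,L-1]$; by translation invariance and Lemma~\ref{lem:arakiE}(i), $\|W_k\|,\|W_k^{-1}\|\le\cG^{k}$. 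Next, specialise to $Q=\proj\psi$: the operator $\1_{[1,k]}\otimes\proj\psi$ has nonzero eigenvalue exactly $1$, so $W_k(\1_{[1,k]}\otimes\proj\psi)W_k^{\dagger}$ is PSD with all nonzero eigenvalues $\ge\|W_k^{-1}\|^{-2}\ge\cG^{-2k}$, hence dominates $\cG^{-2k}$ times the orthogonal projector onto its range $\cS_\psi:=W_k\big(\cH_{[1,k]}\otimes\CC\psi\big)$, a $d^{k}$-dimensional subspace of $\cH_{[1,L+k-1]}$. Since partial traces are positive, $\cL_L^{*k}(\proj\psi)\ge\cG^{-2k}\,\tr_{[L,L+k-1]}(\Pi_{\cS_\psi})$, and it remains to prove
\[
\tr_{[L,L+k-1]}(\Pi_{\cS_\psi})\ \ge\ \tfrac1{\cG''}\,e^{-\cG'''L}\,\1\qquad\text{uniformly over unit }\psi ,
\]
i.e.\ that the $[1,L-1]$-marginal of $\cS_\psi$ fills $\cH_{[1,L-1]}$ with controlled strength.

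\emph{Proving the estimate.} The output window $[1,L-1]$ is reached only by the first $L-1$ factors of $W_k$, so with $W_k=W_aW_b$, $W_a:=E_L^{(1)}\cdots E_L^{(L-1)}\in\cA_{[1,2L-2]}$, $W_b:=E_L^{(L)}\cdots E_L^{(k)}\in\cA_{[L,L+k-1]}$, one gets $\cS_\psi=W_a\big(\cH_{[1,L-1]}\otimes\cT_\psi\big)$ with $\cT_\psi:=W_b(\cH_{[L,k]}\otimes\CC\psi)\subseteq\cH_{[L,L+k-1]}$ of dimension $d^{k-L+1}$. For $h=0$ all $E$'s equal $\1$, so $\cS_\psi=\cH_{[1,k]}\otimes\CC\psi$ and $\tr_{[L,L+k-1]}(\Pi_{\cS_\psi})=d^{k-L+1}\,\1$ exactly; the job is to make this robust. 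For general $h$, one uses the quasilocality of Araki's expansionals: by Lemma~\ref{lem:arakiE}(ii)--(iii) the partial products $E_L^{(1)}\cdots E_L^{(j)}$ are within $j\,\cG^{j}\varepsilon(L)$ of the limiting operators $E_\infty^{(1)}\cdots E_\infty^{(j)}=e^{-H_{[1,\infty)}/2}e^{H_{[j+1,\infty)}/2}$, whose squares are comparable within $\cG^{\pm1}$ to the genuine Gibbs operator $e^{-H_{[1,j]}}\ge e^{-\|h\|j}\1$ (the estimate underlying Lemma~\ref{lem:cond}); pushing the $h=0$ computation through this comparison gives the displayed bound with $\cG'',\cG'''$ again $\Pone$-constants, once $k$ is taken to be $\lceil\cG_0L\rceil$ for a suitable $\Pone$-constant $\cG_0\ge2$. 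The hypothesis $L\ge e^{\cG''}$ enters exactly here, as the requirement that the superexponentially small $\varepsilon(L)$ beat the amplification $\cG^{O(k)}=\cG^{O(L)}$ accumulated in the two reductions and the telescoping; a final check that all quantities involved are $\Pone$-constants or superexponentially decaying functions (stable under products, powers and logarithms) fixes $\cG,\cG',\cG'',\cG'''$.

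\emph{Main obstacle.} The crux is making the $h=0$ computation robust. Each $E_L^{(j)}$ is invertible with a $\Pone$ bound but is \emph{neither positive nor close to the identity}, so one cannot simply pull scalars out of $W_k(\cdot)W_k^{\dagger}$ or out of the subspace overlaps --- in particular it is \emph{false} in general that $\|P_1GP_2\|_2\ge\lambda_{\min}(G)\|P_1P_2\|_2$ for projectors $P_1,P_2$ and $G>0$, since $G$ can rotate the range of $P_2$ toward the kernel of $P_1$. One therefore has to use quasilocality in an essential way --- the influence of a site propagates only $O(1)$ sites per layer --- to be certain that $O(L)$ layers suffice to make the $[1,L-1]$-marginal of $\cS_\psi$ full rank while losing only an $e^{-\cG'''L}$ factor, rather than the $d^{\Omega(L^2)}$ loss that a naive ``the rank grows by one site per few layers'' argument would give; and one has to keep the bookkeeping tight enough that $\cG'''$ stays a $\Pone$-constant. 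These constants are presumably far from optimal, which is the origin of the extra doubly-exponential term in the complexity of computing Gibbs-state marginals (cf.\ the remark after Corollary~\ref{cor:main}).
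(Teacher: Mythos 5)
Your two reductions are correct and cleanly executed: the closed form $\cL_L^{*k}(Q)=\tr_{[L,L+k-1]}(W_k(\1_{[1,k]}\otimes Q)W_k^{\dagger})$ follows by the induction you indicate, and the passage from $W_k(\1\otimes\proj{\psi})W_k^{\dagger}$ to $\cG^{-2k}\,\Pi_{\cS_\psi}$ via the smallest nonzero singular value of $W_k\Pi$ is valid. But the lemma is not proved, because the remaining estimate $\tr_{[L,L+k-1]}(\Pi_{\cS_\psi})\ge \cG''^{-1}e^{-\cG''' L}\1$ --- which you correctly identify as the crux --- is asserted rather than derived. The phrase ``pushing the $h=0$ computation through this comparison gives the displayed bound'' is precisely the step that needs a proof, and your own ``main obstacle'' paragraph concedes that the natural ways of doing it fail. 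Two concrete problems: (a) the deformation from $h=0$ is not a small perturbation, since $\|E_L-\1\|=O(\|h\|)$; the quasilocality estimates of Lemma~\ref{lem:arakiE} control $E_L-E_M$ for $L\le M$, not the distance of $E_L$ from the identity, so there is no small parameter to expand in. (b) The comparison of $W_j^{\dagger}W_j$ (or $W_jW_j^{\dagger}$) with $e^{-H_{[1,j]}}$ controls singular values only --- information you have already spent in the reduction to $\Pi_{\cS_\psi}$ --- and says nothing about how $W_k$ rotates the subspace $\cH_{[1,k]}\otimes\CC\psi$; in particular it does not exclude that the marginal of $\cS_\psi$ on $[1,L-1]$ is nearly singular. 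Your splitting $W_k=W_aW_b$ only reproduces a statement of the same type one scale down, so the argument does not close.

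What is missing is a genuine mixing or ergodicity input, and this is exactly what the paper's proof supplies and yours does not. The paper works with the adjoint map $\cL_L$, writes $\cL_L^{M}=\cL^{M}+(\cL_L^{M}-\cL^{M})$, bounds the difference by a telescoping sum, and lower-bounds the main term using Lemma~\ref{lem:arakiL}(iii)--(iv): the iterates of the limiting map satisfy $\|e^{nf}\cL^{n}(X)-\tr[\rho_L X]\,g\|\le \cG e^{-n/\cG}2^{L+1}\|X\|$ with $g\ge \cG^{-1}\1$, combined with $\tr[\rho_L\phi\phi^*]\ge \cG^{-1}e^{-2\|h\|L}$ from Lemma~\ref{lem:cond}. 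That rank-one convergence (a consequence of Araki's analyticity results, i.e., uniqueness of the KMS state and exponential decay of correlations) is the substantive reason the transfer operator is quantitatively primitive; invertibility and quasilocality of the expansionals alone, which is all your argument uses, are not known to suffice. To repair your proof you would need to either import Lemma~\ref{lem:arakiL}(iii) (at which point your reductions become unnecessary) or prove an equivalent spectral-gap statement for the product $W_k$ from scratch.
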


\begin{remark}\label{rem:conjectureprimitivity}
We believe that Lemma~\ref{lem:lbLL} can be improved. In particular, it is natural to think that $\cL_L^*$ is primitive for all $L \geq 2$. In fact, we conjecture that for any $L \geq 2$, and any $\psi \in \cH_{[1,L-1]}$ such that $\|\psi\|=1$, $(\cL_L^*)^{L-1}(\psi \psi^*) \geq e^{-O(L)} \1$.
If this conjecture is true, then this will improve the dependence on $\|h\|$ in the complexity estimate \eqref{eq:complexityevector}.
\end{remark}

Our proof of Lemma~\ref{lem:lbLL} will require some additional results from \cite{araki1969} concerning the limit of the maps $\cL_L^*$ and their adjoint, which we recall now. 

\paragraph{Limit of $\cL_L$} Note that the adjoint of the map $\cL_L^*$ is given by
\begin{equation}
\label{eq:defLL}
\cL_L(Q) =  \tr_1(E_L^{\dagger}(Q\otimes\id)E_L).
\end{equation}
The key additional fact that we will require from \cite{araki1969} is that if $\cL$ is the limit of the maps $(\cL_L)$, then the iterations $\cL^n$ converge, as $n\to \infty$ to a ``rank-one'' operator (property (iii) below). More precisely, we have\footnote{We use the notation $\cA_{\mathbb{N}}$ for the closure, with respect to the operator norm, of the union of all local algebras $\cA_{[1,n]}$, where operators in $\cA_{[1,n]}$ are embedded into $\cA_{[1,m]}$ for $m > n$ by tensoring with the identity.}:
\begin{lemma}[\cite{perezgarcia2020} Theorem 4.4, 4.7 and 4.12]\label{lem:arakiL}
There exists a map $\cL:\cA_{\mathbb N}\to\cA_\mathbb{N}$, a positive operator $g\in\cA_\mathbb{N}$, a $\Pone$-constant $\cG$, and a superexponentially decaying function $\varepsilon(L)$ such that for all $L$ and $X\in\cA_{[1,L-1]}$
\begin{enumerate}[label=(\roman*)]
\item $\|\cL(X)\|,\|\cL_L(X)\|\le\cG\|X\|$
\item $\|\cL(X)-\cL_L(X)\|\le\varepsilon(L)\|X\|$
\item $\|e^{nf}\cL^n(X)-\tr[\rho_L X]g\|\le \cG e^{-n/\cG} 2^{L+1}\|X\|$
\item $\|g\|,\|g^{-1}\|\le\cG$
\end{enumerate}
\end{lemma}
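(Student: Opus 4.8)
This lemma collects known results rather than something to prove from scratch: it transcribes, into the notation of Section~\ref{sec:prelim}, Theorems~4.4, 4.7 and 4.12 of~\cite{perezgarcia2020}, which themselves rest on Araki's analysis of the one-dimensional transfer operator~\cite{araki1969}. The plan is therefore in three parts. First, exhibit $\cL$ as the infinite-volume limit of the maps $\cL_L$ and derive (i) and (ii) from the convergence of the expansionals already recorded in Lemma~\ref{lem:arakiE}. Second, identify $g$ with the normalised positive fixed point of $\cL$ and import from~\cite{araki1969} the mixing estimate (iii) and the two-sided bound (iv). Third, check that for a strictly local $h$ every constant produced along the way is a $\Pone$-constant and every error term superexponentially decaying in the sense of Definition~\ref{def:Ponecst}; since~\cite{perezgarcia2020} works with exponentially decaying interactions, this bookkeeping is what Appendix~\ref{sec:constants} will carry out.

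For the first part I would argue as follows. By~\eqref{eq:defLL}, $\cL_L(Q)=\tr_1(E_L^{\dagger}(Q\otimes\1)E_L)$, so $\cL_L$ depends on $L$ only through $E_L$. Lemma~\ref{lem:arakiE}(ii) makes $(E_L)$ Cauchy in operator norm, hence convergent to some $E\in\cA_{\mathbb N}$, and I would set $\cL(Q)=\tr_1(E^{\dagger}(Q\otimes\1)E)$, extended to all of $\cA_{\mathbb N}$ by density and continuity. Property (i) is then immediate from $\|E_L\|,\|E\|\le\cG$ (Lemma~\ref{lem:arakiE}(i)), the only other cost being the fixed factor $d$ from the partial trace over a single site. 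For (ii), write $\cL(Q)-\cL_L(Q)$, under the partial trace, as $(E^{\dagger}-E_L^{\dagger})(Q\otimes\1)E+E_L^{\dagger}(Q\otimes\1)(E-E_L)$, so that $\|\cL(Q)-\cL_L(Q)\|\le\cG'\|E-E_L\|\|Q\|\le\varepsilon(L)\|Q\|$ with $\varepsilon$ superexponentially decaying, using $\|E-E_L\|\le\varepsilon(L)$ from Lemma~\ref{lem:arakiE}(ii).

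The substance is the second part, namely Araki's theorem that $\cL$ behaves like a primitive transfer matrix. What is to be imported is: $\cL$ has a strictly positive fixed point $g$ with $\cL(g)=e^{-f}g$ (the eigenvalue is forced to be $e^{-f}$ since iterating $\cL$ tracks the ratios $Z_{N+1}/Z_N$, whose limit is $e^{-f}$ by Lemma~\ref{lem:fRatioLimit}); the associated left eigenfunctional, restricted to $\cA_{[1,L-1]}$, is $X\mapsto\tr[\rho_L X]$, which is well defined and $L$-consistent because the marginals $\rho_L$ of~\eqref{eq:rhoL} are compatible; and $\cL$ has a spectral gap bounded below by $1/\cG$, which is precisely Araki's quantitative decay of correlations. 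Iterating the gap estimate and expanding a general $X\in\cA_{[1,L-1]}$ over its $L-1$ sites produces the dimensional prefactor, here $2^{L+1}$, in (iii). Finally, (iv) records that $g$ is (a normalisation of) a boundary object of the infinite Gibbs state, hence full rank with conditioning bounded by a $\Pone$-constant, analogous in spirit to Lemma~\ref{lem:cond} but uniform in any cutoff.

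The hard part is entirely the spectral gap of $\cL$ with a rate $1/\cG$ depending on $\|h\|$ alone: this is the deep input of~\cite{araki1969}, to be invoked rather than reproved. The construction $\cL=\lim_L\cL_L$ and properties (i) and (ii) are routine once the convergence of the $E_L$ is in hand, and the third part (constant-tracking) is careful but mechanical.
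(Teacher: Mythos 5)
Your reading is the same as the paper's: Lemma~\ref{lem:arakiL} is not reproved but imported from \cite{perezgarcia2020} (Theorems 4.4, 4.7, 4.12), with the only original work being the bookkeeping in Appendix~\ref{sec:constants} showing the constants there are $\Pone$ and the error terms superexponentially decaying; your reconstruction of (i)--(ii) from the convergence and boundedness of $E_L$ (Lemma~\ref{lem:arakiE}) and your deferral of the mixing estimate (iii) and the bounds on $g$ in (iv) to the cited source match that treatment. One caveat on your sketch of (i): the "fixed factor $d$ from the partial trace over a single site" is not harmless under Definition~\ref{def:Ponecst}, since a $\Pone$-constant may depend only on $\|h\|$ and not on $d$ (and such a factor would propagate as $d^{O(L)}$ through the $\cG^{\cG L}$ iterations in Lemma~\ref{lem:lbLL}); the $d$-free bound is really a feature of the conventions and quasi-local norms of \cite{perezgarcia2020}, i.e.\ exactly the kind of constant-tracking you delegate to Appendix~\ref{sec:constants}, so it should be checked there rather than asserted.
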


We are now ready to prove Lemma~\ref{lem:lbLL}.

\begin{proof}[Proof of Lemma~\ref{lem:lbLL}]
Throughout the proof $\cG_i$ are $\Pone$-constants.
As a first step we note that we can equivalently prove the statement for the adjoint channel:
\begin{align*}
&\forall \psi\in\cH_{[1,L-1]} \textrm{ with }\|\psi\|=1,\quad \cL^{*\cG L}_L(\psi\psi^*)\ge a\id\\
&\Leftrightarrow \forall \psi,\phi\in\cH_{[1,L-1]}, \; \|\psi\|=\|\phi\|=1, \; \langle\cL^{*\cG L}_L(\psi\psi^*),\phi\phi^*\rangle\ge a\\
&\Leftrightarrow \forall \psi,\phi\in\cH_{[1,L-1]}, \|\psi\|=\|\phi\|=1, \; \langle\psi\psi^*,\cL_L^{\cG L}(\phi\phi^*)\rangle\ge a\\
&\Leftrightarrow\forall \phi\in\cH_{[1,L-1]} \textrm{ with }\|\phi\|=1,\quad \cL_L^{\cG L}(\phi\phi^*)\ge a\id.
\end{align*}
Now using $\cL$ and $g$ from Lemma~\ref{lem:arakiL} and introducing an integer $M$ that we choose later,
\begin{align*}
\cL_L^M(\phi\phi^*)&=\cL^M(\phi\phi^*)+(\cL^M_L(\phi\phi^*)-\cL^M(\phi\phi^*))\\
&=e^{-fM} (\tr[\rho_L\phi\phi^*]g+ (e^{fM}\cL^M(\phi\phi^*)-\tr[\rho_L\phi\phi^*]g)+e^{fM}(\cL^M_L(\phi\phi^*)-\cL^M(\phi\phi^*)))\\
&\ge e^{-\|h\|M} \1(\cG_1^{-1} e^{-2\|h\|L} - \cG_2 e^{-M/\cG_2} 2^{L+1}-\cG_3^M\varepsilon(L)),
\end{align*}
where the choice of $M$ will ensure that the term in brackets is positive.
In the last step we applied Lemma~\ref{lem:cond} and Lemma~\ref{lem:arakiL} (iv) to lower bound the first term.
The second term is due to Lemma~\ref{lem:arakiL} (iii).
The last term is bounded by a superexponentially decaying function $\varepsilon(L)$, which follows from the following telescope sum:
\begin{align*}
\left\|\cL_L^M(\phi\phi^*)-\cL^M(\phi\phi^*)\right\|&=\left\|\sum_{k=0}^{M-1}\cL^k(\cL_L(\cL_L^{M-k-1}(\phi\phi^*))-\cL(\cL_L^{M-k-1}(\phi\phi^*)))\right\|\\
&\le M \cG_4^M \varepsilon'(L).
\end{align*}
We used Lemma~\ref{lem:arakiL}, which proves the existence of a $\Pone$-constant $\cG_4$ and a superexponentially decaying function $\varepsilon'$ such that $\|\cL_L(Q)\|,\|\cL(Q)\|\le \cG_4\|Q\|$ and $\|\cL_L(Q)-\cL(Q)\|\le\varepsilon'(L)$. 
The factor $M$ can also be upper bounded by $\cG_4^M$, which we can combine into a new $\Pone$-constant $\cG_3=\cG_4^2e^f$.

We choose $M=\left\lceil\cG_2(2\|h\|+\log(4))\right\rceil L:=\cG L$ and express the function $\varepsilon$ explicitly to obtain
\begin{align*}
\cL_L^M(\phi\phi^*)\ge &\  e^{-\|h\|M}\cG_1^{-1}\1\\&\times\left(e^{-2\|h\|L}-2\cG_1\cG_2 e^{-2\|h\|L}\left(\frac12\right)^L-\cG_1 e^{(\cG\log(\cG_5)+C_3(\|h\|+\log(d))-\frac12\log(L/2))L}\right).
\end{align*}
where $C_3$ is a numerical constant.
We now impose that
\[
L\ge\log_2\left(8\cG_1\cG_2\right)
\]
to bound the second term in brackets by $\frac14 e^{-2\|h\|L}$ and similarly 
\begin{align*}
L&\ge2\exp(2(\cG\log(\cG_5)+C_3(\|h\|+\log(d))+2\|h\|+\log(2)))\\
L&\ge\log_2(4\cG_1)
\end{align*}
to ensure the same bound for the last term.
We combine the above bounds into a new $\Pone$-constant $\cG''$ such that $L\ge\exp(\cG'')$ implies all the above inequalities.
Under this condition we conclude
\[
\cL^{\cG M}_L(\phi\phi^*)\ge e^{-\|h\|M}\frac1{2\cG_1}e^{-2\|h\|L}\ge\frac1{\cG'}e^{-\cG''' L}
\]
with $\cG'=2\cG_1$ and $\cG'''=3\|h\|\cG$.

\end{proof}

\begin{proof}[Proof of Theorem~\ref{thm:mainState}]
We consider again $\Pone$-constants $\cG_i$.
We know from Lemma~\ref{lem:arakiL} that $\cL^*_L(Q) \leq \cG_1 \|Q\| \1$, and so this implies that for the iterated map $(\cL^*_L)^{\cG_2L}(Q) \leq \cG_1^{\cG_2L} \|Q\| \1$. 
Together with Lemma~\ref{lem:lbLL} this implies that for any pure state $Q=\psi \psi^*$,
\[
\frac{\lambda_{\max}((\cL^*_L)^{\cG_2L}(Q))}{\lambda_{\min}((\cL^*_L)^{\cG_2L}(Q))} \leq \cG_3 e^{\cG_4 L}.
\]
Furthermore, from Corollary~\ref{cor:LFPSandwich} we know that there is a superexponentially decaying function $\eps(L)$ such that
\[
(1+\eps(L))^{-1} e^{-f} \rho_L \leq \cL_L^*(\rho_L) \leq (1+\eps(L)) e^{-f} \rho_L.
\]
Applying Theorem~\ref{thm:apxevector}, we get that for $L \geq \exp(\cG'')$, $\|v_L - \rho_L\|_1 \leq 2 \cG_2 L \cG_3 e^{\cG_4 L} \eps(L) \leq e^{\cG L} \eps(L)$ for some $\Pone$-constant $\cG$ and the superexponentially decaying function $\eps(L)$.
\end{proof}

\section{Proof of Corollary~\ref{cor:main}}
\label{sec:proof-runtime}

\begin{proof}[Proof of Corollary~\ref{cor:main}]
We start with the algorithm for the free energy. The argument is completely analogous for the Gibbs state. Theorem~\ref{thm:main} tells us that the error incurred by estimating the free energy via the spectral radius of $\cL_L^*$ is at most
\begin{equation}
\label{eq:errorLL}
\cG \frac{\exp(C' L)}{(L/2)!} \; ,
\end{equation}
where $\cG$ is a $\Pone$-constant and $C'$ is of the form $C_3\|h\|$ from the theorem. We need to find a value of $L$ for which the error above is guaranteed to be less than the desired error $\eps$.

As the expression in \eqref{eq:errorLL} cannot be inverted analytically to get $L$ in terms of $\eps$, we use Stirling's approximation to upper bound \eqref{eq:errorLL}. Thus a sufficient condition on our $L$ is that it satisfies:
\[
\varepsilon \ge \cG \exp\left(\left(C-\frac12\log\frac L2\right) L\right).
\]
Note that we replaced $C'$ by another constant $C=C_3(\|h\|+1)$ assuming without loss of generality $C_3>1$ to account for the factor $e$ in the Stirling approximation we used: $n!>(n/e)^n$ (see \cite[Lemma B.3]{friedli2017}).

Representing the concave function in the exponent on the right-hand side as a minimum over linear functions, one can show that this is implied if the condition
\begin{align*}
\log\left(\frac{\varepsilon}\cG\right)\ge \exp\left(2\left(C+\gamma-\frac12\right)\right)-\gamma L
\end{align*}
is fulfilled for some $\gamma$. Using $\gamma=\log(\log(1/\varepsilon))/2$, one can check that the value
\[
L=\left\lceil \frac{\log(1/\varepsilon)(2+2\exp(2C-1))+2\log(\cG)}{\log(\log(1/\varepsilon))}\right\rceil
\]
will work.

The algorithm is now given by the computation of the spectral radius of a square matrix of size $d^{2(L-1)}\times d^{2(L-1)}$, which can be done in time polynomial in its size.
Making the constants explicit again, using the assumption $\varepsilon<1/e$ to include the $\log(\cG)$ in the prefactor of $\log(1/\varepsilon)$, and absorbing all numbers into the new constant $C_1>0$ the resulting runtime then reads
\[
\exp\left(\log(d)\left( \frac{\log(1/\varepsilon)\exp(\exp\left(C_1(\|h\|+1)\right))}{\log(\log(1/\varepsilon))}\right)\right).
\]
For the Gibbs state approximation we simply replace $C'$ by a $\Pone$-constant.
By the exact same calculation we obtain for some positive constants $C_1,C_2>0$
\[
\exp\left(\log(d)\left( \frac{\log(1/\varepsilon)\exp\left(C_1e^{C_2\|h\|}\right)}{\log(\log(1/\varepsilon))}\right)\right).
\]
In addition, we have to choose $L\ge\exp(\cG)$ for some $\Pone$-constant $\cG$ for Theorem~\ref{thm:mainState} to hold and furthermore $L$ has to be at least as big as the size of the marginal that we want to obtain, which is another input to our algorithm.
\end{proof}

\section{Numerical Results}\label{sec:numerics}
We implemented our algorithm and tested it on a dimerized XY model which is exactly solvable \cite{bulaevskii1963,bursill1996}.
For $N+1$ spin-$1/2$ particles, the Hamiltonian is given by
\begin{equation}
\label{eq:dimXYmodel}
H=-\beta\sum_{i=1}^{N/2}\left[\left(S^x_{2i-1}S^x_{2i}+S^y_{2i-1}S^y_{2i}\right)+\gamma\left(S^x_{2i}S^x_{2i+1}+S^y_{2i}S^y_{2i+1}\right)\right],
\end{equation}
where the spin operators are given by the Pauli matrices $S^{x,y}=\sigma^{x,y}/2$.
In the case $\gamma=1$, this model is translation-invariant and is given by the interaction term $h=-\frac{\beta}4(\sigma^x\otimes\sigma^x+\sigma^y\otimes\sigma^y)$.
For $\gamma\ne1$ we do not have translation-invariance in the above sense, but the model still becomes translation-invariant by blocking particles, i.e., it is a translation-invariant Hamiltonian with local dimension $d=4$ and the interaction term
\[
h=-\frac{\beta}4\left(\left(\sigma^x\otimes\sigma^x+\sigma^y\otimes\sigma^y\right)\otimes\1\otimes\1+\gamma\1\otimes\left(\sigma^x\otimes\sigma^x+\sigma^y\otimes\sigma^y\right)\otimes\1\right).
\]
Note also that the system is gapless for $\gamma=1$ and gapped otherwise \cite{bursill1996}.
As discussed in the introduction, for gapped systems there exists an efficient algorithm to approximate the ground state energy, while the problem is $\mathsf{QMA}$-hard in general.

We show results for the error of the log-partition function $\beta f_\beta(h)$ and its dependence on temperature and the parameter $L$ in Figure~\ref{fig:gamma1Plots} for $\gamma=1$. The value of the log-partition function itself as well as the expectation value of the energy is depicted in Figure~\ref{fig:gamma2Plot} for $\gamma=2$.
We compute the energy
\[
e_\beta=\lim_{N\to\infty} \frac{\tr[h_{1,2}e^{-H_{[-N,N]}}]}{\tr[e^{-H_{[-N,N]}}]}
\]
for the two-sided infinite version of the system using the procedure described in Appendix~\ref{sec:recastH}, but also obtain the same results by using a two-sided version of the map $\cL^*_L$.
In Figure~\ref{fig:MIandCMI}, we show the decay of the mutual information \[
I(A:B)=S(\rho_A)+S(\rho_B)-S(\rho_{AB})
\]
between two particles depending on their distance and the same for the conditional mutual information 
\[
I(A:C|B)=S(\rho_{AB})+S(\rho_{BC})-S(\rho_{ABC})-S(\rho_B)
\]
where the conditioning is on the rest of the marginal obtained from the computation (see \cite{kuwahara2020} for an analytic result on this decay at high temperature).
These were computed using the eigenvector of $\cL^*_L$.
Note that they cannot be obtained efficiently from the derivative method (see Theorem~\ref{thm:mainState}) as the derivative method detailed in Appendix~\ref{sec:fToP} is not efficient for obtaining marginals of many particles.

While the theoretical bounds on $L$ for a given error derived in this paper are impractically large, we observe very accurate results for moderate choices of the parameter.
Also the error for the free energy seems to grow less than exponential  with $\beta$ for the chosen example as opposed to our worst-case estimate, which is triple exponential.
In Figure~\ref{fig:ErrorEvsBeta}, we show the $\beta$-dependency of the error in the energy calculation. The dependency seems to be similar to the one for the free energy calculation.
This is in contrast with the worse error dependence in the estimates we proved for the $k$-particle marginals (see Corollary~\ref{cor:main}).
All computations were done on a laptop computer, where computation of a single datapoint for $d=2$, $L=11$ takes about 13\,s.
\begin{figure}[H]
\centering
\subfloat[]{\includegraphics[width=0.5\textwidth]{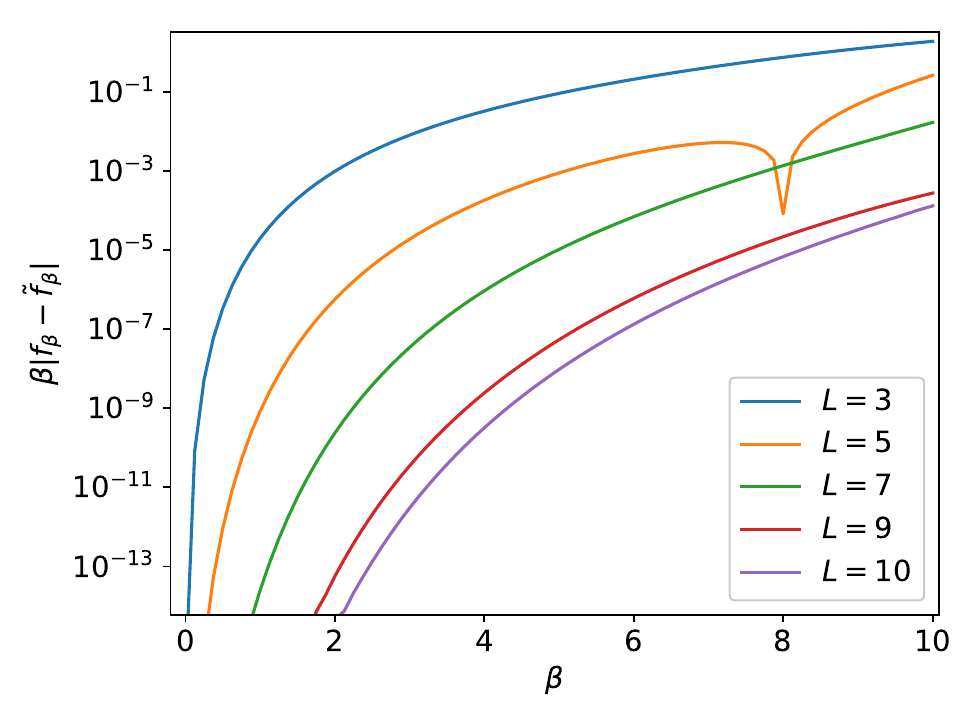}\label{fig:errorPlot}}
\subfloat[]{\includegraphics[width=0.5\textwidth]{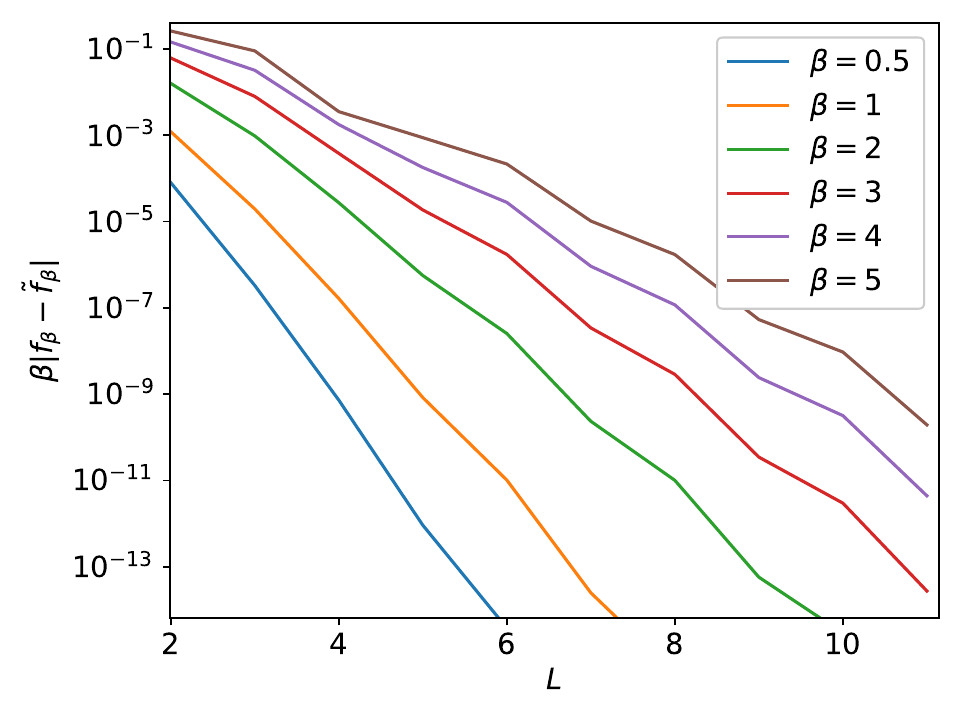}}
\caption{\label{fig:gamma1Plots}(a) Errors on logarithmic scale plotted against the inverse temperature for the model \eqref{eq:dimXYmodel} with $\gamma=1$. The errors decay with $L$ but grow with $\beta$. While our estimates are triple exponential in $\beta$ the shown curves are sublinear suggesting that in practice errors only grow subexponentially with $\beta$. The local minimum for $L=5$ is due to a change of sign in the error. (b) The decay of the error with $L$ for $\gamma=1$.}
\end{figure}
\begin{figure}[H]
\centering
\subfloat[]{\includegraphics[width=0.5\textwidth]{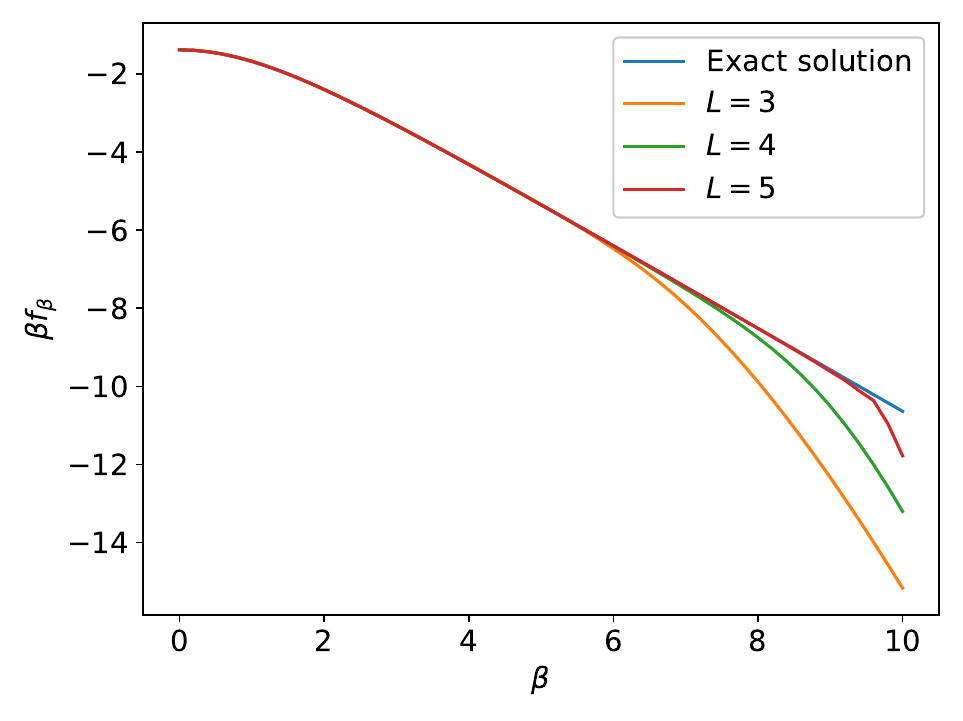}\label{fig:fPlot}}
\subfloat[]{\includegraphics[width=0.5\textwidth]{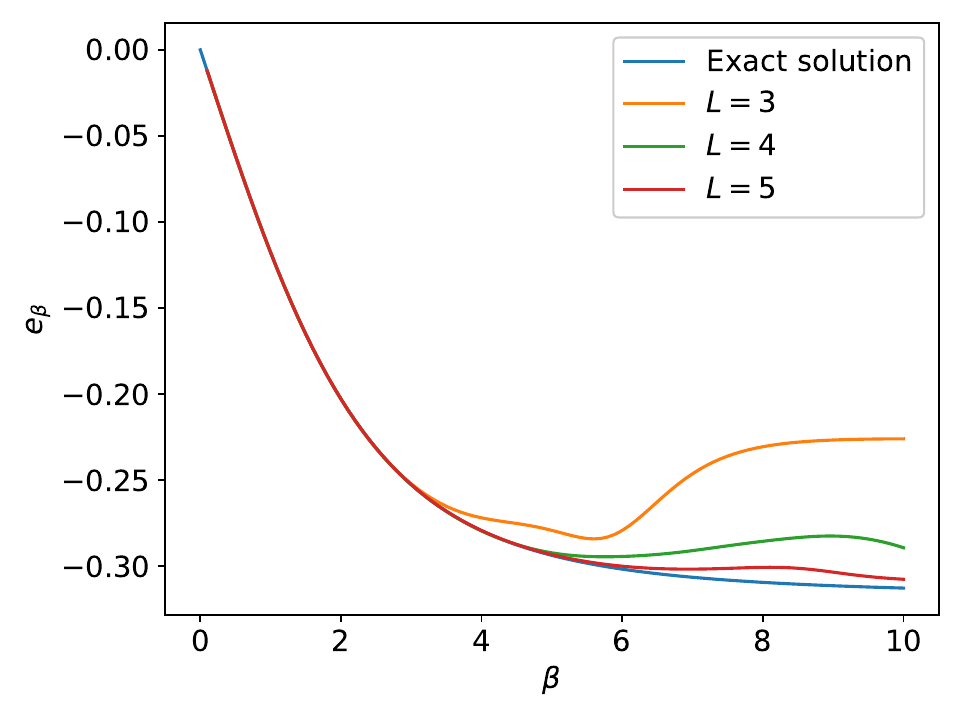}}
\caption{\label{fig:gamma2Plot}
(a) Log-partition function plotted against the inverse temperature for the model \eqref{eq:dimXYmodel} with $\gamma=2$. The curves for all $L$ are not distinguishable from the exact solution for high temperatures but deviate for lower temperatures, where higher $L$ still give better approximations. Due to the choice of $\gamma$ we use a blocking procedure resulting in a local dimension $d=4$, which restricts us to smaller values for the parameter $L$ compared to the case $\gamma=1$ of Figure \ref{fig:gamma1Plots}.
(b) Energy plotted against temperature for $\gamma=1$.
Again, we obtain accurate results at high temperatures.
}
\end{figure}
\begin{figure}[H]
\centering
\includegraphics[width=0.5\textwidth]{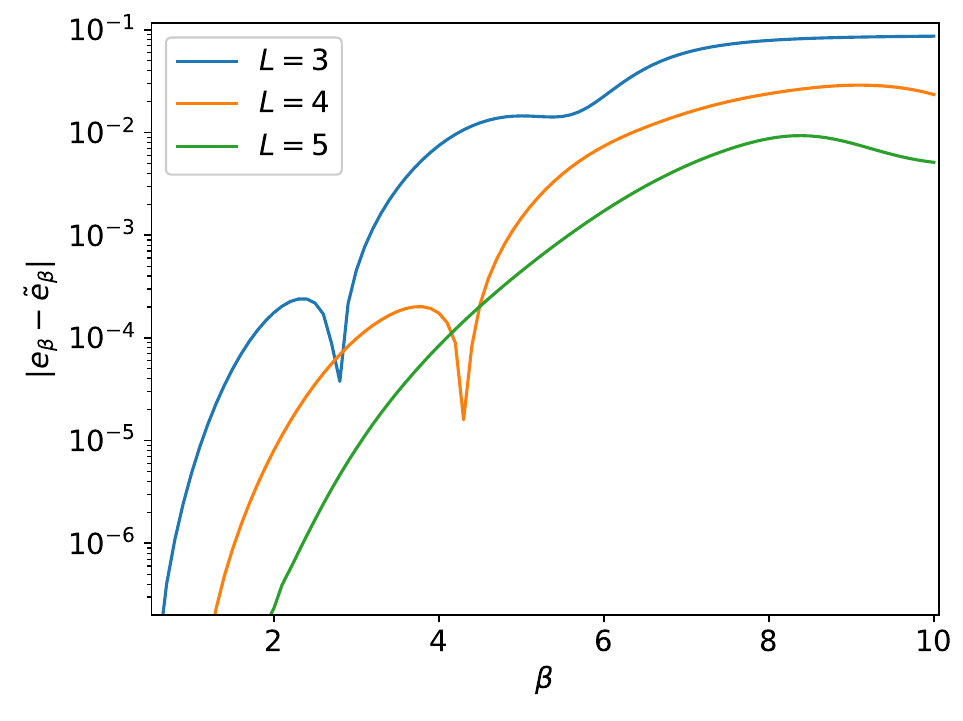}
\caption{\label{fig:ErrorEvsBeta} The error of the energy at $\gamma=1$ when comparing to the exact solution. The dips again arise from a change of sign of the error. The growth with $\beta$ does not seem to be worse than for the free energy.}
\end{figure}
\begin{figure}[H]
\centering
\subfloat[]{\includegraphics[width=0.5\textwidth]{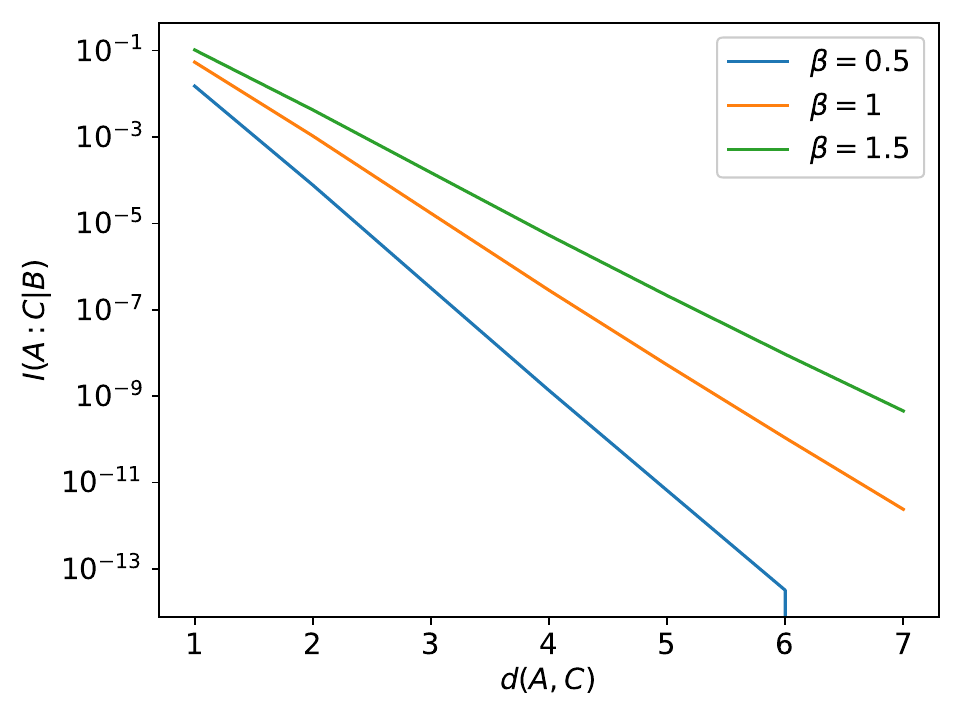}}
\subfloat[]{\includegraphics[width=0.5\textwidth]{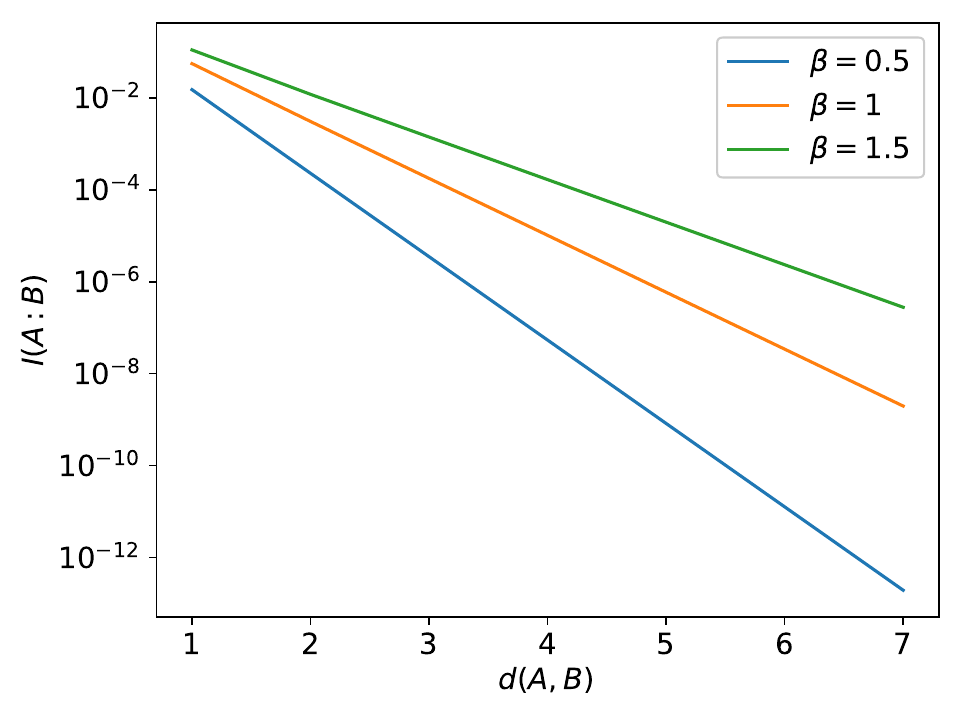}}
\caption{\label{fig:MIandCMI} Decay of the conditional mutual information (a) and the mutual information (b) for the two-sided chain and $\gamma=1$ between two particles with distance $d(\cdot,\cdot)$. The conditioning is on the remaining particles in the 8-site marginal. We observe a growth of the decay rate with temperature.}
\end{figure}

\section*{Acknowledgements}
We thank {\'A}lvaro Alhambra for useful discussions, Arsalan Motamedi for pointing out an error in Lemma \ref{lem:fRatioLimit}, and the anonymous referees for their careful assessment and helpful feedback.
HF acknowledges funding from UK Research and Innovation (UKRI) under the UK
government’s Horizon Europe funding guarantee EP/X032051/1.
OF acknowledges funding by the European Research Council (ERC Grant AlgoQIP, Agreement No. 851716). SOS acknowledges support from the UK Engineering and Physical Sciences Research Council (EPSRC) under grant number EP/W524141/1.

\bibliographystyle{quantum}
\bibliography{refs}
\appendix
\section{A note on the constants in \texorpdfstring{\cite{perezgarcia2020}}{}}\label{sec:constants}
We restated multiple results from \cite{perezgarcia2020}, which form our main technical tools in this work.
While in the reference, constants are mostly left with an implicit dependence on $\| h \|$, we aim to make this explicit in our bounds.
In this appendix we review some of these constants to show that they are bounded is given in the main text.
Also note that the paper mostly considers exponentially decaying interactions and mostly just exponential instead of super-exponential decays.
We can however obtain the better results from there by restricting to local Hamiltonians again.
We list the following constants for a 2-local interaction $\Phi$.
\[
\Omega_0=\Omega_1=\Omega_2=2\|h\|
\]
\[
0=\Omega_3=\Omega_4=\cdots
\]
\[\|\Phi\|_\lambda=2\|h\|(1+e^\lambda+e^{2\lambda})\]
\[\Omega^*_k(2)=\sum_{n=1}^\infty \left(\sum_{\substack{\alpha\in\mathbb N^n\\|\alpha|=k}}\prod_{j=1}^k \Omega_{\alpha_j}\right)\frac{2^n}{n!}\]
The following constant appears in several results cited in our main text and is a $\Pone$-constant due to the bound in \cite[Section 4.2]{perezgarcia2020}, which holds when $\lambda\ge2\Omega_0=4\|h\|$:
\[
\cG=\exp\left(\sum_{k\ge1}e^{2\Omega_0k}\Omega_k^*(2)\right)\le \exp(e^{2\|\Phi\|_\lambda})
\]

To obtain the superexponential decay of $\cG_l$ we start by upper bounding $\Omega_k^*(2)$.
To that end a generally useful formula is the Lagrange remainder bound for the tail of the exponential series:
\[
\sum_{k\ge l}\frac{x^k}{k!}=\exp(x)-\sum_{k=1}^{l-1}\frac{x^k}{k!}\le\frac{e^xx^k}{k!}.
\]
Note that the terms in the outer summation in the definition of $\Omega_k^*(2)$ are zero whenever $n<\lceil k/2\rceil$ as for those one of the factors has to be zero.
Furthermore, the number of vectors in $\mathbb N^n$ with entries no larger than $2$ is bounded by $3^n$ and we obtain
\[
\Omega^*_k(x)\le\sum_{n=\lceil k/2\rceil}^\infty3^n(2\|h\|)^n\frac{x^n}{n!}\le e^{6x\|h\|}\frac{(6x\|h\|)^{\lceil k/2\rceil}}{\lceil k/2\rceil!}.
\]
Plugging this into the expression for $\mathcal G_l$ yields
\begin{align*}
\mathcal G_l&\le e^{e^{2\|\Phi\|_\lambda}} \sum_{k\ge l} e^{2\Omega_0k}e^{12\|h\|}\frac{(12\|h\|)^{\lceil k/2\rceil}}{\lceil k/2\rceil!}\\
&\le e^{e^{2\|\Phi\|_\lambda}+12\|h\|}2\sum_{k\ge\lceil l/2\rceil}e^{4\Omega_0k}\frac{(12\|h\|)^k}{k!}\\
&\le e^{e^{2\|\Phi\|_\lambda}+12\|h\|}2\frac{(e^{8\|h\|}12\|h\|)^{\lceil l/2\rceil}}{\lceil l/2\rceil!}
\end{align*}
which gives us the superexponential-decay of this quantity.

We finally need a bound on the constants $K_2$ and $\delta_2$ from \cite[Theorem 4.12]{perezgarcia2020} which relies on \cite[Theorem 4.11]{perezgarcia2020}.
In the later theorem we have
\[
C_2=2K_2
\]
if $r=\frac{\log(a)}{\log(2)}$ and thereby $N=3r=\frac{3\log(a)}{\log(2)}$.
$K_2$ is bounded by
\[
K_2\le2\cG^4\left(1+\sum_{l\ge1}\cG_l2^l\right)\cG^4(2\cG^3\cG_r2^r+2\cG^4)+2\cG^8\left(1+\sum_{l\ge1}\cG_l2^l\right),
\]
which is a $\Pone$-constant if $1+\sum_{l\ge1}\cG_l2^l$ is as well.
This is the case because, looking at the $l$-dependent term in $\cG_l$
\[
1+\sum_{l\ge1}\frac{\left(e^{8\|h\|}12\|h\|\right)^{\lceil l/2\rceil}}{\lceil l/2\rceil!}\le 2\exp(e^{8\|h\|}12\|h\|)
\]
which is $\Pone$ and the coefficient in $G_l$ is $\Pone$ as well.

We now consider the constants in the proof of \cite[Theorem 4.12]{perezgarcia2020} (note that there will be another constant $K_2$ distinct from the one above).
We have
\[
K=K_2=4C_2\||L^r|\|_{1,2}(1+\||h|\|_{1,2}).
\]
$C_2$ is $\Pone$ as shown above.
For the superoperator norm
\[
\||L^r|\|_{1,2}=\sup_{\||Q|\|_{1,2}\le1}\||L^r(Q)|\|_{1,2}=\sup_{\||Q|\|_{1,2}\le1}\|L^r(Q)\|+\sum_{l\ge1}\|L(Q)\|_l,
\]
we apply \cite[Corollary 4.9]{perezgarcia2020} to bound
\[
\||L^r|\|\le\sup_{\||Q|\|_{1,2}}\cG^4\|Q\|+\sum_{l\ge1}2\cG^3\cG_l\|Q\|+\sum_{l\ge1}\cG^4\|Q\|_{n+l}
\]
where the first sum is $\Pone$ due to the decay of the $\cG_l$ as before and the second sum is upper bounded by $\cG^4\||Q|\|_{1,2}\le\cG^4$ making the whole expression $\Pone$.

A bound for $\||h|\|_{1,2}$ can be deduced from \cite[Theorem 4.10]{perezgarcia2020}, which implies membership of $h$ in certain "quasilocal" sets.
Namely, equation (42) implies $\|h\|_l\le2\cG^3\cG_l$, which we can sum to obtain
\[
\||h|\|_{1,2}\le\|h\|+\sum_{l\ge1}\|h\|_l2^l
\]
which is now $\Pone$ as the sum converges as before. This closes the argument for $K$.

The constant $\delta$ is given as
\[
\delta=-\log\left(1-\frac{1}{2C_2}\right)/N
\]
with $C_2$ and $N=\frac3{\log(2)}\log(a)=\frac3{\log(2)}\log(C_2)$
We estimate using the concavity of the logarithm and its derivative
\[
-\log\left(1-\frac1{2C_2}\right)=\log(2C_2)-\log(2C_2-1)\ge\frac1{2C_2},
\]
so finally
\[
\delta\ge\frac1{6C_2\log(C_2)/\log(3)}
\]
and the denominator is bounded by a $\Pone$-constant.

\section{Reduction of Local Observables to Free Energy}\label{sec:fToP}

In this section we show how one can compute the expectation value of a 2-local observable $P_{1,2} \in \cA_{[1,2]}$ in the thermodynamic limit, by computing the free energy of an appropriately perturbed Hamiltonian.
More precisely, the quantity we are interested in is
\begin{equation}\label{eq:observableLimit}
\mu=\lim_{N\to\infty}\frac{\tr\left[P_{1,2}\exp(-H_{[-N,N]})\right]}{\tr[\exp(-H_{[-N,N]})]},
\end{equation}
where we assume $\|P\|\le 1$.
Our discussion is to some extent similar to the argument in \cite[Lemma 11]{bravyi2021} and adapts it to the infinite translation-invariant case.
The main idea is that a thermal expectation value can be written as a derivative of the free energy (or the partition function) with respect to a parameter introduced in the Hamiltonian.
The derivative can then be approximated by a finite difference with an error that can be bounded by the second derivative.

The case of infinite systems presents a particular challenge compared to the finite case, specifically because the free energy can be nonanalytic in the thermodynamic limit, which gives rise to \emph{phase transitions}. One important consequence of Araki's result \cite[Lemma 9.3]{araki1969} however, is that this does not happen in one-dimensional finite-range systems.
We refine this result and provide a quantitative version, i.e., we establish a bound on the second derivative of the free energy.

This then allows us to quantitatively bound the error caused by a finite difference approximation of the derivative.
Combining this approach with our Algorithm \ref{algo:main} to compute the free energy, we prove the following result.
\begin{lemma}\label{lem:fRuntimeViaDerivative}
For some numerical constants $C_1$, $C_2 > 0$, there exists an algorithm that takes as input the local dimension of a quantum system $d$, its 2-local translation invariant Hamiltonian $h$, a 2-local normalized observable $P$, an additive error $\eps$, runs in time at most
\[
\exp\left(\log(d)\exp(\exp(C_1e^{C_2\|h\|}))\frac{\log(1/\eps)}{\log(\log(1/\eps))}\right),
\]
and outputs an approximation $\tilde\mu$ such that $|\mu-\tilde\mu|\le\eps$, where $\mu$ is defined in \eqref{eq:observableLimit}.
\end{lemma}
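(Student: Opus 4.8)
The plan is to reduce the computation of $\mu$ to a constant number of evaluations of the free energy by Algorithm~\ref{algo:main}, using that a thermal expectation value is a first derivative of the free energy with respect to a perturbation of the Hamiltonian, and then approximating this derivative by a finite difference. Concretely, for $t \in (0,1]$ introduce the perturbed translation-invariant interaction $h_t = h + tP$ (still Hermitian, with $\|h_t\| \le \|h\|+1$ since $\|P\| \le 1$), let $H^{(t)}_{[-N,N]} = \sum_i (h_t)_{i,i+1}$ with $Z^{(t)}_{N} = \tr e^{-H^{(t)}_{[-N,N]}}$, and set $g(t) = f(h_t)$, the infinite-chain free energy per site of $h_t$ — this is exactly what Algorithm~\ref{algo:main} outputs on input $h_t$, since by the discussion following Theorem~\ref{thm:mainState} the one- and two-sided free energies agree. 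For finite $N$, $g_N(t) := -\tfrac1{2N+1}\log Z^{(t)}_N$ is analytic in $t$, and $\frac{d}{dt} g_N(t) = \tfrac1{2N+1}\sum_i \langle P_{i,i+1}\rangle_t$, which at $t = 0$ converges to $\mu$ as $N \to \infty$ by translation invariance of the bulk (the boundary terms being $O(1)$ in number and uniformly bounded). Araki's result on the absence of phase transitions for one-dimensional finite-range interactions \cite[Lemma 9.3]{araki1969} gives that $g$ is analytic near $0$ and is the locally uniform limit of the $g_N$; hence $g_N' \to g'$ locally uniformly and $g'(0) = \mu$.

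The heart of the argument is a quantitative, $\|h\|$-explicit refinement of this analyticity: there is a $\Pone$-constant $\cG$ such that $|g''(t)| \le \cG$ for all $t \in [-1,1]$. This is where the noncommutative transfer operator machinery of \cite{araki1969} recalled in Section~\ref{sec:prelim} (and, for the constants, analyzed in Appendix~\ref{sec:constants}) enters: since $e^{-g(t)}$ is the spectral radius of the transfer operator attached to $h_t$, and this operator depends analytically on $t$ with $\Pone$-controlled norm and perturbation bounds, $g$ extends to a bounded analytic function on a complex disk of $\Pone$-radius around $[-1,1]$, so $|g''|$ on $[-1,1]$ is controlled by a Cauchy estimate; equivalently, $g''$ is a sum of truncated two-point correlation functions of the Gibbs state, whose exponential decay (again from \cite{araki1969,perezgarcia2020}) yields a finite $\Pone$ bound. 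Carrying this out with explicit dependence on $\|h\|$ — rather than merely invoking qualitative analyticity as in \cite[Lemma 11]{bravyi2021} — is the main obstacle, and proceeds along the lines of Appendix~\ref{sec:constants}.

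Granting $|g''| \le \cG$ on $[-1,1]$, the remainder is bookkeeping. Taylor's theorem gives $\big|\tfrac{g(t)-g(0)}{t} - g'(0)\big| \le \tfrac{t}{2}\sup_{[0,t]}|g''| \le \tfrac{t\cG}{2}$, so choosing $t := \min\{1,\ \eps/\cG\}$ makes the truncation error at most $\eps/2$. It remains to evaluate $g(0) = f(h)$ and $g(t) = f(h+tP)$; an error $\delta$ in each propagates to an error $\le 2\delta/t$ in the finite difference, so we run Algorithm~\ref{algo:main} on $h$ and on $h+tP$ (both of norm $\le \|h\|+1$) with target precision $\delta := \eps t/4 < 1/e$, yielding total error $\le \eps/2 + \eps/2 = \eps$. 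For the runtime, $\log(1/\delta) = 2\log(1/\eps) + \log(4\cG) \le (2 + \log 4\cG)\log(1/\eps)$ using $\log(1/\eps) \ge 1$, where $2+\log 4\cG$ is a $\Pone$-constant, and $\log\log(1/\delta) \ge \log\log(1/\eps)$; hence Corollary~\ref{cor:main} bounds the cost of each call by $\exp\!\big(\log d \cdot \exp(C(\|h\|+2)) \cdot (2+\log 4\cG) \cdot \tfrac{\log(1/\eps)}{\log\log(1/\eps)}\big)$, and absorbing the product of $\Pone$-quantities $\exp(C(\|h\|+2))(2+\log 4\cG)$ into a bound of the form $\exp(C_1 e^{C_2\|h\|})$ gives exactly the claimed runtime.
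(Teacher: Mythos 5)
Your overall strategy is exactly the paper's: write $\mu$ as the derivative of the perturbed free energy $g(t)=f(h+tP)$ at $t=0$, bound the second derivative by ($e$ to) a $\Pone$-constant, approximate the derivative by a finite difference with step $t\asymp\eps/\cG$, and make two calls to the free-energy algorithm at precision $\eps t/4$. The bookkeeping in your last paragraph (error propagation and the runtime via Corollary~\ref{cor:main}) matches the paper's and is correct.

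However, the central technical claim --- $|g''(t)|\le\cG$ with $\cG$ a $\Pone$-constant --- is asserted rather than proved, and the sketch you give for it is incomplete in a way that matters. The Duhamel formula for $\ddv{2}{t}f_N(t)$ does not produce static truncated two-point functions of the Gibbs state; it produces Kubo--Mori-type correlations $\int_0^1\bigl(\tr[P_{j,j+1}\,\Gamma^{is,t}_{[1,N]}(P_{k,k+1})\,\rhot_{t,N}]-\tr[P_{j,j+1}\rhot_{t,N}]\tr[P_{k,k+1}\rhot_{t,N}]\bigr)\,ds$ involving \emph{imaginary-time-evolved} observables $\Gamma^{is,t}(P_{k,k+1})$, which a priori have support on the whole chain. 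Before the exponential decay of correlations can be applied one must localize these operators near $\{k,k+1\}$ using the quasi-locality of imaginary-time evolution (the paper's Lemma~\ref{lem:PGPHBluhm}(ii)); this is an essential extra ingredient, not a reformulation. Your alternative route via analytic perturbation of the transfer operator on a complex disk of $\Pone$-radius is plausible but would require establishing that the leading eigenvalue is simple and isolated with $\Pone$-controlled spectral gap uniformly in $t$ on a complex neighbourhood, which is not carried out and is not directly supplied by the cited results. A second, smaller gap: a uniform-in-$N$ bound on $f_N''$ does not automatically pass to the limit $f''$, since the $f_N''$ are not shown to converge; the paper needs a separate elementary argument (its Lemma~\ref{lem:calcSecondDeriv}) combined with Araki's qualitative $C^2$ statement for the limit. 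Relatedly, your inference ``$g_N\to g$ locally uniformly, hence $g_N'\to g'$'' is false for real differentiable functions; the paper instead proves uniform convergence of the derivatives $f_N'$ directly (via local indistinguishability of finite-chain Gibbs states, with constants monotone in the interaction strength) and then invokes the standard theorem on differentiating limits.
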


Note that we restrict to 2-local observables as our algorithm for the free energy (Algorithm \ref{algo:main}) only applies to 2-local Hamiltonians supported on two neighbouring sites.
The more general case of $k$-local observables (again supported on $k$ contiguous sites) follows by first blocking the system. We note however that the resulting algorithm will have a bad dependence on $k$ in its runtime, as the local dimension of the blocked Hamiltonian will be $d^{k/2}$ and its interaction strength can be of the order of $k\|h\|/2$.

\subsection{Preliminaries}

For the proof of Lemma \ref{lem:fRuntimeViaDerivative}, we will need some additional results from \cite{perezgarcia2020} and \cite{bluhm2022exponential} concerning the quasi-locality of the time evolution operator, and decay of correlations. We introduce the necessary notations here.

Given a 2-local Hamiltonian, we define the time-evolution operator
\[
\Gamma_{[a,b]}^s(A)=e^{isH_{[a,b]}}Ae^{-isH_{[a,b]}}.
\]
Also, we define an $l$-neighbourhood of a particle
$j\in [1,N]$ as $\Lambda_j^l=[\max\{1,j-l\},\min\{N,j+1+l\}]$.

We also need the following notation for the thermal state of the finite system on $[a,b]$:
\[
\rhot_{[a,b]} = \frac{e^{-H_{[a,b]}}}{\tr\left[e^{-H_{[a,b]}}\right]}.
\]
The results needed for our proof are summarized in the following Lemma.
\begin{lemma}[{{\cite[Theorems 4.16 and 2.4]{perezgarcia2020} and  \cite[Theorem 6.2]{bluhm2022exponential}}}]\label{lem:PGPHBluhm}
Consider a 2-local translation invariant Hamiltonian of the form $H_{[a,b]} = \sum_{i=a}^{b-1} h_{i,i+1}$. Then the following holds:
\begin{enumerate}[label=(\roman*)]
    \item There exist constants $C, \delta > 0$, such that for any normalized operator $P_{[a,b]}\in\cA_{[a,b]}$, and any integer $n \geq 1$, we have \[
    \left|\tr\left[P_{[a,b]} \rhot_{[a-n,b-n]}\right]-\lim_{N\to\infty}\tr\left[P_{[a,b]} \rhot_{[-N,N]}\right]\right|\le C e^{-\delta n}.\]
    \item There is a superexponentially decaying function $\eta(l)$, such that for any normalized observable $P_{j,j+1}\in \cA_{[j,j+1]}$
 \[
 \left\|\Gamma^{is}_{[1,N]}(P_{j,j+1})-\Gamma_{\Lambda_j^l}^{is}(P_{j,j+1})\right\|\le\eta(l).
 \]
 \item There are $\Pone$-constants $\cG$, $K$, $1/\alpha$ such that for any observables $O_A$, $O_B$ with supports separated by $l$ sites we have
 \[
 \left|\tr[O_AO_B\rhot_{[1,N]}]-\tr[O_A\rhot_{[1,N]}]\tr[O_B\rhot_{[1,N]}]\right|\le\|O_A\|\|O_B\|\left(Ke^{-\alpha l}+\cG\frac{\cG^l}{(l/2)!}\right).
 \]
\end{enumerate}
\end{lemma}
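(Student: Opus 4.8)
The plan is to derive all three items directly from the cited theorems of P\'erez-Garc\'ia--P\'erez-Hern\'andez~\cite{perezgarcia2020} and of Bluhm--Capel--P\'erez-Hern\'andez~\cite{bluhm2022exponential}; at the level of the present lemma the only genuine work is to make explicit how the various constants depend on $\|h\|$, and this is precisely the bookkeeping carried out in Appendix~\ref{sec:constants}. So the ``proof'' here is essentially a translation exercise.

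For item (i)---exponential convergence of a finite-volume thermal expectation of a local observable towards its infinite-volume value---I would first use translation invariance of $H$ to place the window at a fixed location, and then quote the ``locality of temperature'' estimates of~\cite[Theorems~2.4 and~4.16]{perezgarcia2020} together with~\cite[Theorem~6.2]{bluhm2022exponential}, the latter guaranteeing that the clustering rate $\delta$ can be chosen uniformly in the temperature and that the infinite-volume limit $\lim_{N\to\infty}\tr[P\,\rhot_{[-N,N]}]$ exists. Since item (i) asserts only the \emph{existence} of some $C,\delta>0$, with no claim on their dependence on $\|h\|$, nothing beyond quoting these results is required.

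For item (ii)---a Lieb--Robinson-type quasi-locality bound for the imaginary-time evolution operator $\Gamma^{is}$---I would quote the corresponding quasi-locality statement of~\cite{perezgarcia2020}, emphasising that the tail $\eta(l)$ is \emph{superexponentially} decaying rather than merely exponential. This sharper decay is exactly what one gains by specialising the general exponentially-decaying-interaction estimates of~\cite{perezgarcia2020} to a strictly finite-range ($2$-local) interaction: the combinatorial sums controlling $\|\Gamma^{is}_{[1,N]}(P_{j,j+1})-\Gamma^{is}_{\Lambda_j^l}(P_{j,j+1})\|$ then truncate and produce a factorial $(l/2)!$ in the denominator. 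That $\eta(l)$ has the form demanded by Definition~\ref{def:Ponecst} is then read off from the constants in~\cite{perezgarcia2020}, as done in Appendix~\ref{sec:constants}.

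For item (iii)---decay of correlations in the finite-volume Gibbs state---I would combine two inputs: the exponential clustering theorem for one-dimensional Gibbs states valid at any temperature,~\cite[Theorem~6.2]{bluhm2022exponential}, which contributes the term $Ke^{-\alpha l}$, and the cluster-expansion/transfer-operator estimates of~\cite{perezgarcia2020}, which contribute the superexponential term $\cG\,\cG^l/(l/2)!$; the sum of the two bounds gives the stated inequality. Again the substantive point is that $\cG$, $K$ and $1/\alpha$ can be taken to be $\Pone$-constants, which follows from the constant analysis in Appendix~\ref{sec:constants}. The main---and essentially the only---obstacle in the whole lemma is this bookkeeping: the cited papers work with general exponentially-decaying interactions, frequently record only exponential decay, and leave the interaction-strength dependence of their constants implicit, so the crux is to re-extract these constants in the strictly finite-range case with explicit $\|h\|$-dependence, a conceptually routine but somewhat tedious task that is deferred to the appendix.
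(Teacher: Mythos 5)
Your proposal matches the paper's treatment: the lemma is established purely by citation of \cite[Theorems 4.16 and 2.4]{perezgarcia2020} and \cite[Theorem 6.2]{bluhm2022exponential}, with no separate proof given in the text, and the only added content is the extraction of the $\|h\|$-dependence of the constants (and the upgrade from exponential to superexponential decay upon restricting to strictly finite-range interactions), which the paper likewise defers to Appendix~\ref{sec:constants}. Your description of which reference supplies which term, and of why the finite-range specialisation yields the factorial decay, is consistent with the paper.
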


\subsection{Proof of Lemma \ref{lem:fRuntimeViaDerivative}}

The algorithm in the proof of Lemma \ref{lem:fRuntimeViaDerivative} is based on computing the derivative of the free energy function with respect to a parameter in a perturbed Hamiltonian. The perturbed translation-invariant Hamiltonian we consider is defined by
\[
H_{[1,N]}(\eps)=\sum_{i=1}^{N-1} (h_{i,i+1}+\eps P_{i,i+1}).
\]
We denote the corresponding partition function $Z_{N}(\eps) = \tr e^{-H_{[1,N]}(\eps)}$ and free energy per site $f_N(\eps) = -\frac{1}{N} \log Z_N(\eps)$, as well as the limit
\begin{equation}
\label{eq:feps}
f(\eps) = \lim_{N\to \infty} f_N(\eps).    
\end{equation}
We also make use of the perturbed thermal state associated to the finite system on the interval $[a,b]$ as
\[
\rhot_{\eps,[a,b]}=\frac{\exp(-H_{[a,b]}(\eps))}{\tr\left[\exp(-H_{[a,b]}(\eps))\right]}.
\]
For convenience, we let $\rhot_{\eps,N} = \rhot_{\eps,[1,N]}$.

The first lemma shows that the expectation value $\mu$ defined in \eqref{eq:observableLimit} is equal to the derivative of the free energy for the perturbed Hamiltonian $f(\eps)$ defined in \eqref{eq:feps}.
\begin{lemma}\label{lem:convergenceP}
For any 2-local observable $P \in \cA_{[1,2]}$ we have
\[
\dv{\eps}f(0) = \mu := \lim_{N\to\infty}\frac{\tr\left[P_{1,2}\exp(-H_{[-N,N]})\right]}{\tr[\exp(-H_{[-N,N]})]},
\]
where $f(\eps)$ is the free energy per site of the perturbed Hamiltonian, as defined in \eqref{eq:feps}.
\end{lemma}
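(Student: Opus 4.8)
The plan is to establish the identity in two stages: first show that for each fixed $N$ one has the exact finite-volume identity $\frac{d}{d\eps} f_N(0) = \frac{1}{N}\sum_{i=1}^{N-1} \tr[P_{i,i+1}\,\rhot_N]$, and second pass to the limit $N\to\infty$, using translation invariance together with the convergence of local marginals (Lemma~\ref{lem:PGPHBluhm}(i)) to identify the limit with $\mu$. We also need to justify exchanging the limit $N\to\infty$ with the derivative $\frac{d}{d\eps}$; this is where the quantitative analyticity control coming from Araki's results (as refined in the surrounding appendix) enters.

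For the first stage, I would differentiate $f_N(\eps) = -\frac{1}{N}\log \tr e^{-H_{[1,N]}(\eps)}$ directly. Since $\frac{d}{d\eps} H_{[1,N]}(\eps) = \sum_{i=1}^{N-1} P_{i,i+1}$, Duhamel's formula (or the standard fact that $\frac{d}{d\eps}\tr e^{A(\eps)} = \tr[A'(\eps) e^{A(\eps)}]$ when $A'$ commutes appropriately under the trace — and the cyclicity of the trace makes this exact here) gives
\[
\dv{\eps} f_N(0) = -\frac{1}{N} \cdot \frac{\tr\left[\left(-\sum_{i=1}^{N-1}P_{i,i+1}\right) e^{-H_{[1,N]}}\right]}{Z_N} = \frac{1}{N}\sum_{i=1}^{N-1} \tr\left[P_{i,i+1}\,\rhot_N\right].
\]
By translation invariance of the interaction, each term $\tr[P_{i,i+1}\rhot_{[1,N]}]$ equals $\tr[P_{1,2}\rhot_{[2-i,N+1-i]}]$, i.e.\ the expectation of $P_{1,2}$ in the thermal state of an interval of length $N-1$ with the pair $\{1,2\}$ sitting at position $i$ from the left boundary. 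Using Lemma~\ref{lem:PGPHBluhm}(i) (boundary effects on local observables decay superexponentially, or at least exponentially, in the distance to the boundary), all but an $O(1)$ number of these terms are within $Ce^{-\delta\min(i,N-i)}$ of the two-sided infinite value $\mu = \lim_{N}\tr[P_{1,2}\rhot_{[-N,N]}]$, so the Cesàro average $\frac1N\sum_{i=1}^{N-1}\tr[P_{i,i+1}\rhot_N]$ converges to $\mu$ as $N\to\infty$.

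For the second stage — interchanging $\lim_N$ and $\frac{d}{d\eps}$ — I would invoke uniform-in-$N$ control on the second derivative $f_N''(\eps)$ on a neighbourhood of $\eps=0$, which is exactly the quantitative bound on the second derivative of the free energy established in this appendix from Araki's analyticity result \cite[Lemma 9.3]{araki1969}. Such a uniform bound implies that the family $\{f_N'(\cdot)\}$ is uniformly Lipschitz near $0$; combined with pointwise convergence $f_N(\eps)\to f(\eps)$ this gives $f_N'(0)\to f'(0)$ and that $f$ is differentiable at $0$ with $f'(0) = \lim_N f_N'(0)$. Chaining the two stages then yields $\frac{d}{d\eps}f(0) = \mu$.

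The main obstacle is the interchange of limit and derivative: pointwise convergence of $f_N$ alone does not give convergence of derivatives, so one genuinely needs the uniform second-derivative bound, and hence the full force of Araki's noncommutative analyticity machinery (the absence of phase transitions in 1D finite-range systems) in quantitative form. Everything else — the exact finite-$N$ identity and the boundary-decay estimate for local marginals — is comparatively routine given the results already cited.
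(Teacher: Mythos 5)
Your first stage coincides with the paper's: differentiate $f_N(\eps)=-\tfrac1N\log Z_N(\eps)$ via Duhamel to get the Ces\`aro average $\tfrac1N\sum_i\tr[P_{i,i+1}\rhot_{\eps,N}]$, then use Lemma~\ref{lem:PGPHBluhm}(i) and translation invariance to show this average converges to $\mu(\eps)$. Where you diverge is the interchange of $\lim_N$ with $\dv{\eps}$. The paper observes that the boundary-decay estimate is \emph{uniform in $\eps$} on compact sets (choosing the constants for the worst-case interaction strength $\|h\|+\eps'$), so $f_N'\to\mu(\cdot)$ \emph{uniformly}; pointwise convergence $f_N\to f$ plus uniform convergence of the derivatives then gives $f'=\mu$ directly by \cite[Theorem 7.17]{rudin1965}. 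You instead invoke a uniform-in-$N$ bound on $f_N''$ near $0$ and argue via uniform Lipschitzness of $\{f_N'\}$. This works (Arzel\`a--Ascoli plus the same Rudin theorem on subsequences, or a direct difference-quotient estimate), but it is heavier: the uniform second-derivative bound is the content of the proof of Lemma~\ref{lem:secondDerivF}, which requires quasi-locality of imaginary-time evolution and decay of correlations (Lemma~\ref{lem:PGPHBluhm}(ii)--(iii)), none of which the paper needs for this lemma. Two cautions if you pursue your route: (a) you must use the intermediate uniform bound on $\ddv{2}{\eps}f_N$ from the \emph{proof} of Lemma~\ref{lem:secondDerivF}, not its statement, since the statement concerns $f''$ and its derivation passes through Lemma~\ref{lem:calcSecondDeriv}, which already presupposes convergence of $f_N'$ --- using the stated lemma as a black box would be circular; (b) your closing claim that the argument also yields differentiability of $f$ at $0$ is fine via the subsequence argument, but as written you gesture at Araki's \cite[Lemma 9.3]{araki1969} for this, which you do not actually need once the Arzel\`a--Ascoli step is spelled out. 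The paper's route is the more economical one; yours is valid but front-loads machinery that is only needed later for the finite-difference error bound in Lemma~\ref{lem:fRuntimeViaDerivative}.
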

\begin{proof}
We assume without loss of generality that $\|P\|=1$.
It is easy to verify that for any $N$,
\[
\dv{\eps}f_N(\eps)=\tr\left[\frac1{N-1}\sum_{i=1}^{N-1}P_{i,i+1}\rhot_{\eps,N}\right].
\]
Define $\mu(\eps) = \lim_{N\to\infty} \tr[P_{1,2} \rhot_{\eps,[-N,N]}]$. Lemma~\ref{lem:PGPHBluhm} (i) tells us that for any $i \in [1,N]$,
\[
\left|\tr[P_{i,i+1} \rhot_{\eps,N}] - \mu(\eps)\right| \leq C e^{-\delta \min(i,N-i)}
\]
for some constants $C, \delta > 0$. It thus follows that
\[
\begin{aligned}
\left|\frac{1}{N-1} \sum_{i=1}^{N-1} \tr[P_{i,i+1} \rhot_{\eps,N}] - \mu(\eps)\right|& \leq \frac{C}{N-1} \sum_{i=1}^{N-1} e^{-\delta \min(i,N-i)}\\
&\leq \frac{2C}{N-1} \sum_{i=1}^{N/2} e^{-\delta i} \leq \frac{2C}{N-1}(M + (N/2-M) e^{-\delta M} )
\end{aligned}
\]
where in the last step $M$ is any integer in $[1,N/2]$. By taking $M=\sqrt{N/2}$ we see that $\dv{\eps}f_N(\eps) \to \mu(\eps)$ as $N\to \infty$.
To prove that $\dv{\eps} f_N \to \dv{\eps} f$, we need the convergence to be uniform on all compact sets.
While the constants $C$ and $\delta$ do depend on the interaction strength $\|h\|+\eps$, an upper bound on these constants grows monotonically with $\|h\|+\eps$.
Therefore, we just choose the constants for interaction strength $\|h\|+\eps'$ to prove uniform convergence for $\eps\in[0,\eps']$.
The convergence of the derivatives then follows from a standard result in analysis \cite[Theorem 7.17]{rudin1965}.
\end{proof}

Before we proceed we need a basic result from calculus. We were not able to find a proof in the literature for this specific setting so we include a proof for completeness.
Note that the statement does not require convergence of the derivatives $g_N'$.
\begin{lemma}\label{lem:calcSecondDeriv}
Let $g_N:(a,b)\to\mathbb R$ be a sequence of continuously differentiable functions converging to a continuously differentiable function $g$.
If the derivatives of the elements are uniformly bounded $g_N'(x)\le C$, then the same bound holds for the limit $g'(x)\le C$.
\end{lemma}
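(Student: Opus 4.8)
The plan is to avoid any assumption about convergence of the derivatives $g_N'$ and instead transfer the one-sided bound from each $g_N$ to $g$ via finite-difference inequalities, which survive passage to the limit under mere pointwise convergence of $g_N$ to $g$.

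First I would fix an arbitrary $x\in(a,b)$ and any $y\in(a,b)$ with $y>x$. Applying the mean value theorem to $g_N$ on $[x,y]$ yields some $\xi_N\in(x,y)$ with $g_N(y)-g_N(x)=g_N'(\xi_N)\,(y-x)\le C(y-x)$, where we used the hypothesis $g_N'\le C$ together with $y-x>0$. (Equivalently one may write $g_N(y)-g_N(x)=\int_x^y g_N'(t)\,dt\le C(y-x)$.) This is now an inequality between \emph{values} of $g_N$, so letting $N\to\infty$ and invoking $g_N\to g$ pointwise gives $g(y)-g(x)\le C(y-x)$ for every such $y$.

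Next I would divide by $y-x>0$ and let $y\downarrow x$. Since $g$ is differentiable at $x$, the difference quotient $\tfrac{g(y)-g(x)}{y-x}$ tends to $g'(x)$, and the non-strict inequality is preserved in the limit, so $g'(x)\le C$. As $x\in(a,b)$ was arbitrary, this proves the lemma.

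There is essentially no obstacle; the only point that needs care is that the hypothesis is the one-sided bound $g_N'\le C$, so the argument must be carried out with $y$ on a fixed side of $x$ (here $y>x$): running it with $y<x$ would, after dividing by the now-negative increment $y-x$, merely reproduce the same conclusion and so adds nothing. Note also that continuity of the derivatives $g_N'$ is not needed beyond what the mean value theorem requires, and only plain differentiability of $g$ at each point is used.
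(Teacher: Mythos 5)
Your proof is correct, and it is organized differently from the paper's. The paper argues by contradiction: it assumes $g'(x_0)>C$ at some point, uses the continuity of $g'$ to find an interval $[x_0,x_0+\eps]$ on which $g'>C+\delta$, and then compares the finite difference of $g$ over that interval with that of $g_N$ for large $N$ to produce, via the mean value theorem, a point where $g_N'\ge C+\delta/3$, contradicting the hypothesis. You instead go directly: the mean value theorem (or integration of $g_N'$) gives the one-sided Lipschitz-type inequality $g_N(y)-g_N(x)\le C(y-x)$ for $y>x$, this inequality involves only function values and so passes to the limit under pointwise convergence, and then differentiability of $g$ at $x$ lets you recover $g'(x)\le C$ from the difference quotients. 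Both proofs rest on the same underlying fact---that finite-difference inequalities survive the limit $N\to\infty$---but yours is the cleaner deduction: it needs no localization argument, it uses only pointwise (not locally uniform) convergence of $g_N$ to $g$, and it does not use the continuity of $g'$ at all, only its existence pointwise. Your closing remark about the one-sidedness of the hypothesis is also accurate: running the argument with $y<x$ reverses the inequality twice and returns the same conclusion.
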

\begin{proof}
Let us assume for contradiction that there exists $x_0\in(a,b)$, such that $g'(x_0)>C$.
Due to the continuity of $g'$ there exist $\delta$, $\eps$ such that $g'(x)>C+\delta$ for all $x\in[x_0, x_0+\eps]$.
We now choose $N$ sufficiently large such that $|g_N(x)-g(x)|\le\eps\delta/3$ for $x\in[x_0,x_0+\eps]$.
Then
\[
\frac{g_N(x_0+\eps)-g_N(x_0)}{\eps}\ge\frac{\eps(C+\delta)-2\eps\delta/3}{2\eps}\ge C+\frac13\delta
\]
By the mean value theorem this implies that there exists $x_1\in[x_0,x_0+\eps]$ such that $g_N'(x_1)\ge C+\frac13\delta$, which contradicts our assumption and closes the proof.
\end{proof}
We can now prove a bound on the second derivative of the free energy using the quasi-locality of complex time-evolved operators and the decay of correlations in the thermal state.
\begin{lemma}\label{lem:secondDerivF}
For each $\eps_0$ there exists a $\Pone$-constant $\cG$ for Hamiltonian strength $\|h\|+\eps_0$ such that
\[
\ddv{2}{\eps}f(\eps)\le e^\cG
\]
for all $\eps\in[0,\eps_0]$.
\end{lemma}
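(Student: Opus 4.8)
The plan is to bound the \emph{finite-volume} second derivatives $\ddv{2}{\eps}f_N(\eps)$ uniformly in $N$ and in $\eps\in[0,\eps_0]$ by a $\Pone$-constant for interaction strength $\|h\|+\eps_0$, and then transfer the bound to the limit $f$ via Lemma~\ref{lem:calcSecondDeriv}. Throughout one works with the perturbed Hamiltonian $H_{[1,N]}(\eps)=\sum_i(h_{i,i+1}+\eps P_{i,i+1})$, whose interaction strength is at most $\|h\|+\eps_0$; since all constants in Lemma~\ref{lem:PGPHBluhm} (the quantities $\eta$, $\cG$, $K$, $1/\alpha$ there) are monotone in the interaction strength, they can be fixed once and for all at strength $\|h\|+\eps_0$. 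I also assume $P=P^\dagger$ and $\|P\|=1$. First I would compute, via Duhamel's formula and with $A=\sum_{i=1}^{N-1}P_{i,i+1}$,
\[
\ddv{2}{\eps}f_N(\eps)=-\frac1N\Big(\int_0^1\tr\big[\rhot_{\eps,N}^{u}\,A\,\rhot_{\eps,N}^{1-u}\,A\big]\di u-\big(\tr[A\rhot_{\eps,N}]\big)^2\Big)=:-\frac1N\sum_{i,j=1}^{N-1}C_{ij},
\]
recognizing the bracket as the nonnegative Kubo--Mori (Bogoliubov) variance of $A$; in particular $f_N$ is concave in $\eps$ and it suffices to control $\big|\ddv{2}{\eps}f_N\big|$. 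Using cyclicity of the trace together with $\rhot_{\eps,N}^{u}\rhot_{\eps,N}^{1-u}=\rhot_{\eps,N}$, each covariance term rewrites as
\[
C_{ij}=\int_0^1\Big(\tr\big[\rhot_{\eps,N}\,P_{j,j+1}\,\Gamma^{iu}_{[1,N]}(P_{i,i+1})\big]-\tr[\rhot_{\eps,N}P_{i,i+1}]\,\tr[\rhot_{\eps,N}P_{j,j+1}]\Big)\di u,
\]
where $\Gamma^{iu}_{[1,N]}(X):=e^{-uH_{[1,N]}(\eps)}Xe^{uH_{[1,N]}(\eps)}$ is the imaginary-time evolution generated by the perturbed Hamiltonian. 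The target is to show $\sum_{i,j}|C_{ij}|\le\cG N$ for a $\Pone$-constant $\cG$, uniformly in $N$ and $\eps\in[0,\eps_0]$.

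The heart of the argument is to prove that $|C_{ij}|$ decays in $l:=|i-j|$. I would first record two facts. (a) The identity $e^{uH}e^{-H}e^{-uH}=e^{-H}$ gives the exact relation $\tr[\rhot_{\eps,N}\Gamma^{iu}_{[1,N]}(P_{i,i+1})]=\tr[\rhot_{\eps,N}P_{i,i+1}]$ for every $u$. (b) Although imaginary-time evolution of a local operator is \emph{a priori unbounded}, the trivial estimate $\|\Gamma^{iu}_{\Lambda_i^1}(P_{i,i+1})\|\le e^{2\|H_{\Lambda_i^1}(\eps)\|}\le e^{6(\|h\|+\eps_0)}$ (valid for $u\in[0,1]$, since $\Lambda_i^1$ contains at most three bonds) combined with a \emph{single} application of Lemma~\ref{lem:PGPHBluhm}(ii) at $l=1$ produces a $\Pone$-constant $\cG_0$ with $\|\Gamma^{iu}_{[1,N]}(P_{i,i+1})\|,\ \|\Gamma^{iu}_{\Lambda_i^m}(P_{i,i+1})\|\le\cG_0$ for all $u\in[0,1]$, $m\ge1$ and $N$. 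Now, for $l$ above an absolute constant (say $l\ge 18$), I would pick the localization radius $m=\lceil l/3\rceil$: by Lemma~\ref{lem:PGPHBluhm}(ii) one may replace $\Gamma^{iu}_{[1,N]}(P_{i,i+1})$ by $Q_i^{(u)}:=\Gamma^{iu}_{\Lambda_i^m}(P_{i,i+1})$, supported on $[i-m,i+1+m]$ with $\|Q_i^{(u)}\|\le\cG_0$, at a total cost of $2\eta(m)$ in $C_{ij}$ (using (a) for the disconnected term); the supports of $Q_i^{(u)}$ and $P_{j,j+1}$ are then separated by at least $l-m-2\ge l/2$ sites, so Lemma~\ref{lem:PGPHBluhm}(iii) bounds the remaining connected correlator by $\cG_0\big(Ke^{-\alpha l/2}+\cG\,\cG^{l/2}/(l/4)!\big)$, uniformly in $u$. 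This yields $|C_{ij}|\le 2\eta(\lceil l/3\rceil)+\cG_0Ke^{-\alpha l/2}+\cG_0\cG\,\cG^{l/2}/(l/4)!$ for $l\ge 18$, together with the crude bound $|C_{ij}|\le\cG_0+1$ for the remaining (boundedly many, for each fixed $i$) pairs.

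Finally I would sum and take the limit. For fixed $i$, $\sum_j|C_{ij}|$ has at most $36$ short-range terms, each $\le\cG_0+1$; a geometric tail $\sum_l e^{-\alpha l/2}\le(1-e^{-\alpha/2})^{-1}$, which is a $\Pone$-constant because $1/\alpha$ is; and two superexponentially decaying tails, each summing to a $\Pone$-constant. Hence $\sum_j|C_{ij}|\le\cG_2$ uniformly in $i$, so $\big|\ddv{2}{\eps}f_N(\eps)\big|=\tfrac1N\big|\sum_{i,j}C_{ij}\big|\le\cG_2$ for all $N$ and $\eps\in[0,\eps_0]$, with $\cG_2$ a $\Pone$-constant for strength $\|h\|+\eps_0$. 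By Araki's result \cite[Lemma 9.3]{araki1969}, $\eps\mapsto f(\eps)$ is twice continuously differentiable, and $\dv{\eps}f_N\to\dv{\eps}f$ by Lemma~\ref{lem:convergenceP}; applying Lemma~\ref{lem:calcSecondDeriv} to $g_N=\pm\dv{\eps}f_N$ with the uniform bound $\cG_2$ gives $\big|\ddv{2}{\eps}f(\eps)\big|\le\cG_2$ on $[0,\eps_0]$ (after passing to a slightly larger open interval), and taking $\cG:=\log\cG_2$ — still a $\Pone$-constant for strength $\|h\|+\eps_0$ — yields $\ddv{2}{\eps}f(\eps)\le e^{\cG}$ as claimed.

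The main obstacle is fact (b) together with the length-scale balancing in the third paragraph: imaginary-time-evolved local operators are genuinely unbounded in general, so one must bootstrap a uniform $\Pone$-bound out of Lemma~\ref{lem:PGPHBluhm}(ii), and then tune the localization radius $m\sim l/3$ so that the localization error $\eta(m)$ and the residual correlator $e^{-\alpha l/2}$ are \emph{simultaneously} summable in $l=|i-j|$ while the two supports remain separated. The rest (the Duhamel identity, the $C^2$ input from Araki, and the summation of the tails) is routine once these estimates are in place.
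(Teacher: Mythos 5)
Your proposal is correct and follows essentially the same route as the paper's proof: Duhamel's formula for the second derivative, localization of the imaginary-time-evolved observable via the quasi-locality bound of Lemma~\ref{lem:PGPHBluhm}(ii) with a radius tuned to a fraction of $|i-j|$, decay of correlations from Lemma~\ref{lem:PGPHBluhm}(iii) for the connected part, summation of the exponential and superexponential tails into a $\Pone$-constant, and transfer to the limit via Araki's $C^2$ result together with Lemma~\ref{lem:calcSecondDeriv}. The only differences are cosmetic (your localization radius $\lceil l/3\rceil$ versus the paper's $\max\{(|j-k|-2)/2,0\}$, and your slightly more explicit bootstrap of the uniform norm bound on $\Gamma^{iu}(P_{i,i+1})$, which the paper also invokes from Lemma~\ref{lem:PGPHBluhm}(ii)).
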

\begin{proof}
Let us consider the parameter derivatives of the free energy density of a finite system

\begin{align*}
f_N(\eps)&=-\frac1N\log\tr\exp(-H_{[1,N]}(\eps))\\
\dv{\eps}f_N(\eps)&=\tr\left[\frac1{N-1}\sum_{i=1}^{N-1}P_{i,i+1}\rhot_{\eps,N}\right]\\
\ddv{2}{\eps}f_N(\eps)&=\frac1{N-1}\left[\frac{\tr\left[\sum_{i=1}^{N-1}P_{i,i+1}\int_0^1 e^{-s H_{[1,N]}(\eps)}\sum_{i=1}^{N-1}P_{i,i+1}e^{-(1-s)H_{[1,N]}(\eps)}ds\right]}{\tr[\exp(-H_{[1,N]}(\eps))]}\right.\\
&\quad-\left(\tr\left[\sum_{i=1}^{N-1}P_{i,i+1}\rhot_{\eps,N}\right]\right)^2\left.\rule{0cm}{0.75cm}\right],
\end{align*}
where the last line can be derived using the Duhamel formula (compare also \cite{bravyi2021}).
We rewrite the second derivative in a slightly more compact form as
\begin{align*}
\ddv{2}{\eps}f_N(\eps)&=\frac1{N-1}\sum_{j,k=1}^{N-1}\tr\left[P_{j,j+1}\int_0^1 \Gamma^{is,\eps}_{[1,N]}(P_{k,k+1})ds\rhot_{\eps,N}\right]-\tr[P_{j,j+1}\rhot_{\eps,N}]\tr[P_{k,k+1}\rhot_{\eps,N}],
\end{align*}
where we used the time evolution operator with respect to the perturbed Hamiltonian
\[
\Gamma_{[a,b]}^{s,\eps}(A)=e^{isH_{[a,b]}(\eps)}Ae^{-isH_{[a,b]}(\eps)}.
\]
We fix $j \in [1,N-1]$ and $s \in [0,1]$ and focus on the sum
\[
\sum_{k=1}^{N-1}\tr\left[P_{j,j+1}\Gamma_{[1,N]}^{is,\eps}(P_{k,k+1})\rhot_{\eps,N}\right] - \tr[P_{j,j+1}\rhot_{\eps,N}]\tr[P_{k,k+1}\rhot_{\eps,N}].
\]
The idea is that the imaginary-time evolved operator $\Gamma_{[1,N]}^{is,\eps}(P_{k,k+1})$ is approximately supported in a region close to $\{k,k+1\}$ and that it is thereby approximately uncorrelated from $P_{j,j+1}$ if $|j-k|$ is large.

We apply Lemma~\ref{lem:PGPHBluhm} (ii) and (iii) and choose $l(k)=\max\{(|j-k|-2)/2,0\}$ to estimate
\begin{align*}
\left|\sum_{k=1}^{N-1}\right.&\left(\tr\left[P_{j,j+1}\Gamma_{[1,N]}^{is,\eps}(P_{k,k+1})\rhot_{\eps,N}\right]-\tr[P_{j,j+1}\rhot_{\eps,N}]\tr[P_{k,k+1}\rhot_{\eps,N}]\right) \Bigg|\\
\le&\ \left|\sum_{k=1}^{N-1}\left(\tr\left[P_{j,j+1}\Gamma_{[1,N]}^{is,\eps}(P_{k,k+1})\rhot_{\eps,N}\right]-\tr\left[P_{j,j+1}\Gamma_{\Lambda_k^{l(k)}}^{is,\eps}(P_{k,k+1})\rhot_{\eps,N}\right]\right)\right|\\
&\ +\left|\sum_{k=1}^{N-1}\left(\tr\left[P_{j,j+1}\Gamma_{\Lambda_k^{l(k)}}^{is,\eps}(P_{k,k+1})\rhot_{\eps,N}\right]\right.\right.-\tr\left[P_{j,j+1}\rhot_{\eps,N}\right]\tr\left[\Gamma_{\Lambda_k^{l(k)}}^{is,\eps}(P_{k,k+1})\rhot_{\eps,N}\right]\Big)\Bigg|\\
&\ +\left|\sum_{k=1}^{N-1}\left(\tr\left[P_{j,j+1}\rhot_{\eps,N}\right]\tr\left[\Gamma_{\Lambda_k^{l(k)}}^{is,\eps}(P_{k,k+1})\rhot_{\eps,N}\right]\right.\right.-\tr[P_{j,j+1}\rhot_{\eps,N}]\tr[P_{k,k+1}\rhot_{\eps,N}]\Big)\Bigg|\\
\le&\ \sum_{k=1}^{N-1}\left(\eta(l(k))+\cG\frac{\cG^{l(k)}}{\lceil l(k)\rceil!} +Ke^{-\alpha(l(k))}+\eta(l(k))\right) \\
\le&\ e^{\cG'}.
\end{align*}
We used the quasi-locality to bound the first and third term after the first inequality (note that $\tr[P_{k,k+1}\rhot_{\eps,N}]=\tr[\Gamma_{[1,N]}^{is,\eps}(P_{k,k+1})\rhot_{\eps,N}]$ as the time evolution operator commutes with the thermal state).
The second term is due to the decay of correlations and the bound on $\left\|\Gamma^{is,\eps}_{\Lambda_j^{l(k)}}(P_{k,k+1})\right\|$ that is also implied by Lemma~\ref{lem:PGPHBluhm} (ii).
Note that the sum over a superexponentially decaying function is $\Pone$ by definition of the exponential series, as is the sum of an exponential when the inverse of the decay rate is $\Pone$.
The second term in the sum is responsible for the exponential bound in a $\Pone$-constant again by definition of the exponential series.

By integrating over $s$, summing over $j$, and dividing by $N-1$ this implies that $\ddv2{\eps} f_N(\eps)$ is bounded uniformly in $N$ by the exponential of a $\Pone$ constant (for the Hamiltonian $H(\eps)$ whose interaction term has norm up to $\|h\|+\eps$). 
Using the continuous second-order differentiability of the limit function $f(\eps)$ from \cite[Lemma 9.3]{araki1969}, the claim follows from Lemma~\ref{lem:calcSecondDeriv} applied to $\dv{\eps}f(\eps)$.
\end{proof}

We now have all the tools to estimate thermal expectation values using an approximation of the free energy.
\begin{proof}[Proof of Lemma~\ref{lem:fRuntimeViaDerivative}]

We are looking for an approximation $\tilde \mu$ such that
\[
\left|\mu-\tilde\mu\right|\le\eps.
\]
We have for the second derivative of the free energy from Lemma~\ref{lem:secondDerivF}
\[
\left|\ddv{2}{\eps} f(\eps)\right|\le e^\cG
\]
for all $\eps\in[0,1]$ and for some constant $\cG$ that is $\Pone$ for the interaction strength $\|h\|+1$.
To achieve the bound we pick $\eps'=\min\left\{1,\frac{\eps}{2\exp(\cG)}\right\}$ and compute
\[
\tilde\mu:=\frac{\tilde f(\eps')-\tilde f(0)}{\eps'},
\]
where we use an error of $|f(\eps')-\tilde f(\eps')|,|f(0)-\tilde f(0)|\le\delta:=\eps\eps'/4$.
Using a standard calculus formula
\[
\left|\dv{\eps}f(0)-\frac{f(\eps')-f(0)}{\eps'}\right|\le \frac{\eps'}2\max_{0\le s\le\eps'}\ddv2s f(s),
\]
we conclude using our bound on the second derivative and the chosen errors
\[
|\mu-\tilde\mu|\le\left|\dv{\eps}f(0)-\frac{f(\eps')-f(0)}{\eps'}\right|+\left|\frac{f(\eps')-\tilde f(\eps')}{\eps'}\right|+\left|\frac{f(0)-\tilde f(0)}{\eps'}\right|\le\eps.
\]

Plugging the desired error into the runtime bounds for the free energy of Corollary~\ref{cor:main}, we conclude that the runtime for the thermal expectation value is bounded by
\[
\exp\left(\log(d)\exp(\exp(C_1e^{C_2\|h\|}))\frac{\log(1/\eps)}{\log(\log(1/\eps))}\right)
\]
for some numerical constants $C_1,C_2$.
\end{proof}
\section{Reducing the Ground Energy Problem to the Free Energy}\label{sec:GSReduction}
We reduce the ground-state problem to the free energy problem to establish $\mathsf{QMA_{EXP}}$-hardness of the following problem.
\begin{definition} The \textit{Free Energy for Infinite Translation-Invariant Hamiltonian} FE-ITIH is defined as follows\\
\textbf{Problem parameter:} Three polynomials $\bar{p}, \bar{q}, \bar{r}$ and $d$ the dimension of a particle \\
\textbf{Problem input:} $N$ specified in binary, $\beta \geq 0$ and the matrix $h$, each specified in binary with at most $\log \bar{r}(N) + 1$ bits \\
\textbf{Promise:} \sloppy The free energy density of the infinite system defined by $h$, i.e., $f_{\beta}(h) = \lim_{N \to \infty} -\frac{1}{\beta N} \log \tr \exp(-\beta H_{[1,N]}) $ and lies outside the interval $[1/\bar{p}(N),1/\bar{p}(N)+1/\bar{q}(N)]$\\
\textbf{Output:} Determine if free energy density is at most $1/\bar{p}(N)$ or at least $1/\bar{p}(N) + 1/\bar{q}(N)$
\end{definition}
\begin{lemma}
There exist parameters $\bar{p},\bar{q}, \bar{r}, \bar{d}$ such that the above problem is $\mathsf{QMA_{EXP}}$-hard.
\end{lemma}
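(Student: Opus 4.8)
The plan is to reduce from the ground energy problem for one-dimensional translation-invariant Hamiltonians, which is $\mathsf{QMA_{EXP}}$-hard by \cite{gottesman2009,bausch2017}. I would invoke it in the following shape: there is a fixed local dimension $d_0$ such that, on input $N$ given in binary, one computes in time $\poly(\log N)$ a $2$-local translation-invariant term $g_N$ on $\CC^{d_0}\otimes\CC^{d_0}$, together with thresholds $\alpha_N$ and $\gamma_N=1/\poly(N)$, all specified with $O(\log N)$ bits, for which it is $\mathsf{QMA_{EXP}}$-hard to decide --- under the promise that one of the two holds --- whether the infinite-chain ground energy density $e_0(g_N):=\lim_{M\to\infty}\lambda_{\min}(G^{(N)}_{[1,M]})/M$ satisfies $e_0(g_N)\le\alpha_N$ or $e_0(g_N)\ge\alpha_N+\gamma_N$. (If only the finite-chain formulation is at hand, a preliminary passage to the thermodynamic limit is needed, which I address at the end.) The reduction then amplifies this inverse-polynomial gap into a constant-size gap in the free energy density by scaling the interaction up, and recenters it with an identity term so that the two cases land on opposite sides of the prescribed thresholds $1/\bar p(N)$ and $1/\bar p(N)+1/\bar q(N)$.

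The only analytic ingredient is an elementary sandwich between the free energy density $f_t$ (at inverse temperature $t$) and the ground energy density $e_0$: from $e^{-t\lambda_{\min}(G_{[1,N]})}\le\tr e^{-tG_{[1,N]}}\le d^{N}e^{-t\lambda_{\min}(G_{[1,N]})}$, taking logarithms, dividing by $tN$, and letting $N\to\infty$, one gets
\[
e_0(g)-\frac{\log d}{t}\ \le\ f_t(g)\ \le\ e_0(g)\qquad(t>0).
\]
Combining this with the two easy identities $f_t\big(g+c\,(\1\otimes\1)\big)=c+f_t(g)$ (an identity term shifts the free energy per site by $c$ in the thermodynamic limit) and $f_\beta(Rg)=R\,f_{\beta R}(g)$ (immediate from $f_\beta(h)=\tfrac1\beta f_1(\beta h)$), one obtains, for $h=Rg+c\,(\1\otimes\1)$ at $\beta=1$,
\[
c+R\,e_0(g)-\log d\ \le\ f_1(h)\ \le\ c+R\,e_0(g).
\]

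Concretely I would set $d=d_0$, $\beta=1$, choose $R_N=\Theta(1/\gamma_N)=\poly(N)$ to be an $O(\log N)$-bit positive integer large enough that $R_N\gamma_N-\log d_0\ge 1$, put $c_N=-R_N\alpha_N+\tfrac1{2N}$ (an $O(\log N)$-bit rational), and let
\[
h\ =\ R_N\,g_N\ +\ c_N\,(\1\otimes\1)\ \in\ \CC^{d_0^2\times d_0^2}.
\]
If $e_0(g_N)\le\alpha_N$ then $f_1(h)\le c_N+R_N\alpha_N=\tfrac1{2N}$, whereas if $e_0(g_N)\ge\alpha_N+\gamma_N$ then $f_1(h)\ge c_N+R_N(\alpha_N+\gamma_N)-\log d_0=\tfrac1{2N}+R_N\gamma_N-\log d_0\ge\tfrac1{2N}+1$. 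Taking the problem parameters $\bar p(N)=N$, $\bar q\equiv2$, and $\bar r$ any polynomial dominating the ($O(\log N)$-bit) encoding lengths of $N$, $\beta$ and $h$, the two cases fall respectively strictly below $1/\bar p(N)=\tfrac1N$ and strictly above $1/\bar p(N)+1/\bar q(N)=\tfrac1N+\tfrac12$ (using $N\ge2$), so the FE-ITIH promise holds and the first/second case gives a yes/no instance. Composing $N\mapsto(N,\beta,h)$ with the reduction above is a polynomial-time mapping reduction, which establishes $\mathsf{QMA_{EXP}}$-hardness. (Here $\|h\|=\Theta(\poly(N))\to\infty$, in line with the remark that an algorithm efficient in $\|h\|$ would solve the ground state problem; equivalently one may keep $\|h\|$ bounded and put the scale $\Theta(1/\gamma_N)$ into $\beta$.)

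The displayed estimates are routine, and indeed the only real care is in invoking the black-box hardness result in precisely this form. Two points must be checked: (i) that it concerns the ground energy \emph{density of the infinite chain}, not merely the ground energy of a length-$N$ chain --- if one starts from the finite version, one has to argue that $e_0(g_N)$ equals $\lambda_{\min}(G^{(N)}_{[1,N]})/N$ up to an explicitly controlled, $o(\gamma_N)$ boundary correction (e.g.\ by padding with an inert dimer, or by using that the Gottesman--Irani ground configuration periodically tiles the line so that seams between successive length-$N$ blocks cost negligible energy); and (ii) that the construction has $d_0$ an absolute constant and $g_N,\alpha_N,\gamma_N$ with $O(\log N)$-bit descriptions --- which is exactly what lets $\bar r$ be a genuine polynomial. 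Point (i) is the one I would expect to demand a short argument, the rest being bookkeeping; both, however, are already essentially present in the translation-invariant circuit-to-Hamiltonian literature we cite.
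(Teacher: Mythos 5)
Your proposal is correct and follows essentially the same route as the paper: reduce from the Gottesman--Irani translation-invariant ground-energy problem and use the entropy bound $0\le S(\rho)\le N\log d$ to sandwich the free energy density within $\log d/\beta$ of the ground energy density, then choose the scale so that this residual is smaller than half the promise gap. The only difference is cosmetic --- you scale $h$ by $R_N=\poly(N)$ at $\beta=1$ and recenter with an identity shift onto fixed thresholds, while the paper keeps $h$ fixed, sets $\beta=2q(N)\log d$, and absorbs the thresholds into $\bar p=p$, $\bar q=2q$; these are interchangeable via $f_\beta(h)=\beta^{-1}f_1(\beta h)$, as you note.
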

\begin{proof}
We start by giving a quantitative convergence bound for the free energy to the ground-state energy for $\beta\to\infty$.

Using the well-known variational formula for the free energy, we can write the free energy density for a finite system as
\[
f_{\beta, N}(h) =\frac1N\min_{\rho\ge0,\tr[\rho]=1} \left( \tr[H_{[1,N]}\rho]-\frac1\beta S(\rho) \right),
\]
where $S(\rho) = -\tr[\rho \log_2 \rho]$ denotes the von Neumann entropy.
As $0\le S(\rho)\le N\log(d)$ we can bound the free energy as
\[
\min_{\rho\ge0,\tr[\rho]=1}\tr\left[\frac{H_{[1,N]}}{N}\rho\right]-\frac1\beta\log(d)\le f_{\beta, N} \le \min_{\rho\ge0,\tr[\rho]=1}\tr\left[\frac{H_{[1,N]}}{N}\rho\right].
\]
Let $e_{0, N} = \frac{1}{N} \min_{\rho\ge0,\tr[\rho]=1}\tr[H_{[1,N]}\rho]$ so that we have
\[
\left| f_{\beta, N}-e_{0, N}\right|\le \frac{\log(d)}\beta .
\]
The existence of the limit $\lim_{N \to \infty} f_{\beta, N}$ shows the existence of the limit ground-state energy $e_0 = \lim_{N \to \infty} e_{0, N}$. As a result, we have
\[
\left| f_{\beta}-e_{0}\right|\le \frac{\log(d)}\beta.
\]

We now take a $\mathsf{QMA_{EXP}}$-hard problem $(p,q,r,d)$ ITIH as defined in \cite[Theorem 2.7]{gottesman2009}. We set the parameters for the FE-ITIH problem: $\bar{p} = p, \bar{q} = 2 \cdot q, \bar{r} = r + 2 (\log d) q$ and we use the same dimension parameter $d$.
Given an instance $(N, h)$ for the ITIH problem, we will consider the instance $(N, \beta, h)$ where we let $\beta = 2 q(N) \log d$. Note that $\beta \leq \bar{r}(N)$ and thus can be specified using at most $\log \bar{r}(N) + 1$ bits. We now check that the promise is satisfied: we know that $e_0$ is either $\leq \frac{1}{p(N)}$ or at least $\frac{1}{p(N)} + \frac{1}{q(N)}$. Now for the free energy density, as $f_{\beta} \leq e_0$, we have either $f_{\beta} \leq \frac{1}{p(N)}$ or $f_{\beta} \geq \frac{1}{p(N)} + \frac{1}{q(N)} - \frac{\log d}{\beta} = \frac{1}{p(N)} + \frac{1}{2q(N)}$. It then follows that for the instance we constructed, $e_0 \leq \frac{1}{p(N)}$ if and only if $f_{\beta} \leq \frac{1}{p(N)}$ which proves the desired result.
\end{proof}

We can interpret this hardness result by saying that we cannot expect to have an algorithm that has a dependence of the form $\exp(\mathrm{polylog}(\beta, 1/\eps))$, unless $\mathsf{QMA_{EXP}} = \mathsf{EXP}$.

\section{Recasting a two-sided chain to a one-sided chain} \label{sec:recastH}

In this appendix, we explain how the Gibbs state of an infinite two-sided chain can be obtained from the Gibbs state of an infinite one-sided chain for a suitably modified Hamiltonian.

We consider a local dimension $d$ and 2-particle Hamiltonian $h \in \cB(\CC^d \otimes \CC^d)$ as given.
We construct a system where the local Hilbert space corresponds to two copies of the original one, i.e., $\cH' = \CC^{d}\otimes\CC^{d}$ with local dimension $d'=d^2$.
We denote the sites of the system by indices with subscript $i_u$, $i_d$ referring to either of the two copies for particle $i$ respectively.
The Hamiltonian for the one-sided chain $h' \in \cB(\CC^{d'} \otimes \CC^{d'})$ is then chosen as $h'_{1,2}=h_{2_u,1_u}+h_{1_d,2_d}+h_{1_u,1_d}-h_{2_u,2_d}$, see Figure~\ref{fig:recastedH} for an illustration.

In the sum over Hamiltonian terms on all sites, the third and last term occur each once for each particle and thereby cancel.
The only exception is the first particle, where only the positive term appears.
The construction can be thought of as a chain winding at the first site with both infinite ends going in the same direction.
The thermal state on this construction is then equivalent to the two-sided infinite thermal state when using the order of sites as $\rho_{i_u,\cdots,2_u,1_u,1_d,2_d,3_d,\cdots,i_d}$.
\begin{figure}[H]
\centering
\begin{tikzpicture}
\draw[black,thick] (0,0)--(2,0);
\draw[black,thick] (2,0)--(4,0);
\draw[black,thick] (4,0)--(6,0);
\draw[black,thick] (6,0)--(7,0);
\draw[black,thick] (0,0)--(0,-2);
\draw[black,thick] (0,-2)--(2,-2);
\draw[black,thick] (2,-2)--(4,-2);
\draw[black,thick] (4,-2)--(6,-2);
\draw[black,thick] (6,-2)--(7,-2);
\draw[red,thick] (2,0)--(2,-2);
\draw[red,thick] (4,0)--(4,-2);
\draw[red,thick] (6,0)--(6,-2);
\filldraw[black](0,0) circle (5pt) node[anchor=north west]{$1_u$};
\filldraw[black](2,0) circle (5pt) node[anchor=north west]{$2_u$};
\filldraw[black](4,0) circle (5pt) node[anchor=north west]{ $3_u$};
\filldraw[black](6,0) circle (5pt) node[anchor=north west]{ $4_u$};
\filldraw[black](0,-2) circle (5pt) node[anchor=north west]{ $1_d$};
\filldraw[black](2,-2) circle (5pt) node[anchor=north west]{ $2_d$};
\filldraw[black](4,-2) circle (5pt) node[anchor=north west]{ $3_d$};
\filldraw[black](6,-2) circle (5pt) node[anchor=north west]{ $4_d$};
\filldraw[black](7,-1) circle (0pt) node[anchor=west]{\textbf{...}};
\end{tikzpicture}
\caption{\label{fig:recastedH} The nodes in each column correspond to the bipartite new local Hilbert space. Each black line corresponds to a Hamiltonian term, while the red lines represent a zero due to the cancellation from two neighbouring terms.}
\end{figure}
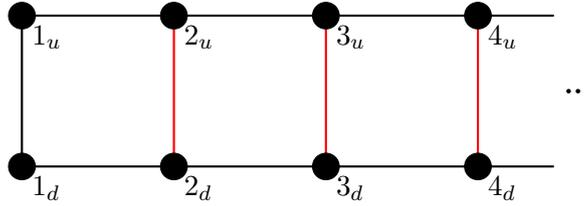

\end{document}